\documentclass[reqno,12pt,a4paper]{amsart}
\allowdisplaybreaks[4]
\frenchspacing
 
\usepackage{amssymb}
\usepackage{amsmath}
\usepackage{color}
\usepackage{hyperref}

\let\paragraph\S
\def\S#1{S_{(#1)}}

\def\P#1{P^{(#1)}}
\def\Q#1{Q^{(#1)}}
\def\Z#1{Z^{(#1)}}

\def\wt#1{\left|{#1}\right|}
\def\Zi#1{Z^{(#1)}}
\def\Ui#1{U^{(#1)}}
\def\Vi#1{V^{(#1)}}
\def\Wi#1{W^{(#1)}}
\def\Si#1{\mathcal{S}^{(#1)}}
\def\Ri#1{\mathcal{R}^{(#1)}}
\def\omegai#1{\omega^{(#1)}}
\def\Rii#1{R^{(#1)}}
\def\rhoi#1{\rho^{(#1)}}
\def\Xii#1{X^{(#1)}}
\def\Yii#1{Y^{(#1)}}
\def\Dii#1{D^{(#1)}}
\def\omegaib#1{\bar{\omega}^{(#1)}}
\def\Xiib#1{\bar{X}^{(#1)}}
\def\Yiib#1{\bar{Y}^{(#1)}}
\def\Diib#1{\bar{D}^{(#1)}}
\def\Riib#1{\bar{R}^{(#1)}}

\def\diff{\,\mathrm{d}}

\let\phi=\varphi
\def\psii#1{\psi^{(#1)}}
\def\Psii#1{\Psi^{(#1)}}
\def\phii#1{\phi^{(#1)}}

\def\PBS#1{\mathfrak S_{#1}}

\def\rsum{\mathop{\sum\nolimits^{\rlap{$\scriptstyle(\bullet)$}}}}
\def\hzav#1{\left[#1\right]}
\def\zav#1{\left(#1\right)}
\def\szav#1{\left\{#1\right\}}
\def\Res{\mathop{\rm Res}\nolimits}

\def\Int{\mathbb{I}}

\newtheorem{theorem}{Theorem}

\newtheorem{proposition}{Proposition}
\newtheorem{lemma}{Lemma}

\theoremstyle{definition}

\theoremstyle{remark}
\newtheorem{remark}{Remark}
\newtheorem*{example}{Example}
\newtheorem{fact}{Fact}

\newcommand*{\eval}[1]{\left.#1\right|}
\newcommand*{\abs}[1]{\left|#1\right|}
\newcommand{\const}{\mathrm{const}}

\newcommand{\cprime}{\/{\mathsurround=0pt$'$}}

\DeclareMathOperator{\sym}{sym}
\DeclareMathOperator{\cosym}{cosym}
\DeclareMathOperator{\Cl}{Cl}

\DeclareMathOperator{\Ev}{\mathbf{E}}

\def\rhoi#1{\rho^{(#1)}}

\usepackage[all]{xy}
\SelectTips{cm}{}
\usepackage{mathrsfs}
\let\mathcal\mathscr
\usepackage[mathscr]{eucal}

\textheight22.5cm \textwidth16cm \hoffset-1.0cm \topmargin-0.7cm

\begin{document}

\title{On symmetries of the Gibbons--Tsarev equation}

\author{H.~Baran} \address{Mathematical Institute in Opava, Silesian
  University in Opava, Na Rybn\'{\i}\v{c}ku 626/1, 746\,01 Opava, Czech
  Republic} \email{Hynek.Baran@math.slu.cz} \author{P.~Blaschke}
\address{Mathematical Institute in Opava, Silesian University in Opava, Na
  Rybn\'{\i}\v{c}ku 626/1, 746\,01 Opava, Czech Republic}
\email{Petr.Blaschke@math.slu.cz} \author{I.S.~Krasil{\cprime}shchik}
\address{V.A.~Trapeznikov Institute of Control Sciences RAS, Profsoyuznaya 65,
  117342 Moscow, Russia \& Independent University of Moscow, 119002, Bolshoy
  Vlasyevskiy Pereulok 11, Moscow} \email{josephkra@gmail.com}
\author{M.~Marvan} \address{Mathematical Institute in Opava, Silesian
  University in Opava, Na Rybn\'{\i}\v{c}ku 626/1, 746\,01 Opava, Czech
  Republic} \email{Michal.Marvan@math.slu.cz}

\date{\today}

\begin{abstract}
  We study the Gibbons--Tsarev equation 
  $z_{yy} + z_x z_{xy} - z_y z_{xx} + 1 = 0$ and, 
  using the known Lax pair, we construct infinite series of
  conservation laws and the algebra of nonlocal symmetries in the 
  covering associated with these conservation laws.
  We prove that the algebra is isomorphic to the Witt algebra. 
  Finally, we show that the constructed symmetries are
  unique in the class of polynomial ones.
\end{abstract}

\keywords{Gibbons--Tsarev equation, differential coverings, nonlocal
  symmetries, nonlocal conservation laws, Witt algebra}

\subjclass[2010]{35B06}

\maketitle

\medskip

\section*{Introduction}
\label{sec:introduction}

The Gibbons--Tsarev equation considered in this paper was introduced
in~\cite{G-T 1} to classify finite reductions of the infinite Benney system.
The Gibbons--Tsarev equation is undoubtedly integrable~\cite{G-T 2, Y-G, B-G}.
It is known to have infinitely many conservation laws and infinitely many
symmetries (\cite[\paragraph~3.3]{G-L-R} and~\cite{H-K-M-V}).  However, unlike
the majority of integrable equations with two independent variables, the
Gibbons--Tsarev equation has only few local
symmetries~\cite[\paragraph~1]{Odess-Sokolov 2}, thus escaping symmetry-based
integrability tests~\cite{M-S-S,M-S}.  In this respect the equation resembles
the integrable Ernst equation of general relativity~\cite{B-M}.
\endgraf
Systematic computation of nonlocal symmetries soon reveals that infinitely many 
nonlocal symmetries can be obtained through commutation.  
This was first observed for the unreduced Benney system in~\cite{Odess-Sokolov 1}, 
where five symmetries were written out explicitly and the structure of the symmetry 
algebra was revealed. 
\endgraf
The aim of this paper is to provide an explicit description of symmetries of
the Gibbons--Tsarev equation and an exact proof that the symmetries constitute
the Witt algebra. As the reader will see, a rigorous proof is far from being
simple. 
Moreover, although the symmetry algebra has two generators
$Z^{(-1)}$ and $Z^{(1)}$ (constructed in
Sections~\ref{sec:local-symm-cons} and~\ref{sec:algebra-nonl-symm}, respectively),
they are not of much help, because obtaining them and their commutators requires
essentially the same effort as obtaining all symmetries and commutators at once.

\endgraf
We present the results as follows.  In Section~\ref{sec:prel-notat}, we
introduce main notions and the notation.  Section~\ref{sec:local-symm-cons}
deals with the local properties of the Gibbons--Tsarev equation. Coverings and
nonlocal conservation laws are dealt with in
Section~\ref{sec:infin-seri-nonl}. We introduce an appropriate infinite system
of nonlocal conservation laws in two different but equivalent ways, which is
convenient from the computational point of view. 
The corresponding infinite-dimensional covering is the common `ground' 
for all the nonlocal symmetries and their commutators to be constructed in the sequel.
We also consider a one-dimensional covering that allows to treat the equation as 
an evolutionary two-component system, which is also convenient for some proofs.  
In Section~\ref{sec:algebra-nonl-symm}, we construct the nonlocal symmetries,
starting with their shadows.  
The shadows were found in a way that can be reused
in other similar situations.  The rest of the section is devoted to the
explicit description of full nonlocal symmetries derived from these shadows
and to a proof that the symmetries constitute the Witt algebra.  
Finally,
Section~\ref{sec:uniq-symm} is devoted to the proof of uniqueness of the
constructed symmetries in the class of polynomial ones.

\section{Preliminaries and notation}
\label{sec:prel-notat}

We expose here briefly the fundamentals of local~\cite{AMS-book} and
nonlocal~\cite{Trends} geometry of PDEs.
Consider a PDE given by a system of relations $\{F=0\}$, where $F=
(F^1,\dots,F^r)$ is a vector function in $x = (x^1,\dots,x^n)$, $u =
(u^1,\dots,u^m)$ and finite number of partial derivatives of~$u$ with respect
to~$x$. To any such a system we put into correspondence a locus $\mathcal{E}
\subset J^\infty(\pi)$ in the space of infinite jets, where
$\pi\colon\mathbb{R}^m \times \mathbb{R}^n \to \mathbb{R}^n$ is the trivial
bundle and $\mathbb{R}^m$, $\mathbb{R}^n$ are Euclidean spaces with the
coordinates $u^1,\dots,u^m$, $x^1,\dots,x^n$, respectively. This locus is
defined by all the differential consequences of the system and called the
infinitely prolonged equation.

When the coordinates $x^i$, $u^j$ are chosen, the adapted coordinates
$u_\sigma^j$ arise in $J^\infty(\pi)$ and correspond to the partial
derivatives $\partial^{\abs{\sigma}}u^j/\partial x^\sigma$, where $\sigma$ is
a symmetric multi-index whose entries are the integers $1,\dots,n$. We always
assume that the system is presented in the passive orthonomic form,
see~\cite{Marvan-passive}, which allows to choose internal coordinates
on~$\mathcal{E}$. When we say that an object is restricted
from~$J^\infty(\pi)$ to~$\mathcal{E}$, we mean that it is rewritten in terms
of the internal coordinates.

The key role in the geometry of PDEs is played by the total derivative operators
\begin{equation*}
  D_{x^i} = \frac{\partial}{\partial x^i} + \sum_{j,\sigma}u_{\sigma i}^j
  \frac{\partial}{\partial u_\sigma^j}.
\end{equation*}
These operators can be restricted to any infinitely prolonged
equation. Consequently, any differential operator in total derivatives (a
$\mathcal{C}$-differential operator) is restrictable to~$\mathcal{E}$ as
well. We preserve the same notation for the restrictions if no contradiction
arises. We say that~$\mathcal{E}$ is differentially connected if
$D_{x^i}(f) = 0$, $i=1,\dots,n$ implies~$f=\const$. The distribution spanned
by the total derivatives is called the Cartan distribution and denoted
by~$\mathcal{C}$.

A vector field
\begin{equation*}
  S = \sum_{\Int} s_\sigma^j\frac{\partial}{\partial u_\sigma^j}
\end{equation*}
on $\mathcal{E}$ is a symmetry of $\mathcal{E}$ if $[S,D_{x^i}] = 0$ for all~$i$. The notation
$\sum_{\Int}$ means that the sum is taken over the set~$\Int$ of all internal
coordinates~$u_\sigma^j$. Symmetries form a Lie algebra denoted
by~$\sym(\mathcal{E})$. To describe symmetries, consider the following
construction. Let $G = (G^1,\dots,G^r)$ be a function on~$\mathcal{E}$. Define
its linearisation as the matrix $\mathcal{C}$-differential operator
\begin{equation*}
  \ell_G =
  \begin{pmatrix}
    \displaystyle\sum 
      \dfrac{\partial G^\alpha}{\partial
      u_\sigma^\beta}D_\sigma 
  \end{pmatrix}_{\beta=1,\dots,m}^{\alpha=1,\dots,r},
\end{equation*}
where $D_\sigma$ denotes the composition of~$D_{x^i}$ corresponding to the
multi-index~$\sigma$. We also use the notation~$\ell_{\mathcal{E}}$ for
$\eval{\ell_F}_{\mathcal{E}}$. Then the following result is valid: any
symmetry is an evolutionary vector field
\begin{equation*}
  \Ev_\phi = \sum_{\Int} D_\sigma(\phi^j)\frac{\partial}{\partial u_\sigma^j},
\end{equation*}
where the {\it generating section} $\phi = (\phi^1,\dots,\phi^m)$ satisfies the
equation~$\ell_{\mathcal{E}}(\phi) = 0$. The commutator of symmetries induces the
Jacobi bracket
\begin{equation*}
  \{\phi,\phi'\} = \Ev_{\phi}(\phi') - \Ev_{\phi'}(\phi).
\end{equation*}
We do not distinguish between symmetries and their generating sections
below. A symmetry~$S$ is called classical if it is projectable
to~$J^1(\pi)$. We say that~$S$ is a point symmetry if it is projectable
to~$J^0(\pi)$.

A conservation law of~$\mathcal{E}$ is a horizontal $(n-1)$-form
\begin{equation*}
  \omega = a_1\,dx^2\wedge\,dx^3\wedge\dots\wedge\,dx^n +
  a_2\,dx^1\wedge\,dx^3\wedge\dots\wedge\,dx^n +\dots +
  a_n\,dx^2\wedge\,dx^3\wedge\dots\wedge\,dx^{n-1}
\end{equation*}
closed with respect to the horizontal de~Rham differential $d_h = \sum_{i=1}^n
dx^i\wedge D_{x^i}$, i.e., such that $\sum_{i=1}^n(-1)^iD_{x^i}(a_i)=0$. A
conservation law is trivial if $\omega = d_h\rho$ for some
$(n-2)$-form~$\rho$. The quotient group of all conservation laws modulo
trivial ones is denoted by~$\Cl(\mathcal{E})$.

To compute conservation laws, their generating sections are used. Let $\omega$
be a conservation law and $\bar{\omega}$ be its extension to the ambient
space~$J^\infty(\pi)\supset\mathcal{E}$. Then $ d_h(\bar{\omega}) = \Delta(F)$
for some $\mathcal{C}$-differential operator~$\Delta$ and the vector-function
$ \psi = (\psi^1,\dots,\psi^r) = \eval{\Delta^*(1)}_{\mathcal{E}}$, where
$\Delta^*$ denotes the adjoint operator, is the generating section
of~$\omega$. It possesses two important properties: (a) $\psi=0$ if and only
if~$\omega$ is trivial, (b) $\ell_{\mathcal{E}}^*(\psi)=0$. Any solution of
the last equation is called a cosymmetry. The space of cosymmetries is denoted
by~$\cosym(\mathcal{E})$.

Let $\tilde{\mathcal{E}}$, $\mathcal{E}$ be equations. We say that a smooth
map $\tau\colon \tilde{\mathcal{E}} \to \mathcal{E}$ is a morphism if for any
point $\tilde{\theta} \in \tilde{\mathcal{E}}$ one has
$\tau_*(\tilde{\mathcal{C}}_{\tilde{\theta}}) \subset
\mathcal{C}_{\tau(\tilde{\theta})}$. A morphism is a (differential) covering
if $\eval{\tau_*}_{\tilde{\mathcal{C}}_{\tilde{\theta}}}$ is a isomorphism for
any~$\tilde{\theta} \in \tilde{\mathcal{E}}$. Two coverings~$\tau_1$, $\tau_2$
over $\mathcal{E}$ are equivalent if there exists an isomorphism $f\colon
\tilde{\mathcal{E}}_1 \to \tilde{\mathcal{E}}_2$ such that $\tau_2\circ f =
\tau_1$. Assume that $\mathcal{E}$ is differentially connected. Then we say
that~$\tau$ is irreducible if~$\tilde{\mathcal{E}}$ is differentially
connected as well.

Take coverings~$\tau_1$ and~$\tau_2$ and consider the Whitney product
$\tau_1\times \tau_2$ of the corresponding bundles. It carries a natural
structure of a covering, which is called the Whitney product of these
coverings.

Let $\tau\colon \tilde{\mathcal{E}} = \mathcal{E}\times\mathbb{R}^s$ be the
trivial bundle. Define the plane~$\tilde{\mathcal{C}}_{\tilde{\theta}}$ at a
point~$\tilde{\theta}\in\tilde{\mathcal{E}}$ as the parallel lift
of~$\mathcal{C}_\theta$ for~$\theta = \tau(\tilde{\theta})$. This is a
covering, and any covering is said to be trivial if it is equivalent
to~$\tau$.

A one-dimensional covering~$\tau$ is called Abelian if it is either trivial or
there exists a nontrivial conservation law~$\omega$ of~$\mathcal{E}$ such that
its lift~$\tau^*(\omega)$ becomes trivial on~$\tilde{\mathcal{E}}$. In
general, a covering is Abelian if it is equivalent to the Whitney product of
the necessary number of one-dimensional Abelian coverings. 
\begin{proposition}[see~\cite{IiC}]\label{sec:prel-notat-1}
  A finite-dimensional Abelian covering over~$\mathcal{E}$ is irreducible if
  and only if the corresponding system of conservation laws is linearly
  independent modulo trivial ones. Consequently\textup{,} equivalence classes
  of irreducible $s$-dimensional\textup{,} $s<\infty$\textup{,} Abelian
  coverings are in one-to-one correspondence with $s$-dimensional subspaces
  in~$\Cl(\mathcal{E})$.
\end{proposition}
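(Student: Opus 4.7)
The plan is to parameterize an $s$-dimensional Abelian covering $\tau\colon \tilde{\mathcal{E}} = \mathcal{E}\times\mathbb{R}^s \to \mathcal{E}$ by a collection of conservation laws $\omega_1,\dots,\omega_s$, introducing nonlocal coordinates $w^1,\dots,w^s$ so that the lifted total derivatives take the form $\tilde{D}_{x^i} = D_{x^i} + \sum_k h_{k,i}\,\partial/\partial w^k$, where the $h_{k,i}$ are the components of $\omega_k$ (so $d_h$-closedness of each $\omega_k$ is precisely the commutativity of the $\tilde{D}_{x^i}$). In this language, irreducibility of $\tau$ amounts, by definition of differential connectedness on $\tilde{\mathcal{E}}$, to the nonexistence of a non-constant $f\in C^\infty(\tilde{\mathcal{E}})$ annihilated by all $\tilde{D}_{x^i}$.

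For the direction \emph{independent modulo trivial $\Rightarrow$ irreducible}, I would argue by contrapositive. Suppose there exists a non-constant first integral $f$. Because the $h_{k,i}$ do not depend on the $w$'s, one has $[\tilde{D}_{x^i},\partial/\partial w^k]=0$, so every $w$-derivative of $f$ is again a first integral. Expanding $f$ as $\sum_\alpha f_\alpha(x,u_\sigma)\,w^\alpha$ and plugging into $\tilde{D}_{x^i}(f)=0$, I collect terms by powers of $w$ and argue layer by layer: the top-degree coefficients are annihilated by each $D_{x^i}$ and hence are constants by differential connectedness of $\mathcal{E}$, while the next layer down writes $d_h$ of some $f_\alpha$ as a combination of the $h_{k,i}$ with these constants, which says precisely that the corresponding combination $\sum_k c_k\omega_k$ is $d_h$-exact, i.e.\ trivial. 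Linear independence of the $\omega_k$'s modulo trivial ones forces those constants to vanish, the top layer drops out, and descending induction collapses $f$ to a constant, a contradiction.

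The reverse direction is the easier contrapositive: given a nontrivial relation $\sum_k c_k\omega_k = d_h\rho$ with constants $c_k$ not all zero, I construct $f = \sum_k c_k w^k - g$, where $g$ is a local function recovered from the components of $\rho$, as an explicit non-constant first integral, showing that $\tilde{\mathcal{E}}$ is not differentially connected. The correspondence claim in the second sentence then follows: two bases of the same $s$-subspace $V\subset\Cl(\mathcal{E})$ yield equivalent coverings via the linear change of $w^k$'s implemented by the change-of-basis matrix plus trivial corrections (coming from the $d_h$-primitives of the differences), and any equivalence of Abelian coverings over $\mathcal{E}$ carries one defining subspace onto the other, so equivalence classes biject with $s$-dimensional subspaces of $\Cl(\mathcal{E})$ by the first equivalence.

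I expect the main obstacle to lie in the inductive argument in the first direction: when $f$ is not a priori polynomial in the $w^k$'s, one must justify that the Taylor expansion in $w$ really has a highest-degree layer from which to start descending, or alternatively carry out the analysis formally at a point and then propagate. This is also where finite-dimensionality of the covering enters most explicitly, ensuring that the power series in finitely many $w^k$'s admits a workable filtration and that the linear-independence hypothesis can be applied at each descending step without running into infinite chains of constraints.
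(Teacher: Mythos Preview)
The paper does not give its own proof of this proposition; it is simply stated with a reference to~\cite{IiC}. There is therefore no in-paper argument to compare your proposal against.

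That said, your outline is the standard route. The reverse implication is exactly right: a relation $\sum_k c_k\omega_k = d_h\rho$ with the $c_k$ not all zero yields the non-constant first integral $\sum_k c_k w^k - \rho$. For the forward implication, your computation in the affine case is also correct and is really the heart of the matter: if every $\partial f/\partial w^k$ is a constant $c_k$, then $f-\sum_k c_k w^k$ is a first integral independent of the $w$'s, hence a constant by differential connectedness of~$\mathcal{E}$, and the equations $\tilde D_{x^i}f=0$ then read $\sum_k c_k\omega_k = d_h(\text{const})=0$ in $\Cl(\mathcal{E})$, forcing all $c_k=0$.

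The gap you flag is genuine. For a first integral $f$ that is not polynomial in the $w^k$ there is no top layer to start the descent from, and the observation that $\partial/\partial w^k$ sends first integrals to first integrals does not by itself give termination: one can keep differentiating forever without reaching the affine case. The abstract situation---commuting derivations $\partial/\partial w^k$ on the ring of first integrals with joint kernel equal to the constants---does not force that ring to be $\mathbb{R}$ without further input. In the cited reference this is handled by working in a setting where the descent closes (formal power series in the $w^k$, or an induction on $s$ that reduces to the one-dimensional case together with the nontriviality of the remaining conservation law on the intermediate covering). Your write-up would be complete once you make one of these reductions explicit; the second-sentence correspondence then follows exactly as you describe.
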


We say that a symmetry of the covering equation~$\tilde{\mathcal{E}}$ is a
nonlocal symmetry of~$\mathcal{E}$. Denote by~$\mathcal{F}$ and
$\tilde{\mathcal{F}}$ the algebras of smooth functions on~$\mathcal{E}$
and~$\tilde{\mathcal{E}}$. Due to~$\tau$, one has the
embedding~$\mathcal{F} \subset \tilde{\mathcal{F}}$. A derivation
$\mathcal{F} \to \tilde{\mathcal{F}}$ that preserves the Cartan distributions
is called a nonlocal shadow. In particular, for any nonlocal
symmetry~$\tilde{S}$ its restriction $\eval{\smash{\tilde{S}}}_{\mathcal{F}}$
is a shadow; $\tilde{S}$ is said to be {\it invisible} if its shadow vanishes. Local
symmetries of~$\mathcal{E}$ can be treated as shadows in every covering. A
shadow is called reconstructible if there exists a nonlocal symmetry such that
its shadow is the given one.

We also say that a conservation law of~$\tilde{\mathcal{E}}$ is a nonlocal
conservation law of~$\mathcal{E}$.

Let us pass to local coordinates. Since the Gibbons--Tsarev equation is
two-dimensional, we shall confine ourselves to this case for
simplicity. Consider the equation $\mathcal{E}$ given by
$\{F(x,y,u,u_x,u_y,\dots) = 0\}$ and let $\tau\colon \tilde{\mathcal{E}} =
\mathcal{E} \times \mathbb{R}^s \to \mathcal{E}$ be a covering (the case
$s=\infty$ is allowed). Let $\{w^\alpha\}$ be coordinates in the fiber (they
are called nonlocal variables). Then the total derivatives
on~$\tilde{\mathcal{E}}$ are of the form
\begin{equation*}
  \tilde{D}_x = D_x + \sum_\alpha X_\alpha\frac{\partial}{\partial
    w^\alpha},\qquad \tilde{D}_y = D_y + \sum_\alpha
  Y_\alpha\frac{\partial}{\partial 
    w^\alpha},  
\end{equation*}
$X_\alpha$, $Y_\alpha$ being smooth functions in all the internal variables
and~$w^\alpha$. Then $\tau$ is a covering if and only if
$[\tilde{D}_x,\tilde{D}_y] = 0$, or, equivalently,
\begin{equation*}
  D_x(Y_\alpha) - D_y(X_\alpha) + \sum_\beta
  \left(X_\beta\frac{\partial Y_\alpha}{\partial w^\beta} -
    Y_\beta\frac{\partial X_\alpha}{\partial w^\beta}\right) = 0,\quad \alpha
  = 1,\dots,s.
\end{equation*}
Equivalently, the system
\begin{equation}
  \label{eq:2}
  w_x^\alpha = X_\alpha,\qquad w_y^\alpha = Y_\alpha,
\end{equation}
is compatible modulo~$\mathcal{E}$. If the
functions $X_\alpha$, $Y_\alpha$ do not depend on the nonlocal variables, then
the covering is Abelian.

Any nonlocal symmetry in~$\tau$ is defined by its generating section
$\Phi = (\phi,\dots,\psi^\alpha,\dots)$, where $\phi = (\phi^1,\dots,\phi^m)$
and $\psi^\alpha$ are functions on $\tilde{\mathcal{E}}$ satisfying
\begin{gather}\nonumber
  \tilde{D}_x(\psi^\alpha) = \tilde{\ell}_{X_\alpha}(\phi) +
  \sum_\beta\frac{\partial X_\alpha}{\partial w^\beta}\psi^\beta,\quad
  \tilde{D}_y(\psi^\alpha) = \tilde{\ell}_{Y_\alpha}(\phi) +
  \sum_\beta\frac{\partial Y_\alpha}{\partial w^\beta}\psi^\beta,
  \\\label{eq:4}
  \tilde{\ell}_{\mathcal{E}}(\phi) = 0,
\end{gather}
where the `tilde' over a $\mathcal{C}$-differential operator denotes its natural
lift from~$\mathcal{E}$ to~$\tilde{\mathcal{E}}$. Nonlocal shadows are given
by functions~$\phi$ that satisfy Equation~\eqref{eq:4}, while invisible
symmetries are sections~$\Phi$ with $\phi=0$ and $\psi^\alpha$ satisfying
\begin{equation*}
  \tilde{D}_x(\psi^\alpha) = 
  \sum_\beta\frac{\partial X_\alpha}{\partial w^\beta}\psi^\beta,\qquad
  \tilde{D}_y(\psi^\alpha) = 
  \sum_\beta\frac{\partial Y_\alpha}{\partial w^\beta}\psi^\beta.
\end{equation*}

Assume that the right-hand sides of~\eqref{eq:2} depend on a
parameter~$\lambda$ (which is called the spectral parameter). A parameter is
non-removable (essential) if the coverings $\tau_\lambda$ are pair-wise
inequivalent (cf.~\cite{Marvan-non-remove}). Having a family of coverings with
an essential parameter, one can expand the functions~$X_\alpha$, $Y_\alpha$ in
formal series in~$\lambda$. If substitution
of~$\psi^\alpha = \sum_{i\in \mathbb{Z}} \psi_i^\alpha \lambda^i$ to this
expansion is well defined, one obtains an infinite-dimensional covering with
the nonlocal variables~$\psi_i^\alpha$. In the case when this covering is
Abelian we get an infinite family of conservation laws (perhaps, trivial or
dependent). A classical example of this procedure is the construction of the
infinite series of conservation laws for the Korteweg--de~Vries
equation,~\cite{Miura-Gardner}.

But even if the covering at hand does not depend on a parameter, there exists
a standard way to insert such a parameter formally (the so-called reversion
procedure, see~\cite{Pavlov-reversion}). Namely, assume for simplicity
that~$\tau$ is one-dimensional and is given by
\begin{equation*}
  w_x = X(x,y,w,u,u_x,u_y,\dots),\qquad w_y = Y(x,y,w,u,u_x,u_y,\dots).
\end{equation*}
Then
\begin{equation*}
  v_x = -X(x,y,\lambda,u,u_x,u_y,\dots)v_\lambda,\qquad v_y =
  -Y(x,y,\lambda,u,u_x,u_y,\dots)v_\lambda 
\end{equation*}
is a covering as well, see~\cite{Krasil-3G} for the geometric
interpretation. We use this construction below to construct infinite series of
nonlocal conservation laws for the Gibbons--Tsarev equation.

\section{Local symmetries and conservation laws}
\label{sec:local-symm-cons}

Consider the Gibbons--Tsarev equation~\cite{G-T 1} in the form
\begin{equation}\label{eq:21}
  z_{yy} + z_x z_{xy} - z_y z_{xx} + 1 = 0
\end{equation}
(obtained from~\cite[eq.~(15)]{G-T 1} by the exchange 
$x \leftrightarrow y$ and $z \leftrightarrow -z$).
For a monomial $X = x^i y^j$, let us use the notation
\begin{equation*}
  z_X = \frac{\partial^{i+j} z}{\partial x^i\,\partial y^j}.
\end{equation*}
In particular, $z_X = z$ when $i = j = 0$. For internal coordinates
on~$\mathcal{E}$ we choose 
$x$, $y$, $z_X$ such that $X = x^k$
or $X = x^k y$, $k\geq0$, while
$$
z_{yyX} = D_{X}(z_y z_{xx} - z_x z_{xy} - 1)
$$ 
are functions of the internal coordinates for every monomial $X$.

If not stated otherwise, sums are taken over all internal coordinates.
The total derivatives on~$\mathcal{E}$ are
\begin{equation*}
  D_x = \frac{\partial}{\partial x}
  + \sum z_{xX} \frac{\partial}{\partial z_X}, \qquad
  D_y = \frac{\partial}{\partial y}
  + \sum z_{yX} \frac{\partial}{\partial z_X},
\end{equation*}
(summation over all internal coordinates $z_X$).  It is
straightforward to check that that~\eqref{eq:21} is a differentially connected
equation.

\subsection{Weights}
\label{sec:weights}

The Gibbons--Tsarev equation becomes homogeneous if we assign the weights 
$\wt{x} = 3$, $\wt{y} = 2$, $\wt{z} = 4$ 
(due to the scaling symmetry, see Subsection~\ref{sec:point-symmetries} below)  
and
\begin{equation*}
  \wt{z_{x^k}}=\wt{z}-k\wt{x}=4-3k,\quad\wt{z_{x^ky}} =
  \wt{z}-k\wt{x}-\wt{y} = 2-3k.
\end{equation*}
To any monomial in~$x$, $y$, $z_{x^k}$, and~$z_{x^ky}$ we assign the weight
that equals the sum of weights of its factors. The total derivatives preserve
the space of polynomials and, as operators, have the weights $
  \wt{D_x}=-3$, $\wt{D_y}=-2$.

\subsection{Local symmetries}\label{sec:point-symmetries}

Let~$\mathcal{S}=\Ev_Z$ be a symmetry of~$\mathcal{E}$. Then the defining
equation for the generating sections of symmetries is
\begin{equation}
  \label{eq:22}
  \ell_{\mathcal{E}}(Z)\equiv D_y^2(Z) + z_x D_xD_y(Z) -
  z_y D_x^2(Z) + z_{xy} D_x(Z) - z_{xx} D_y(Z) = 0.
\end{equation}

Solving~\eqref{eq:22} for functions~$Z$ of small jet order, we found that
Equation~\eqref{eq:21} possesses five local symmetries 
\begin{align}\nonumber
    \Zi{-4} &= 1,&&z\text{-translation},\\\nonumber
    \Zi{-3} &= z_x,&&x\text{-translation},\\\label{eq:23}
    \Zi{-2} &= z_y,&&y\text{-translation},\\\nonumber
    \Zi{-1} &= y z_x - 2 x,&&\text{generalized Galilean boost},\\\nonumber
    \Zi{0} &= 3 x z_x + 2 y z_y - 4 z,&&\text{scaling}.
\end{align}
All these symmetries are point ones. In Section~\ref{sec:uniq-symm}, it will
be shown that this is the complete set of local symmetries. The vector
field~$\Si{i}=\Ev_{\Zi{i}}$, as an operator, has the weight~$\wt{\Si{i}}=i$.

All commutators of the symmetries $\Si{-4},\dots,\Si{0}$ vanish except for
\begin{align*}
  &[\Si{0}, \Si{-4}] = 2 \Si{-4},&& \quad [\Si{0}, \Si{-3}] = \tfrac32
  \Si{-3},\\
  &[\Si{0}, \Si{-2}] = \Si{-2}, &&\quad [\Si{0}, \Si{-1}] =\tfrac12 \Si{-1},
  \\
  &[\Si{-1}, \Si{-3}] = -2 \Si{-4}, &&\quad [\Si{-1}, \Si{-2}] = -\Si{-3}.
\end{align*}

\begin{remark}
  Note that changing the basis by~$\Si{0}\mapsto -\frac{1}{2}\Si{0}$ we arrive
  to the commutator relations $ [\Si{i},\Si{j}]=(j-i)\Si{i+j}$, where
  formally~$\Si{\alpha}=0$ for~$\alpha<-4$. In what follows, we use the latter
  choice of the basic symmetries.
\end{remark}

It will be shown in Section~\ref{sec:algebra-nonl-symm} that this set of five
symmetries can be extended to a hierarchy of nonlocal symmetries infinite in
both positive and negative directions.

\subsection{Cosymmetries}
\label{sec:cosymmetries}
The defining equation for cosymmetries of~\eqref{eq:21} is
\begin{equation*}
  \ell_{\mathcal{E}}^*(\mathcal{R})\equiv D_y^2(\mathcal{R}) +
  z_x D_xD_y(\mathcal{R}) - z_y D_x^2(\mathcal{R}) -
  2z_{xy} D_x(\mathcal{R}) + 2z_{xx} D_x(\mathcal{R})=0.
\end{equation*}
Solutions of lower order include six local cosymmetries of the first order
\begin{align*}
  \Ri{0} &= 1,\\
  \Ri{1} &= 2 z_x,\\
  \Ri{2} &= 3 z_x^2 + 2 z_y + 3 y,\\
  \Ri{3} &= 4 z_x^3 + 6 z_x z_y + 8 y z_x + 2 x,\\
  \Ri{4} &= 5 z_x^4 + 12 z_x^2 z_y + 15 y z_x^2 + 3 z_y^2 + 6 x z_x
  + 10 y z_y + z + \tfrac{15}{2} y^2,\\
  \Ri{5} &= 6 z_x^5 + 20 z_x^3 z_y + 24 y z_x^3 + 12 z_x z_y^2 + 12 x z_x^2 +
  36
  y z_x z_y \\
  &\qquad + 4 (z + 6 y^2) z_x + 8 x z_y + 12 x y
  \intertext{(compare
    with~\cite[p.~156]{Tsarev-PhD}) and a single one of the third order}
  \Ri{-5} &= z_{xxx}.
\end{align*}
We have verified by direct computation that Equation~\eqref{eq:21} has no
local generating section of order~$2$ and~$4$.

\subsection{Conservation laws}\label{sec:conservation-laws}
All the above listed cosymmetries are the generating sections of conservation
laws $ \rhoi{i}=\P{i}\diff x +\Q{i}\diff y$, where
\begin{align*}
  \P0 &= z_x^2 + z_y + y, \\
  \Q0 &= z_x z_y, \\[5pt]
  \P1 &= z_x^3 + 2 z_x z_y - x, \\
  \Q1 &= z_x^2 z_y + z_y^2 - 2 z, \\[5pt]
  \P2 &= z_x^4 + 3 z_x^2 z_y + 3 y z_x^2 + z_y^2 + 3 y z_y - z, \\
  \Q2 &= z_x^3 z_y + 2 z_x z_y^2 + 3 y z_x z_y - 3 x y, \\[5pt]
  \P3 &= z_x^5 + 4 z_x^3 z_y + 4 y z_x^3 + 3 z_x z_y^2 + 2 x z_x^2 + 8 y z_x
  z_y - 2 z z_x + 2 x z_y - 4 x y, \\ 
  \Q3 &= z_x^4 z_y + 3 z_x^2 z_y^2 + 4 y z_x^2 z_y + z_y^3 + 2 x z_x z_y + 4 y
  z_y^2 - 2 z z_y - 8 y z - 3 x^2,
  \\[5pt]
  \P4 &= z_x^6 + 5 z_x^4 z_y + 5 y z_x^4 + 6 z_x^2 z_y^2 + 3 x z_x^3 + 15 y
  z_x^2 z_y + z_y^3 + (z + \tfrac{15}{2} y^2) z_x^2 \\&\quad + 6 x z_x z_y + 5
  y z_y^2 + (z + \tfrac{15}{2} y^2) z_y - 5 y z - 4 x^2,
  \\
  \Q4 &= z_x^5 z_y + 4 z_x^3 z_y^2 + 5 y z_x^3 z_y + 3 z_x z_y^3 + 3 x z_x^2
  z_y + 10 y z_x z_y^2
  \\&\quad + (z + \tfrac{15}{2} y^2) z_x z_y + 3 x z_y^2 - \tfrac{3}{2} x (4 z
  + 5 y^2), \\[5pt] 
  \P5 &= z_x^7 + 6 z_x^5 z_y + 6 y z_x^5 + 10 z_x^3 z_y^2 + 4 x z_x^4 + 24 y
  z_x^3 z_y + 4 z_x z_y^3 + 4 (\tfrac{1}{2} z + 3 y^2) z_x^3 \\&\quad + 12 x
  z_x^2 z_y + 18 y z_x z_y^2 + 12 x y z_x^2 + 4 (z + 6 y^2) z_x z_y + 4 x
  z_y^2 + 12 x y z_y - 4 x z, \\
  \Q5 &= z_x^6 z_y + 5 z_x^4 z_y^2 + 6 y z_x^4 z_y + 6 z_x^2 z_y^3 + 4 x z_x^3
  z_y + 18 y z_x^2 z_y^2 + z_y^4 + 4 (\tfrac{1}{2} z + 3 y^2) z_x^2 z_y
  \\&\quad + 8 x z_x z_y^2 + 6 y z_y^3 + 12 x y z_x z_y + 4 (\tfrac{1}{2} z +
  3 y^2) z_y^2 - 2 z^2 - 24 y^2 z - 6 x^2 y
  \intertext{and}
  \P{-5} &= 
      -\tfrac{1}{2} z_x z_{xx}^2 - z_{xy} z_{xx},\\
  \Q{-5} &= 
      -\tfrac{1}{2} z_y z_{xx}^2 - \tfrac{1}{2} z_{xy}^2.
\end{align*}
Note that $ \wt{\rhoi{i}}=i+5$, $i=-5,0,\dots,5$.

In the next section we construct an infinite series of \emph{nonlocal}
conservation laws for the Gibbons--Tsarev equation~\eqref{eq:21}.

\section{Coverings and the infinite series of nonlocal conservation laws}
\label{sec:infin-seri-nonl}

Using two known coverings~\cite{G-T 1,G-T 2} of the Gibbons--Tsarev equation,
we construct here an infinite series of (nonlocal) conservation laws that later
(Section~\ref{sec:algebra-nonl-symm}) will be used to construct the
corresponding infinite-dimensional Abelian covering and describe the algebra
of nonlocal symmetries in this covering. It will also be shown that the
obtained infinite dimensional coverings are equivalent.

\subsection{Coverings}
\label{sec:coverings}

Consider the nonlinear non-Abelian
covering~$\tau_z\colon\tilde{\mathcal{E}}\to \mathcal{E}$ over
Equation~\eqref{eq:21} given by
\begin{equation}
  \label{eq:26}
  \phi_x=\frac{1}{z_y+z_x\phi-\phi^2},\qquad
  \phi_y=-\frac{z_x-\phi}{z_y+z_x\phi-\phi^2}.
\end{equation}
The covering introduced by Gibbons and Tsarev in~\cite{G-T 2} can be rewritten
in this way.

To simplify the subsequent computations, let us introduce new variables~$u$
and~$v$ such that 
\begin{equation}
  \label{eq:27}
  z_x=u+v,\qquad z_y=-uv.
\end{equation}
Due to the compatibility condition
\begin{equation}
  \label{eq:28}
  (u+v)_y+(uv)_x=0
\end{equation}
and by Equation~\eqref{eq:21} we deduce that the new variables enjoy the
system of evolution equations
\begin{equation}
  \label{eq:29}
  u_y+vu_x=\frac{1}{v-u},\qquad v_y+uv_x=\frac{1}{u-v}
\end{equation}
Denote this equation by~$\mathcal{E}_1$. The equation is homogeneous with
respect to the weights $ \wt{x}=3$, $\wt{y}=2$, $\wt{u}=\wt{v}=1$.  Due
to~\eqref{eq:28}, the form~$(u+v)\diff x-uv\diff y$ is a conservation law of
the equation~$\mathcal{E}_1$ while~\eqref{eq:27} defines the
covering~$\mathcal{E}\to\mathcal{E}_1$ associated with this conservation law.

The covering~$\tau_z$ defined by~\eqref{eq:26} generates the
covering~$\tau_{uv}\colon\tilde{\mathcal{E}}_1\to \mathcal{E}_1$ given by the
relations
\begin{equation}
  \label{eq:30}
  \phi_x=-\frac{1}{(\phi-u)(\phi-v)},\qquad
  \phi_y=\frac{u+v-\phi}{(\phi-u)(\phi-v)} 
\end{equation}
and the diagram of coverings
\begin{equation*}
  \xymatrix{
    \tilde{\mathcal{E}}\ar[d]_{\tau_z}\ar[r]^{\tilde{\tau}}&
    \tilde{\mathcal{E}}_1\ar[d]^{\tau_{uv}}\\ 
    \mathcal{E}\ar[r]^\tau&\mathcal{E}_1
  }
\end{equation*}
is commutative.

\subsection{Nonlocal conservation laws}
\label{sec:nonl-cons-laws}

We construct an infinite hierarchy of nonlocal conservation laws for the
Gibbons-Tsarev equation using two different but related ways.

\subsubsection{The first way}
\label{sec:first-method}

Consider an arbitrary gauge symmetry~$\phi\mapsto\psi(\phi)$ of the
covering~$\tau_{uv}$. For the sake of convenience, relabel the variable~$\phi$
to~$\lambda$. Then, applying the reversion procedure described in
Section~\ref{sec:prel-notat} to the covering~\eqref{eq:30}, one obtains
\begin{equation}
  \label{eq:31}
  \psi_x=\frac{1}{(\lambda-u)(\lambda-v)}\cdot\psi_\lambda,\qquad
  \psi_y=\frac{\lambda-(u+v)}{(\lambda-u)(\lambda-v)}\cdot\psi_\lambda.
\end{equation}
Now, we consider~$\lambda$ as a formal parameter and expand~$\psi$ in the
Laurent series
\begin{equation}
  \label{eq:32}
  \psi=\psii{-1}\lambda+\psii{0}+\frac{\psii{1}}{\lambda}+ \dots +
  \frac{\psii{k}}{\lambda^k} + \dots
\end{equation}
One also has the obvious expansions
\begin{equation*}
  \frac{1}{\lambda-u}=\frac{1}{\lambda}\sum_{i\geq0}\frac{u^i}{\lambda^i},
  \qquad
  \frac{1}{\lambda-v}=\frac{1}{\lambda}\sum_{i\geq0}\frac{v^i}{\lambda^i}
\end{equation*}
which imply
\begin{equation*}
  \frac{1}{(\lambda-u)(\lambda-v)}=\frac{1}{\lambda^2}\left(1 +
    \frac{\sigma_1}{\lambda} + \dots+\frac{\sigma_k}{\lambda^k}+\dots\right),
\end{equation*}
where
\begin{equation}
\label{sigma by uv}
  \sigma_k=\sum_{i+j=k}u^iv^j.
\end{equation}

\begin{remark}
  Note that since the quantities~$\sigma_k$ are symmetric in the variables~$u$
  and~$v$, they can be rewritten as polynomials in $z_x=u+v$
  and~$z_y=-uv$. See formula~\eqref{sigma by z} below.
\end{remark}

Now, from the expansion~\eqref{eq:32} one obtains
\begin{align*}
  \psi_x&=\psii{-1}_x\lambda+\psii{0}_x+\frac{\psii{1}_x}{\lambda}+ \dots +
  \frac{\psii{k}_x}{\lambda^k} + \dots,\\
  \psi_y&=\psii{-1}_y\lambda+\psii{0}_y+\frac{\psii{1}_y}{\lambda}+ \dots +
  \frac{\psii{k}_y}{\lambda^k} + \dots,
  \intertext{and}
  \psi_\lambda&=\psii{-1}-\frac{\psii{1}}{\lambda^2} -
  2\frac{\psii{2}}{\lambda^3} -\dots - k\frac{\psii{k}}{\lambda^{k+1}}+\dots
\end{align*}
Substituting all the above expansions to Equations~\eqref{eq:31}, one obtains
\begin{align*}
  &\psii{-1}_x\lambda+\psii{0}_x+\frac{\psii{1}_x}{\lambda}+ \dots +
  \frac{\psii{k}_x}{\lambda^k} + \dots\\
  &\ =\frac{1}{\lambda^2}\left(1 + \frac{\sigma_1}{\lambda} +
    \dots+\frac{\sigma_k}{\lambda^k}+\dots\right)
  \left(\psii{-1}-\frac{\psii{1}}{\lambda^2} - \frac{2\psii{2}}{\lambda^3}
    -\dots -
    \frac{k\psii{k}}{\lambda^{k+1}}+\dots\right),\\
  &\psii{-1}_y\lambda+\psii{0}_y+\frac{\psii{1}_y}{\lambda}+ \dots +
  \frac{\psii{k}_y}{\lambda^k} + \dots\\
  &\ =\left(\frac{1}{\lambda}-\frac{\sigma_1}{\lambda^2}\right) \left(1 +
    \frac{\sigma_1}{\lambda} + \dots+\frac{\sigma_k}{\lambda^k}+\dots\right)
  \left(\psii{-1}-\frac{\psii{1}}{\lambda^2} - \frac{2\psii{2}}{\lambda^3}
    -\dots - \frac{k\psii{k}}{\lambda^{k+1}}+\dots\right).
\end{align*}
Denote by
\begin{equation*}
  A_0 + \frac{A_1}{\lambda} +\dots+ \frac{A_k}{\lambda^k}+\dots
\end{equation*}
the result of multiplication of the last two factors in the previous
expressions, i.e.,~$A_0=\psii{-1}$, $A_1=\sigma_1\psii{-1}$,
$A_2=\sigma_2\psii{-1}-\psii{1}$, and
\begin{equation*}
  A_k=\sigma_k\psii{-1}-\sigma_{k-2}\psii{1} - 2\sigma_{k-3}\psii{2} - \dots
  -(k-2)\sigma_1\psii{k-2} -(k-1)\psii{k-1},\quad k\geq3.
\end{equation*}
Consequently,
\begin{align*}
  &\psii{-1}_x\lambda+\psii{0}_x+\frac{\psii{1}_x}{\lambda}+ \dots +
  \frac{\psii{k}_x}{\lambda^k} + \dots=\frac{1}{\lambda^2}\left(A_0 +
    \frac{A_1}{\lambda} +\dots+ \frac{A_k}{\lambda^k}+\dots\right),
  \\
  &\psii{-1}_y\lambda+\psii{0}_y+\frac{\psii{1}_y}{\lambda}+ \dots +
  \frac{\psii{k}_y}{\lambda^k} + \dots=
  \left(\frac{1}{\lambda}-\frac{\sigma_1}{\lambda^2}\right)\left(A_0 +
    \frac{A_1}{\lambda} +\dots+ \frac{A_k}{\lambda^k}+\dots\right)
\end{align*}
and thus
\begin{align}\nonumber
  &\psii{-1}_x=0,\ \psii{0}_x=0,\ \psii{1}_x=0,
  &&\psii{-1}_y=0,\ \psii{0}_y=0,\ \psii{1}_y=A_1 \\
  \intertext{and}
  &\psii{k}_x=A_{k-2}, 
  &&\psii{k}_y=A_{k-1}-\sigma_1 A_{k-2}\label{eq:33}
\end{align}
for~$k\geq2$. Without loss of generality we can set~$\psii{-1}=1$ and skip the
variable~$\psii{0}$, since the coefficients~$A_k$ are independent of it. Then,
using the obtained expressions for~$A_0$ and~$A_1$, we obtain $ \psii{1}_x=0$,
$\psii{1}_y=1$, $\psii{2}_x=1$, $\psii{2}_y=0$ and set
\begin{equation}\label{eq:37}
  \psii{1}=y,\quad\psii{2}=x,
\end{equation}
without loss of generality as well. Thus, we have
\begin{equation*}
  A_0=1,\ A_1=\sigma_1,\ A_2=\sigma_2-y,\ A_3=\sigma_3-\sigma_1y-2x
\end{equation*}
and
\begin{equation*}
  A_k=\sigma_k-\sigma_{k-2}y-2\sigma_{k-3}x-3\sigma_{k-4}\psii{3}- \dots -
  (k-2)\sigma_1\psii{k-2} -(k-1)\psii{k-1}
\end{equation*}
for~$k>3$.

Then, using the obvious identities
$ \sigma_1\sigma_k-\sigma_{k+1}=uv\sigma_{k-1}$, we obtain from~\eqref{eq:33}
\begin{align*}
  &\psii{3}_x=\sigma_1,&&\psii{3}_y=-uv-y;\\
  &\psii{4}_x=\sigma_2-y,&&\psii{4}_y=-uv\sigma_1-2x;\\
  &\psii{5}_x=\sigma_3-\sigma_1 y-2x,&&\psii{5}_y=-uv(\sigma_2 -
  y)-3\psii{3};\\
  &\psii{6}_x=\sigma_4-\sigma_2 y-2\sigma_1 x -
  3\psii{3},&&\psii{6}_y=-uv(\sigma_3 - \sigma_1 y - 2x) -4\psii{4};\\
  &\psii{7}_x=\sigma_5 - \sigma_3 y - 2\sigma_2 x -3\sigma_1\psii{3} -
  4\psii{4}, &&\psii{7}_y=-uv(\sigma_4 - \sigma_2 y - 2\sigma_1 x - 3\psii{3})
  - 5\psii{5}
\end{align*}
and
\begin{align}
  \label{eq:69}
  \psii{k}_x&=\sigma_{k-2}-\sigma_{k-4}y - 2\sigma_{k-3}x -
  \sum_{i=3}^{k-3}i\sigma_{k-i-3} \psii{i},\\\label{eq:70}
  \psii{k}_y&=-uv(\sigma_{k-3}-\sigma_{k-5}y - 2\sigma_{k-6}x -
  \sum_{i=3}^{k-4}i\sigma_{k-i-4}\psii{i}) - (k-2)\psii{k-2}.
\end{align}
for~$k\geq7$. Denote by~$\Xii{k}$ and~$\Yii{k}$ the right-hand sides of the
obtained equations, i.e.,
\begin{equation}
  \label{XY}
  \psii{k}_x=\Xii{k},\quad\psii{k}_y=\Yii{k},\qquad k\geq3.
\end{equation}
Obviously, we have $ \wt{\Xii{k}}=k-2$, $\wt{\Yii{k}}=k-1$,
$\wt{\psii{k}}=k+1$.

Let us now return back to the equation~$\mathcal{E}_1$ given by~\eqref{eq:29}
and consider the spaces
\begin{equation*}
  \mathcal{E}_2=\mathcal{E}_1\times \Rii{3},\dots,\mathcal{E}_k=
  \mathcal{E}_{k-1}\times \Rii{k+1},\dots,
\end{equation*}
where~$\Rii{k}$ is~$\mathbb{R}^1$ with the distinguished coordinate~$\psii{k}$,
$k\geq3$. Consider also the natural projections
\begin{equation*}
  \tau_{k,k-1}\colon\mathcal{E}_{k}\to\mathcal{E}_{k-1},\qquad
  \tau_k\colon\mathcal{E}_k\to\mathcal{E}_1.
\end{equation*}
Let~$\mathcal{E}_*$ be the inverse limit of the infinite sequence
\begin{equation*}
  \xymatrix{
    \mathcal{E}_1&\ar[l]\dots&\ar[l]\mathcal{E}_{k-1}
    &\ar[l]_-{\tau_{k,k-1}}\mathcal{E}_k&\ar[l]\dots 
  }
\end{equation*}
and~$\tau_*\colon\mathcal{E}_*\to\mathcal{E}_1$ be the corresponding
projection. Endow the spaces~$\mathcal{E}_k$ with the vector fields
\begin{equation*}
  \Dii{k}_x=D_x+\sum_{i=3}^{k+1}\Xii{i}\frac{\partial}{\partial\psi{i}},\quad
  \Dii{k}_y=D_y+\sum_{i=3}^{k+1}\Yii{i}\frac{\partial}{\partial\psi{i}},
\end{equation*}
where~$D_x$ and~$D_y$ are the total derivatives on~$\mathcal{E}_1$. Similarly,
we define the fields~$\Dii{*}_x$ and~$\Dii{*}_y$ on~$\mathcal{E}_*$.

\begin{proposition}\label{sec:first-method-1}
  For all~$k$\textup{,} including the case~$k=*$\textup{,} one has
  $ [\Dii{k}_x,\Dii{k}_y]=0$.
\end{proposition}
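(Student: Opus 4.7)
The plan is to deduce the statement from the compatibility of the one-dimensional covering \eqref{eq:31}, transferred coefficient by coefficient via the Laurent expansion \eqref{eq:32}.

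First I would observe that \eqref{eq:30} is a genuine covering of $\mathcal{E}_1$: the compatibility $D_x\phi_y=D_y\phi_x$ is a direct consequence of \eqref{eq:29} and merely reformulates, under \eqref{eq:27}, the already known fact that \eqref{eq:26} is a covering of $\mathcal{E}$. Applying the reversion procedure recalled at the end of Section~\ref{sec:prel-notat} to \eqref{eq:30} then produces \eqref{eq:31} as a one-dimensional covering whose own extended total derivatives commute.

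Next I would substitute \eqref{eq:32} together with the normalizations $\psii{-1}=1$, $\psii{0}=0$, $\psii{1}=y$, $\psii{2}=x$ into that flatness identity and expand both sides as formal Laurent series in~$\lambda$. The coefficient of $\lambda^{-k}$ is, by construction, precisely the $\psii{k}$-component of the commutator $[\Dii{*}_x,\Dii{*}_y]$, i.e.\ the identity
\begin{equation*}
D_x\Yii{k}-D_y\Xii{k}+\sum_{i=3}^{k-1}\Big(\Xii{i}\frac{\partial\Yii{k}}{\partial\psii{i}}-\Yii{i}\frac{\partial\Xii{k}}{\partial\psii{i}}\Big)=0.
\end{equation*}
Since $[D_x,D_y]=0$ on $\mathcal{E}_1$ and each $\Xii{n}$, $\Yii{n}$ depends only on $\psii{i}$ with $i<n$, an easy induction on $k$ then yields $[\Dii{k}_x,\Dii{k}_y]=0$ on every $\mathcal{E}_k$ with $k<\infty$. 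For the case $k=*$, observe that for any coordinate $\psii{m}$ with $m\leq k+1$ the value of $[\Dii{*}_x,\Dii{*}_y]\psii{m}$ coincides with $[\Dii{k}_x,\Dii{k}_y]\psii{m}$ and hence vanishes; since $\Dii{*}_x,\Dii{*}_y$ are derivations, the commutator vanishes on any polynomial in the coordinates and thus on every smooth function.

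The main obstacle, and really the only subtle point, is the careful bookkeeping of the gauge normalization $\psii{-1}=1$, $\psii{0}=0$, $\psii{1}=y$, $\psii{2}=x$: one must check that this normalization is consistent with the reversion-covering structure, which works precisely because the first coefficients $A_0$ and $A_1$ produced by the recursion admit the particular integrals prescribed in \eqref{eq:37}. A computationally heavier but logically cleaner alternative bypasses the reversion step altogether and proves $[\Dii{k}_x,\Dii{k}_y]=0$ by direct induction on $k$, the inductive step reducing, via the identity $\sigma_1\sigma_k-\sigma_{k+1}=uv\,\sigma_{k-1}$ noted before \eqref{eq:69}, to the inductive hypothesis one coordinate lower.
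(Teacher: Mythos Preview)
Your proposal is correct and follows essentially the same route as the paper: the paper's entire proof is the one-line observation that the statement ``is an immediate consequence of the fact that~\eqref{eq:30} is a covering over~$\mathcal{E}_1$,'' and you have simply unpacked the mechanism behind that sentence (reversion to~\eqref{eq:31}, Laurent expansion~\eqref{eq:32}, coefficient-wise identification). Your additional remarks on the normalization~\eqref{eq:37} and the alternative direct induction via $\sigma_1\sigma_k-\sigma_{k+1}=uv\,\sigma_{k-1}$ go beyond what the paper records but are consistent with its setup.
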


\begin{proof}
  This is an immediate consequence of the fact that~\eqref{eq:30} is a
  covering over~$\mathcal{E}_1$.
\end{proof}

Hence, all the maps~$\tau_k$ carry covering structures; these coverings are
irreducible:

\begin{proposition}\label{sec:first-method-2}
  Let~$f\in\mathcal{F}(\mathcal{E}_k)$ be a function such that
  $ \Dii{k}_x(f)=\Dii{k}_y(f)=0$.  Then~$f=\const$.
\end{proposition}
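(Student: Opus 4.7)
My plan is to prove Proposition~\ref{sec:first-method-2} by a direct analysis of the equation $\Dii{k}_x f = 0$ (the second condition $\Dii{k}_y f = 0$ will enter only at the very last step), treating it as a polynomial identity in the independent internal coordinates and peeling off variables in three stages. The same argument covers the case $k = *$, because any smooth function on~$\mathcal{E}_*$ depends on only finitely many internal coordinates and hence lives on some~$\mathcal{E}_N$.

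First I would eliminate the $x$-jets of~$u$ and~$v$. If $m\geq 1$ is the highest integer for which $f$ depends on $u_{x^m}$, then $u_{x^{m+1}}$ enters $\Dii{k}_x f$ only through $u_{x^{m+1}}\,\partial f/\partial u_{x^m}$, since the functions~$\Xii{i}$ contain no jet variables; independence of $u_{x^{m+1}}$ yields $\partial f/\partial u_{x^m} = 0$, contradicting the choice of~$m$, and similarly for~$v$. With $f$ now a function of $x,y,u,v,\psii{3},\dots,\psii{k+1}$ only, comparing the coefficients of $u_x$ and $v_x$ in $\Dii{k}_x f = 0$ gives $\partial f/\partial u = \partial f/\partial v = 0$. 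The equation reduces to
\begin{equation*}
\frac{\partial f}{\partial x} + \sum_{i=3}^{k+1} \Xii{i}\,\frac{\partial f}{\partial\psii{i}} = 0,
\end{equation*}
in which every factor $\partial f/\partial\psii{i}$ is independent of $u$ and~$v$.

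The essential step is the elimination of the nonlocal variables. By the formulas in~\eqref{eq:69} (together with the short list preceding them) and the definition $\sigma_j = \sum_{p+q=j} u^p v^q$, the function~$\Xii{i}$ is a polynomial in~$u,v$ of degree exactly $i-2$, whose leading monomial $u^{i-2}$ has coefficient~$1$ and comes solely from~$\sigma_{i-2}$; all its $\psii{j}$-dependent corrections, carrying factors $\sigma_{i-j-3}$, have strictly lower $u$-degree. Viewing the displayed equation as a polynomial identity in~$u$ over functions of $x,y,\psii{3},\dots,\psii{k+1}$, the coefficient of $u^{k-1}$ picks out only $\Xii{k+1}\,\partial f/\partial\psii{k+1}$ and yields $\partial f/\partial\psii{k+1} = 0$. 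Repeating this coefficient comparison for $u^{k-2},\dots,u^1$ in turn produces $\partial f/\partial\psii{k} = \dots = \partial f/\partial\psii{3} = 0$, so that $f = f(x,y)$; finally $\Dii{k}_x f = \partial f/\partial x = 0$ and $\Dii{k}_y f = \partial f/\partial y = 0$ force $f = \const$.

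The step requiring the most care is the leading-degree bookkeeping in the third stage, namely verifying that for every $i\leq k+1$ the $\psii{j}$-dependent corrections in $\Xii{i}$ carry $u$-degrees strictly below $i-2$, and that for $i < k+1$ the whole of $\Xii{i}$ has $u$-degree less than $k-1$, so that the coefficient of $u^{k-1}$ in the sum really is the isolated quantity $\partial f/\partial\psii{k+1}$ and not an entangled combination. Once this is checked, the descending induction on the nonlocal variables is a routine coefficient match, and the proposition follows uniformly for all $k$, including $k=*$.
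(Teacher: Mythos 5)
Your proof is correct and follows essentially the same route as the paper's: you eliminate the jet variables and then $u$, $v$ by isolating the highest-order linear occurrences in $\Dii{k}_x(f)=0$, and then use the fact that the $\Xii{i}$ are polynomials in $u,v$ of pairwise distinct degrees $i-2$ with unit leading monomial $u^{i-2}$ to kill the $\psii{i}$-derivatives by descending induction. The only difference is that you spell out the final degree-counting step that the paper's proof leaves implicit.
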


\begin{proof}
  Let
  \begin{equation*}
    x,\ y,\dots, u_i=\frac{\partial^iu}{\partial
      x^i},\ v_i=\frac{\partial^iv}{\partial x^i},\dots 
  \end{equation*}
  be coordinates on~$\mathcal{E}_1$ and
  \begin{align*}
    D_x&=\frac{\partial}{\partial x} +
    \sum_{i\geq0}\left(u_{i+1}\frac{\partial}{\partial u_i} +
      v_{i+1}\frac{\partial}{\partial v_i}\right),\\
    D_y&=\frac{\partial}{\partial y} + \sum_{i\geq0}\left(
      D_x^i\left(\frac{1}{v-u} + vu_1\right)\frac{\partial}{\partial u_i} +
      D_x^i\left(\frac{1}{u-v} + uv_1\right)\frac{\partial}{\partial
        v_i}\right)
  \end{align*}
  be the total derivatives in these coordinates. Consider a function
  \begin{equation*}
    f=f(x,y,u,v,\dots,u_{i},v_{j},\psii{3},\dots,\psii{k})
  \end{equation*}
  on~$\mathcal{E}_k$ and assume that
  \begin{equation}\label{eq:36}
    D_x(f) + \Xii{3}\frac{\partial f}{\partial\psii{3}} +\dots+
    \Xii{k}\frac{\partial f}{\partial\psii{k}} = D_y(f) +
    \Yii{3}\frac{\partial f}{\partial\psii{3}} + \dots+
    \Yii{k}\frac{\partial f}{\partial\psii{k}} = 0.
  \end{equation}
  Since the coefficients~$\Xii{3}$, $\Yii{3},\dots,\Xii{k}$, $\Yii{k}$ are
  independent of the variables~$u_\alpha$, $v_\beta$ for all $\alpha$ and
  $\beta>0$, from the above formulas for~$D_x$ an~$D_y$ it follows that~$f$
  cannot depend on these variables either as well as on~$u$ and~$v$ and thus
  Equation~\eqref{eq:36} reads now
  \begin{equation*}
    \frac{\partial f}{\partial x} + \Xii{3}\frac{\partial f}{\partial\psii{3}}
    + \dots+
    \Xii{k}\frac{\partial f}{\partial\psii{k}} = \frac{\partial f}{\partial
      y}+ 
    \Yii{3}\frac{\partial f}{\partial\psii{3}} + \dots+
    \Yii{k}\frac{\partial f}{\partial\psii{k}} = 0.
  \end{equation*}
  But~$\Xii{\alpha}$ and~$\Yii{\beta}$ are polynomials in~$u$ and~$v$ of
  degrees~$\alpha-2$ and~$\beta-1$, respectively, and this finishes the proof.
\end{proof}

Obviously, every map~$\tau_{k,k-1}\colon\mathcal{E}_k\to\mathcal{E}_{k-1}$ is
also a covering; moreover, it is an Abelian covering associated to the
conservation law
\begin{equation*}
  \omegai{k}=\Xii{k}\diff x+\Yii{k}\diff y\in\Cl(\mathcal{E}_{k-1})
\end{equation*}
and~$\wt{\omegai{k}}=k+1$.

\begin{proposition}
  The conservation law~$\omegai{k}$ is nontrivial on~$\mathcal{E}_{k-1}$.
\end{proposition}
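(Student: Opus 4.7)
The plan is to argue by contradiction, leveraging the differential connectedness result of Proposition~\ref{sec:first-method-2}. The key preliminary observation is purely tautological: the defining equations~\eqref{XY} say precisely that on $\mathcal{E}_k$ one has $\psii{k}_x = \Xii{k}$ and $\psii{k}_y = \Yii{k}$, which is the same as $\omega^{(k)} = d_h \psii{k}$. Thus $\omega^{(k)}$ is automatically trivial once it is lifted from $\mathcal{E}_{k-1}$ to $\mathcal{E}_k$.

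Suppose, for contradiction, that $\omega^{(k)}$ is already trivial on $\mathcal{E}_{k-1}$, i.e.\ $\omega^{(k)} = d_h \rho$ for some $\rho \in \mathcal{F}(\mathcal{E}_{k-1})$. Pulling this identity back to $\mathcal{E}_k$ and subtracting from $\omega^{(k)} = d_h \psii{k}$ would give $d_h(\psii{k} - \rho) = 0$ on $\mathcal{E}_k$, or equivalently $\Dii{k}_x(\psii{k} - \rho) = \Dii{k}_y(\psii{k} - \rho) = 0$. Proposition~\ref{sec:first-method-2} then forces $\psii{k} - \rho$ to be a constant on $\mathcal{E}_k$.

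This is the desired contradiction: $\psii{k}$ is an independent coordinate on the fiber adjoined in the step $\tau_{k,k-1}\colon \mathcal{E}_k \to \mathcal{E}_{k-1}$, whereas $\rho$ is pulled back from the base and therefore independent of $\psii{k}$, so $\partial(\psii{k} - \rho)/\partial \psii{k} = 1$. Hence $\psii{k} - \rho$ cannot be constant. There is no essential obstacle here beyond the differential connectedness already established in Proposition~\ref{sec:first-method-2}; conceptually, the argument expresses the obvious fact that adjoining a genuinely new nonlocal variable must correspond to a conservation law that was not yet trivial at the previous step.
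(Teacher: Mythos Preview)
Your proof is correct and is essentially the same argument the paper intends: the paper's proof simply cites ``general properties of coverings'' (namely Proposition~\ref{sec:prel-notat-1}) together with Proposition~\ref{sec:first-method-2}, and what you have written is precisely the unpacking of that citation in the one-dimensional case. In other words, you have reproved the relevant instance of Proposition~\ref{sec:prel-notat-1} directly rather than invoking it as a black box.
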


\begin{proof}
  This readily follows from general properties of coverings (see
  Section~\ref{sec:prel-notat}) and Proposition~\ref{sec:first-method-2}.
\end{proof}

\begin{remark}
  By the very construction, the equation~$\mathcal{E}_2$ is equivalent to the
  Gibbons--Tsarev equation~\eqref{eq:21}. Moreover, it can be checked that the
  conservation laws~$\omegai{4},\dots,\omegai{9}$ are equivalent to the
  conservation laws~$\rhoi{0},\dots,\rhoi{5}$, respectively, described in
  Subsection~\ref{sec:conservation-laws}.
\end{remark}

\begin{remark}
  Of course, the initial choice~\eqref{eq:37} for the values of~$\psii{-1}$,
  $\psii{1}$, and~$\psii{2}$ is not unique. Nevertheless, one can easily show
  that other admissible values lead to equivalent results.
\end{remark}

\subsubsection{The second method}
\label{sec:second-method}

Consider now the covering~\eqref{eq:30} and assume that
\begin{equation}\label{eq:39}
  \phi=\frac{\phii{-1}}{\lambda} + \phii{0} + \phii{1}\lambda + \dots +
  \phii{k}\lambda^k + \dots
\end{equation}
Then, rewriting~\eqref{eq:31} in the form
\begin{equation*}
  (\phi-u)(\phi-v)\phi_x=-1,\qquad(\phi-u)(\phi-v)\phi_y=u+v-\phi
\end{equation*}
and substituting expansion~\eqref{eq:39}, one obtains the following defining
system for the coefficients~$\phii{i}$:
\begin{align*}
  &B_{-2}\phii{-1}_x=0,
  &&B_{-2}\phii{-1}_y=0,\\
  &B_{-2}\phii{0}_x+B_{-1}\phii{-1}_x=0,
  &&B_{-2}\phii{0}_y+B_{-1}\phii{-1}_y=0,\\
  &B_{-2}\phii{1}_x+B_{-1}\phii{0}_x+B_{0}\phii{-1}_x=0,
  &&B_{-2}\phii{1}_y+B_{-1}\phii{0}_y+B_{0}\phii{-1}_y=-\phii{-1},\\
  &B_{-2}\phii{2}_x+B_{-1}\phii{1}_x+B_{0}\phii{0}_x+B_{1}\phii{-1}_x
  &&B_{-2}\phii{2}_y+B_{-1}\phii{1}_y+B_{0}\phii{0}_y+B_{1}\phii{-1}_y\\
  &\phantom{B_{-2}\phii{0}_x}=-1,
  &&\phantom{B_{-2}\phii{0}_x}=u+v-\phii{0},\\
  &B_{-2}\phii{3}_x+B_{-1}\phii{2}_x+B_{0}\phii{1}_x+B_{1}\phii{0}_x
  &&B_{-2}\phii{3}_y+B_{-1}\phii{2}_y+B_{0}\phii{1}_y+B_{1}\phii{0}_y\\ 
  &\phantom{B_{-2}\phii{0}_x}+B_{2}\phii{-1}_x=0,
  &&\phantom{B_{-2}\phii{0}_x}+B_{2}\phii{-1}_y=-\phii{1},\\
  &\phantom{B_{-2}\phii{0}_x}\dots&&\phantom{B_{-2}\phii{0}_x}\dots\\
  &B_{-2}\phii{k+2}_x+B_{-1}\phii{k+1}_x+\dots+B_{k+1}\phii{-1}_x
  &&B_{-2}\phii{k+2}_y+B_{-1}\phii{k+1}_y+\dots+B_{k+1}\phii{-1}_y\\
  &\phantom{B_{-2}\phii{0}_x}=0,&&\phantom{B_{-2}\phii{0}_x}=-\phii{k},\\
  &\phantom{B_{-2}\phii{0}_x}\dots&&\phantom{B_{-2}\phii{0}_x}\dots
\end{align*}
where
\begin{equation*}
  (\phi-u)(\phi-v)=\frac{B_{-2}}{\lambda^2} + \frac{B_{-1}}{\lambda} + B_{0} +
  B_1\lambda + \dots + B_k\lambda^k +\dots
\end{equation*}
is the expansion of the product~$(\phi-u)(\phi-v)$, i.e.,
\begin{align*}
  B_{-2}&=\left(\phii{-1}\right)^2,\\
  B_{-1}&=\phii{-1}\left(2\phii{0}-u-v\right),\\
  B_{0}&=2\phii{-1}\phii{1}+\left(\phii{0}-u\right)\left(\phii{0}-v\right),\\
  B_{1}&=2\phii{-1}\phii{2}+\left(2\phii{0}-u-v\right)\phii{1},\\
  B_{2}&=2\phii{-1}\phii{3}+\left(2\phii{0}-u-v\right)\phii{2} +
  \left(\phii{1}\right)^2,\\
  B_{3}&=2\phii{-1}\phii{4}+\left(2\phii{0}-u-v\right)\phii{3} +
  2\phii{1}\phii{2},\\ 
  &\dots\\
  B_{2k}&=2\phii{-1}\phii{2k+1}+\left(2\phii{0}-u-v\right)\phii{2k} +
  2\phii{1}\phii{2k-1}+\dots+2\phii{k-1}\phii{k+1}+\left(\phii{k}\right)^2,\\
  B_{2k+1}&=2\phii{-1}\phii{2k+2}+\left(2\phii{0}-u-v\right)\phii{2k+1} +
  2\phii{1}\phii{2k} +\dots+2\phii{k}\phii{k+1},\\
  &\dots
\end{align*}

Analysing the first four equations of the defining system, we see that the
following choice of coefficients is possible:
\begin{equation}
  \label{eq:40}
  \phii{-1}=1,\ \phii{0}=0,\ \phii{1}=-y,\ \phii{2}=-x.
\end{equation}
Then~$B_{-2}=1$, while
\begin{align*}
  B_{-1}&=-(u+v),\\
  B_{0}&=-2y+uv,\\
  B_{1}&=-2x+y(u+v),\\
  B_{2}&=2\phii{3}+x(u+v)+y^2,\\
  B_{3}&=2\phii{4}-(u+v)\phii{3}+2xy,\\
  B_{4}&=2\phii{5}-(u+v)\phii{4}-2y\phii{3}+2x^2,\\
  B_{5}&=2\phii{6}-(u+v)\phii{5}-2y\phii{4}-2x\phii{3},\\
  B_{6}&=2\phii{7}-(u+v)\phii{6}-2y\phii{5}-2x\phii{4}
  +\left(\phii{3}\right)^2,\\
  B_{7}&=2\phii{8}-(u+v)\phii{7}-2y\phii{6}-2x\phii{5}+2\phii{3}\phii{4},\\
  &\dots\\
  B_{2k}&=2\phii{2k+1}-(u+v)\phii{2k}-2y\phii{2k-1}-2x\phii{2k-2} +
  2\phii{3}\phii{2k-3}+\dots\\
  &\dots+2\phii{k-1}\phii{k+1}+\left(\phii{k}\right)^2,\\
  B_{2k+1}&=2\phii{2k+2}-(u+v)\phii{2k+1}-2y\phii{2k}-2x\phii{2k-1}+
  2\phii{3}\phii{2k-2}+\dots\\ 
  &\dots+2\phii{k}\phii{k+1},\\ 
  &\dots
\end{align*}
Hence, the initial defining system transforms to
\begin{align*}
  &\phii{3}_x=-(u+v),&&\phii{3}_y=uv-y,\\
  &\phii{4}_x=-2y-u^2-uv-v^2,&&\phii{4}_y=-x+uv(u+v),
\end{align*}
while for~$k>4$ we have the recurrent relations
\begin{equation}
\begin{array}{rcl}
  \phii{k}_x&=&B_{k-1}-B_{k-5}\phii{3}_x-\dots-B_{-1}\phii{k-1}_x,\\
  \phii{k}_y&=&B_{k-3}-B_{k-5}\phii{3}_y-\dots-B_{-1}\phii{k-1}_y-\phii{k-2}.
\end{array}\label{eq:6}
\end{equation}
Denote
by~$\Xiib{k}$ and~$\Yiib{k}$ the right-hand sides of equations~\eqref{eq:6}, 
i.e.,
\begin{equation}
  \label{XYb}
  \phii{k}_x=\Xiib{k},\quad\phii{k}_y=\Yiib{k},\qquad k\geq3.
\end{equation}
We have~$\wt{\phii{k}}=k+1$.

Now, exactly as in Subsection~\ref{sec:first-method}, we introduce the
spaces~$\bar{\mathcal{E}}_k=\bar{\mathcal{E}}_{k-1}\times\Riib{k+1}$,
$k=2,\dots$, where~$\Riib{k}=\mathbb{R}^1$ with the coordinate~$\phii{k}$,
the projections
\begin{equation*}
  \bar{\tau}_{k,k-1}\colon\bar{\mathcal{E}}_k\to\bar{\mathcal{E}}_{k-1},\qquad
  \bar{\tau}_k\colon\bar{\mathcal{E}}_k\to\mathcal{E}_1
\end{equation*}
and~$\bar{\tau}_*\colon\bar{\mathcal{E}}_*\to\mathcal{E}_1$ as the inverse
limit. We endow these spaces with the vector fields
\begin{equation*}
  \Diib{k}_x=D_x+\sum_{i=3}^{k+1}\Xiib{i}\frac{\partial}{\partial\phii{i}},\quad
  \Diib{k}_y=D_y+\sum_{i=3}^{k+1}\Yiib{i}\frac{\partial}{\partial\phii{i}}.
\end{equation*}
Similarly, we define~$\Diib{*}_x$ and~$\Diib{*}_y$.

\begin{proposition}
  For all~$k\geq2$ and~$k=*$ one has~$[\Diib{k}_x,\Diib{k}_y]=0$\textup{,}
  i.e.\textup{,} all the maps~$\bar{\tau}_{k}$ and~$\bar{\tau}_{k,k-1}$ are
  coverings. All these coverings are irreducible.
\end{proposition}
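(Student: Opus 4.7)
The plan is to mirror the arguments of Propositions \ref{sec:first-method-1} and \ref{sec:first-method-2}, which handled the first method.

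For the commutativity $[\Diib{k}_x, \Diib{k}_y] = 0$, the key point is that (\ref{eq:30}) already defines a covering of $\mathcal{E}_1$, so the compatibility identity for the vector fields $D_x + X_\phi \partial_\phi$ and $D_y + Y_\phi \partial_\phi$ holds as a formal identity in $\phi$. Substituting the Laurent ansatz (\ref{eq:39}) and collecting powers of $\lambda$ turns this single identity into an infinite family of identities $\Diib{*}_x(\Yiib{i}) - \Diib{*}_y(\Xiib{i}) = 0$ for all $i \geq 3$, each involving only finitely many of the $\phii{j}$; these are precisely the compatibility conditions for the system (\ref{XYb}). Truncation gives $[\Diib{k}_x, \Diib{k}_y] = 0$ for every finite $k$, and agreement on base coordinates is automatic from $[D_x, D_y] = 0$ on $\mathcal{E}_1$. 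The case $k = *$ follows by passage to the inverse limit.

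For irreducibility, I would copy the two-step scheme from the proof of Proposition \ref{sec:first-method-2}. Let $f$ on $\bar{\mathcal{E}}_k$ satisfy $\Diib{k}_x(f) = \Diib{k}_y(f) = 0$. Since $\Xiib{i}, \Yiib{i}$ depend on $u, v$ but not on their higher $x$-derivatives, successively extracting the coefficients of the highest jet variables $u_i, v_i$ in $\Diib{k}_x(f) = 0$ forces $\partial f/\partial u_i = \partial f/\partial v_i = 0$ for every $i \geq 0$. Hence $f = f(x, y, \phii{3}, \ldots, \phii{k+1})$, and the equations reduce to
\[
\partial_x f + \sum_{i=3}^{k+1} \Xiib{i}\,\partial_{\phii{i}} f = 0, \qquad \partial_y f + \sum_{i=3}^{k+1} \Yiib{i}\,\partial_{\phii{i}} f = 0,
\]
viewed as polynomial identities in the variables $u, v$ whose coefficients must vanish separately.

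The main obstacle is then to show that these polynomial identities force all $\partial_{\phii{i}} f$ to vanish. The weight bookkeeping $\wt{u} = \wt{v} = 1$ with $\wt{\Xiib{i}} = i - 2$ and $\wt{\Yiib{i}} = i - 1$ bounds the $(u,v)$-degree of the coefficients, and the recursion (\ref{eq:6}) with the initial data (\ref{eq:40}) shows inductively that the top $(u,v)$-degree of $\Xiib{i}$ is strictly increasing in $i$ and its leading coefficient is independent of the lower $\phii{j}$. Comparing the top-degree monomial in $(u,v)$ then forces $\partial_{\phii{k+1}} f = 0$, and inductive descent kills each $\partial_{\phii{i}} f$ in turn, leaving $\partial_x f = \partial_y f = 0$ and hence $f$ constant. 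A cleaner alternative, which I would explore in parallel, is to first establish the equivalence of the two infinite-dimensional coverings $\tau_*$ and $\bar\tau_*$ (as announced at the start of Section \ref{sec:infin-seri-nonl}) and then transfer irreducibility directly from Proposition \ref{sec:first-method-2}; this bypasses the nonlinear bookkeeping inherent in (\ref{eq:6}).
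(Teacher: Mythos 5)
Your proposal is correct and follows essentially the route the paper intends: the paper states this proposition without proof precisely because it is the verbatim analogue of Propositions~\ref{sec:first-method-1} and~\ref{sec:first-method-2}, and your argument (compatibility inherited from the covering~\eqref{eq:30} via the Laurent expansion, then killing jet-variable dependence and descending through the strictly increasing $(u,v)$-degrees of the $\Xiib{i}$, whose leading parts are $-\sigma_{i-2}$) is exactly that analogue carried out for the recursion~\eqref{eq:6}. Your alternative of transferring irreducibility through the equivalence of $\tau_*$ and $\bar\tau_*$ is also consistent with the paper, which establishes that equivalence in Proposition~\ref{sec:second-method-1}.
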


Consider the forms
\begin{equation*}
  \omegaib{k}=\Xiib{k}\diff x+\Yiib{k}\diff y.
\end{equation*}
One has~$\wt{\omegaib{k}}=k+1$ and

\begin{proposition}
  For every~$k\geq 3$\textup{,} the form~$\omegaib{k}$ is a nontrivial
  conservation law of the equation~$\bar{\mathcal{E}}_{k-1}$.
\end{proposition}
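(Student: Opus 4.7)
The statement is the second-method analogue of the earlier proposition on nontriviality of $\omegai{k}$, and the plan is to run the same argument. Two things must be verified: that $\omegaib{k}$ is horizontally closed, and that it is not horizontally exact on the relevant base cover.

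Closedness is immediate from the preceding proposition. The compatibility $[\Diib{k-1}_x,\Diib{k-1}_y]=0$, evaluated on the nonlocal coordinate~$\phii{k}$, gives $\Diib{k-1}_x(\Yiib{k})=\Diib{k-1}_y(\Xiib{k})$, which is exactly the condition that $\omegaib{k}=\Xiib{k}\diff x+\Yiib{k}\diff y$ be a conservation law. From the recursion~\eqref{eq:6} it is manifest that $\Xiib{k}$ and $\Yiib{k}$ do not involve~$\phii{k}$ itself, so this identity already holds one level lower in the tower, and $\omegaib{k}$ is therefore a well-defined conservation law at the required stage.

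Nontriviality is then extracted from Proposition~\ref{sec:prel-notat-1}. Because $\Xiib{k}$ and $\Yiib{k}$ do not depend on~$\phii{k}$, the one-step covering $\bar{\tau}_{k-1,k-2}$ that adjoins~$\phii{k}$ is a one-dimensional Abelian covering, and the conservation law canonically associated with it is precisely~$\omegaib{k}$. The previous proposition asserts that this covering is irreducible; Proposition~\ref{sec:prel-notat-1} translates irreducibility of a one-dimensional Abelian covering into nontriviality of the associated conservation law modulo exact forms. Combining these two inputs yields nontriviality of~$\omegaib{k}$.

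Essentially no step of this plan presents a substantive obstacle. The only point requiring actual verification, rather than an appeal to general theory, is the Abelianness condition $\partial\Xiib{k}/\partial\phii{k}=\partial\Yiib{k}/\partial\phii{k}=0$; this is immediate from the recursion~\eqref{eq:6}, in which $\phii{k}$ is computed exclusively from its predecessors $\phii{i}$ with $i<k$. Once this is observed, the proof collapses to a single invocation of Proposition~\ref{sec:prel-notat-1}, exactly paralleling the brief argument already used in the first-method version for~$\omegai{k}$.
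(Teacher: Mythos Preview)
Your argument is exactly the one the paper has in mind: the proposition is stated without proof, in direct parallel with the first-method Proposition~3, whose one-line justification combines irreducibility of the one-step covering (the preceding proposition) with Proposition~\ref{sec:prel-notat-1}. One remark on indexing: your identification of the relevant covering as $\bar{\tau}_{k-1,k-2}$, so that $\omegaib{k}$ is a conservation law on $\bar{\mathcal{E}}_{k-2}$, is in fact the correct reading---the paper's placement of $\omegaib{k}$ on $\bar{\mathcal{E}}_{k-1}$ is an off-by-one slip, since on $\bar{\mathcal{E}}_{k-1}$ the coordinate $\phii{k}$ is already present and $\omegaib{k}=d_h\phii{k}$ is trivially exact there.
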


\begin{remark}
  As before, the choice~\eqref{eq:40} of initial values
  for~$\phii{-1},\dots,\phii{2}$ is not unique, but all admissible choices
  lead to equivalent results.
\end{remark}

Finally, the following statement is valid:

\begin{proposition}\label{sec:second-method-1}
  The pairs of coverings~$\tau_{k,k-1}$ and~$\bar{\tau}_{k,k-1}$, $\tau_{k}$
  and~$\bar{\tau}_{k}$, $\tau_{*}$ and~$\bar{\tau}_{*}$ are equivalent.
\end{proposition}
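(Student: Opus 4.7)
The plan is to reduce all three equivalences to a single generating-function identity comparing the pseudopotentials of the two constructions. By Proposition~\ref{sec:prel-notat-1}, each pair of one-dimensional Abelian coverings $\tau_{k,k-1}$ and $\bar\tau_{k,k-1}$ is equivalent precisely when the associated conservation laws $\omegai{k}$ and $\omegaib{k}$ define proportional cohomology classes modulo trivial ones (after a suitable identification of the base equations), so it suffices to construct a compatible polynomial change of nonlocal variables level by level.

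First I would package the two sequences into formal Laurent series
\begin{equation*}
\Psi(\lambda) = \lambda + \frac{y}{\lambda} + \frac{x}{\lambda^2} + \sum_{k\geq3}\frac{\psii{k}}{\lambda^k}, \qquad \Phi(\lambda) = \frac{1}{\lambda} - y\lambda - x\lambda^2 + \sum_{k\geq3}\phii{k}\lambda^k,
\end{equation*}
work on the Whitney product $\mathcal{E}_*\times_{\mathcal{E}_1}\bar{\mathcal{E}}_*$ endowed with the obvious total derivatives, and consider the composite $F(x,y,\lambda) = \Psi\bigl(x,y,\Phi(x,y,\lambda)\bigr)$. A short calculation using \eqref{eq:31} for $\Psi$ and \eqref{eq:30} for $\Phi$ shows that
\begin{equation*}
D_xF \;=\; \frac{\Psi_\lambda|_{\lambda=\Phi}}{(\Phi-u)(\Phi-v)} + \Psi_\lambda|_{\lambda=\Phi}\cdot\Phi_x \;=\; 0,
\end{equation*}
and analogously $D_yF=0$. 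By differential connectedness of the Whitney product (which extends from Proposition~\ref{sec:first-method-2} and its analogue for $\bar{\mathcal{E}}_*$), every coefficient of the Laurent expansion $F(\lambda)=\sum c_m\lambda^m$ is therefore a numerical constant.

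Equating coefficients order by order then yields explicit triangular identities
\begin{equation*}
\phii{k} + \psii{k} \;=\; Q_k\bigl(\psii{3},\dots,\psii{k-1};\,x,y,u,v\bigr), \qquad k\geq 3,
\end{equation*}
with $Q_k$ weight-homogeneous of weight $k+1$; the normalisations \eqref{eq:37} and \eqref{eq:40} fix the arbitrary integration constants. These relations show that $\omegaib{k}$ and $-\omegai{k}$ differ by a horizontally exact form on $\mathcal{E}_{k-1}$ after the change of variables, so an easy induction on $k$, combined with Proposition~\ref{sec:prel-notat-1}, produces compatible equivalences $\iota_k\colon\bar{\mathcal{E}}_k\to\mathcal{E}_k$ at each level. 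Triangularity makes these $\iota_k$ compatible with the projections $\tau_{k,k-1}$ and $\bar\tau_{k,k-1}$, so they assemble into an equivalence of the inverse limits $\tau_*\simeq\bar\tau_*$, and the equivalences $\tau_k\simeq\bar\tau_k$ follow by factorisation through $\mathcal{E}_1$.

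The principal obstacle is the differential connectedness of the Whitney product needed to conclude that the $c_m$ are constants. This can be verified along the lines of Proposition~\ref{sec:first-method-2}, exploiting that $\Xii{k},\Yii{k}$ and $\Xiib{k},\Yiib{k}$ are weight-homogeneous polynomials in disjoint sets of nonlocal variables, so a function killed by both $D_x$ and $D_y$ on the product cannot depend on any of them. Once constancy of the $c_m$ is established, the remaining steps, namely the extraction of the polynomials $Q_k$ and the verification that they intertwine the total derivatives, reduce to term-by-term checks bounded at each weight by the grading $\wt{\psii{k}}=\wt{\phii{k}}=k+1$.
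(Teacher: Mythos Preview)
Your core idea---form the composition $F(\lambda)=\Psi(\Phi(\lambda))$ and compute $D_xF=D_yF=0$---is exactly the mechanism the paper uses (its Proposition~\ref{prop:psiphi}). The gap is in the step that follows.

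The Whitney product $\mathcal{E}_*\times_{\mathcal{E}_1}\bar{\mathcal{E}}_*$ is \emph{not} differentially connected, and your proposed verification cannot succeed. Indeed, on the Whitney product the variables $\psii{3}$ and $\phii{3}$ are independent coordinates, and the function
\[
g=\psii{3}+\phii{3}+y^2
\]
is nonconstant there; yet $D_xg=\Xii{3}+\Xiib{3}=(u+v)-(u+v)=0$ and $D_yg=\Yii{3}+\Yiib{3}+2y=(-uv-y)+(uv-y)+2y=0$. More generally, every coefficient $c_m$ of $F$ is a genuine nonconstant first integral on the product. This is not an accident: the failure of differential connectedness of the Whitney product is \emph{equivalent}, via Proposition~\ref{sec:prel-notat-1}, to the two families of conservation laws being dependent---which is precisely the statement you are trying to prove. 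So invoking differential connectedness here is circular, and your sentence ``so a function killed by both $D_x$ and $D_y$ on the product cannot depend on any of them'' is false: the functions $c_m$ themselves are counterexamples.

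The fix is to reinterpret what the vanishing of $D_xF$ and $D_yF$ gives you. It does not force the $c_m$ to be numerical constants; it says that each $c_m$ is a first integral of the Whitney product. One then \emph{chooses} the level set $c_m=\gamma_m$ for a convenient sequence of numbers $\gamma_m$; the paper takes $F(\lambda)=1/\lambda$, i.e., makes $1/\Psi(1/\lambda)$ and $1/\Phi(\lambda)$ compositional inverses. On that level set the relations $c_m=\gamma_m$ become the triangular identities expressing $\psii{k}$ polynomially in $\phii{3},\dots,\phii{k}$ (and conversely), and a short check shows this choice is compatible with the normalisations~\eqref{123}. The resulting explicit inversion formulas then give the equivalence directly, without any appeal to differential connectedness of the product. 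Once you replace ``differentially connected $\Rightarrow$ constant'' by ``first integral $\Rightarrow$ choose a value'', the rest of your outline goes through and matches the paper's argument.
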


We provide the proof in the next subsection.

\subsection{Proof of Proposition~\ref{sec:second-method-1}}


Let us turn back to the Gibbons--Tsarev equation~\eqref{eq:21}.  For reader's
convenience, we summarise the results of the previous section in terms of the
variables $x,y,z$.  We recall that
\begin{equation*}
  \psii{0} = 0, \quad
  \psii{1} = y, \quad
  \psii{2} = x, \quad
  \psii{3} = z - \tfrac12 y^2,
\end{equation*}
while $\psii{k}$, $k > 3$, are genuine nonlocal variables of the Gibbons--Tsarev
equation, satisfying
\begin{equation*}
  \psii{k}_x = \sigma_{k-2} - \sum_{i = 1}^{k - 3} i \sigma_{k - i - 3}
  \psii{i}, \quad 
  \psii{k}_y = z_y \psii{k-1}_x - (k - 2) \psii{k - 2}.
\end{equation*}
In terms of $z$, we have
\begin{equation}
  \label{sigma by z}
  \sigma_k
  = \sum_{0 \le j \le k - j} \binom{k - j}{j} z_x^{k - 2 j} z_y^j, \quad k > 0.
\end{equation}
To prove formula~\eqref{sigma by z}, we consider the formal power series in an
auxiliary variable~$\lambda$ with coefficients taken from the two sides of
formula~\eqref{sigma by z} and show that they coincide.  Using the left-hand
side, we have, according to formula~\eqref{sigma by uv},
\begin{equation*}
  \sum_{k \ge 0} \sigma_k \lambda^k
  = \sum_{i,j \ge 0} u^i v^j \lambda^{i+j}
  = \sum_{i,j \ge 0} (u \lambda)^i (v \lambda)^j
  = \frac 1{1 - u \lambda} \cdot \frac 1{1 - v \lambda}.
\end{equation*}
Using the right-hand side, where we substitute for $z_x,z_y$ from
formulas~\eqref{eq:27}, we obtain the same series:
\begin{equation*}
  \begin{aligned}
    &\sum_{k \ge 0} \sum_{0 \le j \le k - j} \binom{k - j}{j} z_x^{k - 2 j}
    z_y^j \lambda^k 
    = \sum_{i \ge 0} \sum_{0 \le j \le i} \binom{i}{j} z_x^{i - j} z_y^j \lambda^{i + j}
    \\
    &\quad = \sum_{i \ge 0} \sum_{0 \le j \le i} \binom{i}{j} (z_x \lambda)^{i
      - j} (z_y \lambda^2)^j 
    = \sum_{i \ge 0} (z_x \lambda + z_y \lambda^2)^i
    = \sum_{i \ge 0} \bigl((u + v) \lambda - u v \lambda^2 \bigr)^i
    \\
    &\quad
    = \sum_{i \ge 0} \bigl(1 - (1 - u \lambda) (1 - v \lambda) \bigr)^i
    = \frac 1{(1 - u \lambda) (1 - v \lambda)}.
  \end{aligned}
\end{equation*}
Thus, formula~\eqref{sigma by z} is proved.

The first method of the previous section uses the expansion~\eqref{eq:32},
i.e.,
\begin{equation}
  \label{psi expansion}
  \psi(\lambda) = \lambda + \frac{\psii{1}}{\lambda}
  + \dots + \frac{\psii{k}}{\lambda^k} + \cdots,
\end{equation}
where $\psi$ satisfies the linear system~\eqref{eq:31}, which we
rewrite in terms of $z_x$, $z_y$:
\begin{equation}
  \label{psi covering}
  \psi_x = \frac{1}{\lambda^2 - z_x \lambda - z_y}\cdot\psi',\qquad
  \psi_y = \frac{\lambda - z_x}{\lambda^2 - z_x \lambda - z_y}\cdot\psi',
\end{equation}
where the `prime' denotes the $\lambda$-derivative.  The second method uses the
expansion~\eqref{eq:39}, i.e.,
\begin{equation}
  \label{phi expansion}
  \phi(\lambda) = \frac{1}{\lambda}
  + \phii{1} \lambda + \dots
  + \phii{k} \lambda^k + \cdots,
\end{equation}
where $\phi$ satisfies the nonlinear system~\eqref{eq:26}, i.e.,
\begin{equation}
  \label{phi covering}
  \phi_x = -\frac{1}{\phi^2 - z_x\phi - z_y},\qquad
  \phi_y = -\frac{\phi - z_x}{\phi^2 - z_x\phi - z_y}.
\end{equation}

Recall that composition $b \circ a$ of formal series $b(\mu) = \sum_{j \ge s} b_j \mu^j$ and 
$a(\lambda) = \sum_{i \ge r} a_i \lambda^i$, i.e.,
$$
\begin{aligned}
b(a(\lambda)) &= \sum_{j \ge s} b_j \Bigl(\sum_{i \ge r} a_i \lambda^i\Bigr)^j
\\
 &= \sum_{j \ge s} b_j \Bigl(\sum_{i_1 \ge r} a_{i_r} \lambda^{i_1}\Bigr) \dots \Bigl(\sum_{i_j \ge r} a_{i_j} \lambda^{i_j}\Bigr) 
\\
 &= \sum_{j \ge s} b_j \sum_{i_1 \ge r, \dots, i_j \ge r} a_{i_1} \dots a_{i_j} \lambda^{i_1 + \dots + i_j}
\end{aligned}
$$
is a formal series if and only if the coefficients at powers of $\lambda$ are finite sums.
This is certainly the case when $\sum_{j \ge s} b_j \mu^j$ is a polynomial or when $r \ge 1$, i.e., 
when $\sum_{i \ge r} a_i \lambda^i$ is a power series without the constant term.

Computing
$$
\frac{1}{\phi(\lambda)} = \lambda - \phii1 \lambda^{k+2} 
 + \cdots
$$ 
we see that $1/\phi(\lambda)$ is a power series without the constant term and
therefore, the composition series $\psi \circ \phi$,
i.e., 
\begin{equation*}
  \psi(\phi(\lambda)) = \phi(\lambda) + \frac{\psii{1}}{\phi(\lambda)}
  + \dots + \frac{\psii{k}}{\phi(\lambda)^k} + \cdots,
\end{equation*}
is well defined.

\begin{proposition}
  \label{prop:psiphi}
  Let $\psi(\lambda)$ and $\phi(\lambda)$ be the formal expansions~\eqref{psi
    expansion} and~\eqref{phi expansion}, respectively.  Then each pair of the
  conditions
  \begin{enumerate}
  \item equation~\eqref{psi covering}{\rm;}
  \item equation~\eqref{phi covering}{\rm;}
  \item $\psi(\phi) = c(\lambda)$, where $c(\lambda)$ is a constant
    {\rm(}possibly depending on $\lambda${\rm),}
  \end{enumerate}
  implies the remaining condition.
\end{proposition}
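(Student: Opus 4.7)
The plan is to use the chain rule for the formal composition $\psi(\phi(\lambda))$ in combination with the fact that the right-hand sides of~\eqref{psi covering} and~\eqref{phi covering} match under the substitution $\lambda \mapsto \phi$. The key algebraic coincidence is that the denominator $\lambda^2 - z_x \lambda - z_y$ of~\eqref{psi covering} specialises at $\lambda = \phi$ to $\phi^2 - z_x \phi - z_y$, the denominator of~\eqref{phi covering}, and analogously the numerator $\lambda - z_x$ specialises to $\phi - z_x$. All three implications hinge on the chain-rule identities
\begin{equation*}
\partial_x[\psi(\phi)] = \psi_x(\phi) + \psi'(\phi)\,\phi_x, \qquad \partial_y[\psi(\phi)] = \psi_y(\phi) + \psi'(\phi)\,\phi_y,
\end{equation*}
to be read as equalities of formal Laurent series in $\lambda$ with coefficients depending on $x, y$ and the jet variables of $z$.

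For $(1) \wedge (2) \Rightarrow (3)$, I would substitute~\eqref{psi covering} evaluated at $\lambda = \phi$ together with~\eqref{phi covering} into the two chain-rule expressions above; the resulting terms cancel identically, giving $\partial_x \psi(\phi) = \partial_y \psi(\phi) = 0$, so $\psi(\phi)$ depends on $\lambda$ alone. For $(1) \wedge (3) \Rightarrow (2)$, differentiating $\psi(\phi) = c(\lambda)$ in $x$ and $y$ gives $\psi_x(\phi) + \psi'(\phi)\phi_x = 0$ and the analogous $y$-identity; substituting~\eqref{psi covering} at $\lambda = \phi$ and solving for $\phi_x, \phi_y$ reproduces~\eqref{phi covering}. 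Division by $\psi'(\phi)$ is legitimate because $\psi'(\lambda) = 1 - \psii{1}/\lambda^2 - 2\psii{2}/\lambda^3 - \cdots$ has constant term $1$, so $\psi'(\phi) = 1 + O(\lambda^2)$ is a unit in the relevant formal series ring.

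For $(2) \wedge (3) \Rightarrow (1)$, the same differentiation and substitution procedure yields precisely the right-hand sides of~\eqref{psi covering}, but evaluated at $\lambda = \phi$ rather than at a generic $\lambda$. To conclude~\eqref{psi covering} as an identity in $\lambda$ itself, I would invoke the formal invertibility of the substitution map $\lambda \mapsto \phi(\lambda)$: since $\phi(\lambda) = \lambda^{-1} + O(\lambda)$, the series $1/\phi(\lambda) = \lambda + O(\lambda^3)$ is tangent to the identity and hence admits a compositional inverse as a formal power series, so $\lambda$ can be expressed as a Laurent series in $\phi$. Consequently substitution by $\phi$ is an injective ring endomorphism on the ambient Laurent series ring, and $f(\phi) = g(\phi)$ forces $f = g$.

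The main obstacle is exactly this last step, which is the only one that is not a routine chain-rule manipulation: one must pin down an ambient ring of formal Laurent series in $\lambda$ in which both sides of~\eqref{psi covering} live, verify that substitution by $\phi$ maps this ring into itself injectively, and thus be entitled to cancel $\phi$ from an identity of the form $f(\phi) = g(\phi)$. Once that step is cleanly set up, all remaining work reduces to the tautological identification $(\lambda^2 - z_x \lambda - z_y)|_{\lambda = \phi} = \phi^2 - z_x \phi - z_y$ and the observation that $\psi'(\phi)$ is invertible.
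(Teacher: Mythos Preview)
Your approach is correct and essentially identical to the paper's: both arguments rest on the chain-rule identities $\partial_x[\psi(\phi)] = \psi_x(\phi) + \psi'(\phi)\,\phi_x$ (and the $y$-analogue) together with the tautological matching of denominators under $\lambda \mapsto \phi$. The paper's proof is in fact terser than yours on the $(2)\wedge(3)\Rightarrow(1)$ direction --- it simply asserts that assuming~(3) ``yields the equivalence'' of~(1) and~(2) without discussing injectivity of the substitution $\lambda \mapsto \phi(\lambda)$, which you treat carefully. Conversely, for $(1)\wedge(2)\Rightarrow(3)$ the paper explicitly invokes differential connectedness of the $\phi$-covering to pass from $\tilde D_x\psi(\phi) = \tilde D_y\psi(\phi) = 0$ to the conclusion that $\psi(\phi)$ is a constant series in~$\lambda$; your phrase ``so $\psi(\phi)$ depends on $\lambda$ alone'' needs exactly this justification, since the coefficients a priori live on the covering space and vanishing of total (not partial) derivatives is what forces them to be genuine constants.
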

\begin{proof}
  Assume that~\eqref{psi covering} and~\eqref{phi covering} hold.
  Substituting $\phi$ for $\lambda$ in~\eqref{psi covering}, an easy
  computation yields
  \begin{align*}
    (\psi(\phi))_x &= \psi_x (\phi) + \psi' (\phi) \phi_x
    = 0,
    \\
    (\psi(\phi))_y &= \psi_y (\phi) + \psi' (\phi) \phi_y
    = 0
  \end{align*}
  by virtue of~\eqref{phi covering}.  Then $\psi(\phi)$ is a constant with
  respect to~$x$ and~$y$, since the covering~\eqref{phi covering} is
  differentially connected.

  Conversely, assume that $\psi(\phi) = c(\lambda)$, where $c(\lambda)$ does
  not depend on~$x$ and~$y$.  Then $\psi(\phi)_x = \psi(\phi)_y =
  0$ and
  \begin{equation*}
    \psi_x (\phi) = -\psi' (\phi) \phi_x, \quad \psi_y (\phi) = -\psi' (\phi)
    \phi_y, 
  \end{equation*}
  which yields the equivalence of Equations~\eqref{psi covering}
  and~\eqref{phi covering}.
\end{proof}

Under the substitution $\lambda \to 1/\lambda$, the expansion~\eqref{psi
  expansion} acquires the form
\begin{equation*}
  \psi\Bigl(\frac1\lambda\Bigr) = \frac1\lambda + \psii{1}\lambda
  + \dots + \psii{k}\lambda^k + \cdots,
\end{equation*}
i.e., $\phi(\lambda)$ and $\psi(1/\lambda)$ are Laurent series of the lowest
degree $-1$. Consequently, $1/\phi(\lambda)$ and $1/\psi(1/\lambda)$ are power
series without a constant term.  So, they are composable with each other.

There is a preferable choice of the constant $c(\lambda)$ in
Proposition~\ref{prop:psiphi}.

\begin{proposition}
  The expansions $\phi$\textup{,} $\psi$ can be chosen so that
  \begin{equation*}
    \psi(\phi(\lambda)) = 1/\lambda,
  \end{equation*}
  i.e.\textup{,} the power series $1/\psi(1/\lambda)$ and $1/\phi(\lambda)$ are
  compositionally inverse one to another.
\end{proposition}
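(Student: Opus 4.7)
The plan is to leverage Proposition~\ref{prop:psiphi} in the direction $(2) \wedge (3) \Rightarrow (1)$: fixing any $\phi$ satisfying~\eqref{phi covering}, it suffices to produce a Laurent series $\psi$ of the form~\eqref{psi expansion} with $\psi(\phi(\lambda)) = 1/\lambda$; then equation~\eqref{psi covering} for $\psi$ follows automatically from the proposition, and the desired compositional-inverse relation is precisely the statement $\psi(\phi(\lambda)) = 1/\lambda$.

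Fix $\phi$ satisfying~\eqref{phi covering} with the initial values~\eqref{eq:40}. Since $\phi(\lambda) = 1/\lambda + O(\lambda)$ with $\phii{0} = 0$, the reciprocal $F(\lambda) := 1/\phi(\lambda)$ is a formal power series of the form $F(\lambda) = \lambda + O(\lambda^3)$ (linear coefficient $1$, vanishing constant and quadratic coefficients; direct computation gives $F(\lambda) = \lambda + y\lambda^3 + x\lambda^4 + \cdots$). Hence $F$ admits a unique compositional inverse $G(\lambda) = \lambda + O(\lambda^2)$ in the ring of formal power series, satisfying $G(F(\lambda)) = \lambda$.

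Now set $\psi(\mu) := 1/G(1/\mu)$. The series $G(1/\mu)$ is a Laurent series starting as $1/\mu + O(1/\mu^2)$ (with vanishing term at $\mu^0$, thanks to the absence of $\lambda^2$ in $F$ and hence in $G$), so its reciprocal is a well-defined Laurent series of the form $\psi(\mu) = \mu + O(1/\mu)$ with vanishing constant term, matching the shape required by~\eqref{psi expansion}. By construction,
\begin{equation*}
  \psi(\phi(\lambda)) = \frac{1}{G(1/\phi(\lambda))} = \frac{1}{G(F(\lambda))} = \frac{1}{\lambda},
\end{equation*}
so condition~(3) of Proposition~\ref{prop:psiphi} holds with $c(\lambda) = 1/\lambda$. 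Combined with condition~(2), which is the hypothesis on $\phi$, the proposition yields condition~(1); this exhibits the required pair $(\phi,\psi)$, and the compositional-inverse statement for $1/\psi(1/\lambda)$ and $1/\phi(\lambda)$ is the same identity rewritten under $\lambda \leftrightarrow 1/\lambda$.

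The main subtle point I expect is verifying that the $\psi$ built this way has exactly the form~\eqref{psi expansion}, in particular leading coefficient~$1$ at $\mu$ and vanishing constant term, so that Proposition~\ref{prop:psiphi} is applicable. These normalizations trace back to $\phii{-1}=1$ and $\phii{0}=0$ in~\eqref{eq:40}; a sanity check confirms that the lowest nontrivial terms recover $\psii{1} = y$ and $\psii{2} = x$ in agreement with the values fixed in Subsection~\ref{sec:first-method}. Everything else reduces to the standard fact that a formal power series $F$ with $F(0)=0$ and $F'(0) \neq 0$ possesses a unique compositional inverse.
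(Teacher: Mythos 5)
Your proof is correct and follows essentially the same route as the paper: both invoke Proposition~\ref{prop:psiphi} in the direction $(2)\wedge(3)\Rightarrow(1)$ with the choice $c(\lambda)=1/\lambda$. You additionally spell out the explicit construction of $\psi$ as $1/G(1/\mu)$ with $G$ the compositional inverse of $1/\phi$ and verify it has the normal form~\eqref{psi expansion}, a detail the paper's one-line proof leaves implicit.
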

\begin{proof}
  According to Proposition~\ref{prop:psiphi}, we are free to choose
  $c(\lambda) = 1/\lambda$, i.e., $\psi(\phi(\lambda)) = 1/\lambda$.
  Substituting $1/\phi(\lambda)$ for $\lambda$ in $1/\psi(1/\lambda)$, we obtain
  $1/\psi(\phi(\lambda)) = 1/c(\lambda) = \lambda$. Hence the statement.
\end{proof}

With this choice of $c(\lambda)$, the $k$-tuples of coefficients
$\psii{1},\dots,\psii{k}$ and $\phii{1},\dots,\phii{k}$ determine each other
uniquely, thereby providing the induction step in the proof of the equalities
$\mathcal{E}_{k} = \bar{\mathcal{E}}_{k}$.  It is, however, necessary to check
that the condition $c(\lambda) = 1/\lambda$ is compatible with the choices
\begin{equation}
  \label{123}
  \begin{aligned}
    \psii{1} &= y,                & \phii{1} &= -y, \\
    \psii{2} &= x,                & \phii{2} &= -x, \\
    \psii{3} &= z - \tfrac12 y^2, \quad & \phii{3} &= -z - \tfrac12 y^2. 
  \end{aligned}
\end{equation}
made in Section~\ref{sec:infin-seri-nonl}.
To this end, we compute
\begin{align*}
  \frac1\lambda &= \psi(\phi) = \phi(\lambda) + \frac{\psii{1}}{\phi(\lambda)}
  + \dots + \frac{\psii{k}}{\phi(\lambda)^k} + \cdots
  \\&
  = \frac1\lambda + (\psii{1} + \phii{1}) \lambda
  \\&\quad
  + (\psii{2} + \phii{2}) \lambda^2
  \\&\quad
  + (\psii{3} + \phii{3} - \psii{1}\phii{1}) \lambda^3
  \\&\quad
  + (\psii{4} + \phii{4} - \psii{1}\phii{2} - 2 \psii{2}\phii{1}) \lambda^4
  \\&\quad + \cdots
\end{align*}
One easily sees that the coefficients at $\lambda^i$, $i = 1$, $2$, $3$,
vanish under the above mentioned choices and we obtain the recurrent formulas
\begin{align*}
  \psii{k}
  &=-\sum_{m \ge 1} (-1)^{m} \sum_{i_1 + \cdots + i_m = k + 1}
  \frac1k \binom{k}{m}
  \phii{i_1 - 1} \cdots \phii{i_m - 1}, \\
  \phii{k}
  &= -\sum_{m \ge 1} \sum_{i_1 + \cdots + i_m = k + 1}
  \frac1k \binom{k}{m}
  \psii{i_1 - 1} \cdots \psii{i_m - 1}
\end{align*}
that provide the needed equivalence of coverings.

\section{Nonlocal symmetries}
\label{sec:algebra-nonl-symm}

It is straightforward to compute the first-degree nonlocal shadows depending on
any number of nonlocal variables.  It may seem to be insignificant whether we
use $\psii{i}$ or $\phii{i}$, but the formulas to follow turn out to be
simpler if the latter choice is made.  Thus, we give here an explicit
description of nonlocal symmetries in the covering~$\bar{\tau}^*$ and prove
that they form the Witt algebra. As the first step, we obtain the shadows.

\subsection{The hierarchy of symmetry shadows}
\label{subs:shadows}

Consider the covering~$\bar{\tau}_*$ with the nonlocal variables $\phii{i}$
and present  the total derivatives in the form
\begin{equation*}
  \tilde D_x = D_x + \sum_i \Xiib{i} \frac{\partial}{\partial\phii{i}}, \quad
  \tilde D_y = D_y + \sum_i \Yiib{i} \frac{\partial}{\partial\phii{i}},
\end{equation*}
where~$\Xiib{i}$ and $\Yiib{i}$ are the right-hand sides in~\eqref{eq:6}.

Now, using the expansion~\eqref{phi expansion}, let us introduce a new set of
nonlocal variables $\phi_{\lambda^i} = \diff^i\phi/\!\diff\lambda^i$ and
consider the product
$\bar{\mathcal{E}}_\lambda = \mathcal E \times J(\lambda;\phi)$, where
$J(\lambda;\phi)$ is the space with the coordinates~$\lambda$
and~$\phi_{\lambda^i}$, and the covering
$\bar{\tau}_\lambda\colon \bar{\mathcal{E}}_\lambda \to \mathcal{E}$. In what
follows we abbreviate the `index' $\lambda^n$ as $\Lambda$. 
We equip $\bar{\mathcal{E}}_\lambda$ with the
total derivatives
\begin{equation}\label{eq:7}
  \tilde D_x = D_x + \sum_\Lambda \phi_{x\Lambda}
  \frac{\partial}{\partial\phi_\Lambda}, \quad 
  \tilde D_y = D_y + \sum_\Lambda \phi_{y\Lambda}
  \frac{\partial}{\partial\phi_\Lambda}, \quad 
  \tilde D_\lambda = \frac{\diff}{\diff\lambda} + \sum_\Lambda
  \phi_{\lambda\Lambda} \frac{\partial}{\partial\phi_\Lambda},
\end{equation}
where the coefficients~$\phi_{x\Lambda}$ and~$\phi_{y\Lambda}$ can be computed
by means of Equations~\eqref{phi covering}. Then $\tilde{\mathcal E}$ endowed
with the vector fields~\eqref{eq:7} is equivalent to the system consisting of
the Gibbons--Tsarev equation~\eqref{eq:21}, the condition
\begin{equation}
  \label{dummy}
  z_\lambda = 0,
\end{equation}
and the pair~\eqref{phi covering} over the extended set of independent
variables $x$, $y$, $\lambda$.

\begin{proposition}
  \label{almost symmetry}
  Denote
  \begin{equation}
    \label{eq:almSymm}
    Z = (\phi^2 - z_x \phi - z_y) \phi_\lambda^2,
  \end{equation}
  Under the expansion~\eqref{phi expansion}\textup{,} $Z$ is a formal Laurent
  series of the form
  \begin{equation}
    \label{Z expansion}
    Z = \sum_{n = -4}^\infty \Z{n} \lambda^{n-2}.
  \end{equation}
  Then $\Z{n}$ are shadows of symmetries of the Gibbons--Tsarev equation
    in the covering $\bar{\tau}_*$.
\end{proposition}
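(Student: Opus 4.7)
The plan is to reduce the proof of the infinite family of shadow equations $\tilde\ell_\mathcal{E}(\Z{n})=0$, $n\ge-4$, to a single formal identity $\tilde\ell_\mathcal{E}(Z)=0$ on the extended covering $\bar{\mathcal{E}}_\lambda$, where $\tilde\ell_\mathcal{E}$ is the linearization from~\eqref{eq:22} lifted via the total derivatives $\tilde D_x, \tilde D_y$ of~\eqref{eq:7}. Because the coefficients $z_x, z_y, z_{xx}, z_{xy}$ of $\tilde\ell_\mathcal{E}$ do not depend on $\lambda$, and because $\tilde D_x, \tilde D_y$ commute with $\tilde D_\lambda$, the operator $\tilde\ell_\mathcal{E}$ commutes with extraction of Laurent coefficients in $\lambda$. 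Thus the expansion~\eqref{Z expansion} will yield $\tilde\ell_\mathcal{E}(\Z{n})=0$ termwise once the single identity is established, and each $\Z{n}$ will become a shadow in $\bar{\tau}_*$.

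To set up the verification I observe that the three derivations $\tilde D_x, \tilde D_y, \tilde D_\lambda$ of~\eqref{eq:7} pairwise commute: $[\tilde D_x,\tilde D_y]=0$ expresses the compatibility of~\eqref{phi covering}, while $[\tilde D_x,\tilde D_\lambda]=[\tilde D_y,\tilde D_\lambda]=0$ follows from~\eqref{dummy}, that is, from the fact that the jets of $z$ do not depend on~$\lambda$. Writing $P=\phi^2-z_x\phi-z_y$, equations~\eqref{phi covering} give $\tilde D_x(\phi)=-1/P$ and $\tilde D_y(\phi)=-(\phi-z_x)/P$. Applying $\tilde D_\lambda$ to these identities and using $\tilde D_\lambda(P)=(2\phi-z_x)\phi_\lambda$, I obtain
\begin{equation*}
  \tilde D_x(\phi_\lambda) = \frac{(2\phi-z_x)\phi_\lambda}{P^2},\qquad
  \tilde D_y(\phi_\lambda) = \frac{[(\phi-z_x)^2+z_y]\phi_\lambda}{P^2},
\end{equation*}
after noting the simplification $(\phi-z_x)(2\phi-z_x)-P=(\phi-z_x)^2+z_y$. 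These four formulas, together with the product rule, determine all derivatives of $Z=P\phi_\lambda^2$ as rational expressions in $\phi, \phi_\lambda, z_x, z_y$ and the higher jets of $z$, with denominators being powers of $P$.

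The main body of the argument is then a direct computation: I expand $\tilde D_x^2(Z), \tilde D_x\tilde D_y(Z), \tilde D_y^2(Z)$ using the product and chain rules, substitute the resulting expressions into $\tilde\ell_\mathcal{E}(Z)=\tilde D_y^2(Z)+z_x\tilde D_x\tilde D_y(Z)-z_y\tilde D_x^2(Z)+z_{xy}\tilde D_x(Z)-z_{xx}\tilde D_y(Z)$, and eliminate $z_{yy}$ by means of the Gibbons--Tsarev relation $z_{yy}=z_yz_{xx}-z_xz_{xy}-1$ from~\eqref{eq:21}. Clearing the common denominator $P^4$, I expect the resulting polynomial numerator in $\phi, \phi_\lambda, z_x, z_y, z_{xx}, z_{xy}$ to vanish identically, the constant $+1$ in~\eqref{eq:21} being precisely what is needed to absorb the $z$-free contributions.

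The principal obstacle is this last algebraic collapse. Each second-order operator applied to $Z$ produces roughly a dozen summands carrying various powers of $\phi_\lambda$ and of $1/P$, so the verification is substantially easier with computer algebra than by hand; the payoff, however, is uniform in $n$. Conceptually, the choice $Z=P\phi_\lambda^2$ is the squared eigenfunction associated with the Lax pair~\eqref{phi covering}: the factor $\phi_\lambda^2$ carries the linearization of the covering while the prefactor $P$ converts the inhomogeneous right-hand side of~\eqref{phi covering} into a contribution consistent with~\eqref{eq:21}, mirroring the classical squared-eigenfunction construction of symmetries of soliton equations and explaining why the direct cancellation must succeed.
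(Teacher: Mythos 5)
Your proposal is correct and follows essentially the same route as the paper: one verifies the single identity $\tilde\ell_{\mathcal E}(Z)=0$ modulo~\eqref{eq:21}, \eqref{phi covering} and~\eqref{dummy} by a direct computation and then extracts Laurent coefficients, using that the coefficients of $\tilde\ell_{\mathcal E}$ and the total derivatives do not involve $\lambda$. The auxiliary formulas you derive for $\tilde D_x\phi_\lambda$ and $\tilde D_y\phi_\lambda$ are exactly the identities the paper itself invokes later (your sign in $\tilde D_y\phi_\lambda=\bigl((\phi-z_x)^2+z_y\bigr)\phi_\lambda/(\phi^2-z_x\phi-z_y)^2$ is the correct one), and, like the paper, you leave the final algebraic cancellation to a routine computation.
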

\begin{proof}
  It is a routine computation to insert~\eqref{eq:almSymm}
  into the linearisation
  \begin{equation}\label{eq:tilde ell GT}
    \tilde\ell_{\mathcal{E}}(Z)\equiv \tilde D_y^2(Z) + z_x \tilde D_x \tilde
    D_y(Z) - 
    z_y \tilde D_x^2(Z) + z_{xy} \tilde D_x(Z) - z_{xx} \tilde D_y(Z)
  \end{equation}
  and check that $\tilde\ell_{\mathcal E}(Z) = 0$ modulo
  equations~\eqref{eq:21}, \eqref{phi covering} and~\eqref{dummy}.  If $Z$ is
  replaced with its expansion~\eqref{Z expansion}, we obtain
  \begin{equation*}
   0 = \tilde\ell_{\mathcal E}(Z)
    = \sum_{n = -4}^\infty \tilde\ell_{\mathcal E}(\Z{n}) \lambda^{n-2}.
  \end{equation*}
  Since $\tilde\ell_{\mathcal E}(\Z{n})$ do not depend on $\lambda$, they 
  have to vanish 
  modulo equation~\eqref{eq:21} and expanded system~\eqref{phi covering}, i.e., 
  equations \eqref{XYb}.  
  Hence the statement.
\end{proof}

\begin{remark} \label{rem:almost symmetry a}
It is easy to compute functions $Z$ such that $\tilde\ell_{\mathcal E}(Z) = 0$ 
modulo equations~\eqref{eq:21}, \eqref{phi covering} and~\eqref{dummy}
(cf. the proof of Proposition~\ref{almost symmetry}).
Besides the expression~\eqref{eq:almSymm}, another such function is 
$Z = \phi_\lambda$, which, however, generates just the invisible symmetries
(see Sect.~\ref{sec:prel-notat}). 

Moreover, if some $Z$ satisfies $\tilde\ell_{\mathcal E}(Z) = 0$, then so does 
$f(\lambda) Z$ for any function $f(\lambda)$. This does not 
extend the linear space of generated shadows $\Z{i}$, however.
\end{remark}

\begin{remark} \label{rem:almost symmetry b}
Although the condition $z_\lambda = 0$ is necessary for $Z$ given 
by~\eqref{eq:almSymm} to be a shadow of the Gibbons--Tsarev equation, 
the same $Z$ does not satisfy $\tilde D_\lambda Z = 0$ and, therefore, is not a 
shadow of the system consisting of the Gibbons--Tsarev equation and the 
equation $z_\lambda = 0$.
\end{remark}

Proposition~\ref{almost symmetry} says that $Z$ is the generating section for
an infinite hierarchy of shadows of the Gibbons--Tsarev equation.  These
shadows are easy to obtain explicitly.  Let $\rsum\quad$ denote summation
where indices run through all integers from $-1$ to infinity, possibly
subject to additional requirements written under the symbol.


\begin{proposition}
  \label{prop:shadow}
  Let
  \begin{equation}
    \begin{aligned}
      \label{A2}
      A_2^{(k,n)} &= \rsum_{i_1 + \cdots + i_{k+2} = n} i_{1} i_{2} \phii{i_1}
      \cdots \phii{i_{k+2}}, 
      \quad k \ge 0.
    \end{aligned}
  \end{equation}
  Then
  \begin{equation*}
    \Z{n} = A_2^{(1,n)} z_x + A_2^{(0,n)} z_y - A_2^{(2,n)}.
  \end{equation*}
\end{proposition}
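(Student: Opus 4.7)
The plan is to prove Proposition~\ref{prop:shadow} by direct substitution of the Laurent expansion~\eqref{phi expansion} into the definition $Z = (\phi^2 - z_x \phi - z_y) \phi_\lambda^2$ and by reading off the coefficient of $\lambda^{n-2}$. The argument is a standard Cauchy-product manipulation; no structural input beyond what is already established in Section~\ref{sec:infin-seri-nonl} is required, and in particular the partial differential equations \eqref{XYb} governing the $\phii{i}$ do not enter.

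First I would record that $\phi(\lambda) = \sum_{i \ge -1} \phii{i} \lambda^{i}$ (with $\phii{-1}=1$ and $\phii{0}=0$) yields $\phi_\lambda = \sum_{i \ge -1} i\, \phii{i} \lambda^{i-1}$, and then, for each $k \in \{0,1,2\}$, I would form the Cauchy product of $k$ copies of $\phi$ and two copies of $\phi_\lambda$. A direct count of the exponents shows that the coefficient of $\lambda^{n-2}$ in $\phi^k \phi_\lambda^2$ is
$$
\rsum_{i_1 + \cdots + i_{k+2} = n} i_{k+1}\, i_{k+2}\, \phii{i_1} \cdots \phii{i_{k+2}}.
$$
By symmetry of the summation in the indices $i_1,\ldots,i_{k+2}$ the two weighted positions may be moved to positions $1$ and $2$, which identifies the coefficient with $A_2^{(k,n)}$ of~\eqref{A2}. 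Combining the three contributions with the signs coming from the definition of $Z$ then yields the asserted closed-form expression for $\Z{n}$.

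I do not expect any serious obstacle. The one point that deserves attention is verifying that each inner sum is actually finite, so that it defines an honest function on $\bar{\mathcal{E}}_*$: since each $i_j \ge -1$ and $i_1 + \cdots + i_{k+2} = n$ is fixed, every $i_j$ is confined to the range $-1 \le i_j \le n + (k+1)$, so only finitely many tuples contribute. The remaining bookkeeping — in particular, the reordering that turns the naturally asymmetric weights $i_{k+1} i_{k+2}$ (coming from the two $\phi_\lambda$ factors) into the symmetric pattern $i_1 i_2$ of~\eqref{A2}, and the tracking of the three signs inherited from the definition of $Z$ — is routine but must be carried out carefully.
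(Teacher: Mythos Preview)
Your proposal is correct and follows essentially the same route as the paper: expand $\phi^k\phi_\lambda^2$ as a Cauchy product, read off the coefficient of $\lambda^{n-2}$, and then substitute into $Z=(\phi^2-z_x\phi-z_y)\phi_\lambda^2$. The paper's proof is in fact terser than yours---it does not pause for the finiteness check or the relabeling that turns $i_{k+1}i_{k+2}$ into $i_1 i_2$---so your extra remarks only make the argument cleaner, not different.
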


\begin{proof}
  Considering the expansion~\eqref{phi expansion}, we have
  \begin{equation*}
    \begin{aligned}
      \phi^k \phi^{\prime2} &= \Bigl(\rsum_{i_1}\quad \phii{i_1} \lambda^{i_1}
      \Bigr) \cdots \Bigl(\rsum_{i_k}\quad \phii{i_k} \lambda^{i_k} \Bigr)
      \\&\qquad\times \frac{1}{\lambda^2} \Bigl(\rsum_{i_{k+1}}\quad i_{k+1}
      \phii{i_{k+1}} \lambda^{i_{k+1}} \Bigr) \Bigl(\rsum_{i_{k+2}}\quad
      i_{k+2} \phii{i_{k+2}} \lambda^{i_{k+2}} \Bigr)
      \\
      &= \sum_n \rsum_{i_1 + \cdots + i_{k+2} = n} i_{k+1} i_{k+2} \phii{i_1}
      \cdots \phii{i_{k+2}} \lambda^{n-2} = \sum_n A_2^{(k,n)} \lambda^{n-2}.
    \end{aligned}
  \end{equation*}
  Inserting into $Z$ given by formula~\eqref{eq:almSymm}, we obtain the result
  immediately.
\end{proof}


\subsection{The hierarchy of full symmetries}
\label{subs:symmetries}

Here the shadows $\Z{n}$ obtained in the previous section will be extended to
full symmetries of the covering $\bar{\tau}_*$. To this end, consider a
nonlocal symmetry in the form 
\begin{equation*}
  a \frac\partial{\partial x}
  + b \frac\partial{\partial y}
  + c \frac\partial{\partial z}
  + \sum_{i > 3} f^{(i)} \frac{\partial}{\partial\phii{i}}
  + \cdots,
\end{equation*}
where $a$, $b$, $c$, $f^{(i)}$ are functions on $\bar{\mathcal E}_*$. Then the
corresponding vertical field, obtained by subtracting
$a \tilde D_x + b \tilde D_y$, is
\begin{equation}
  \label{Symm xyz phi>3}
  \begin{aligned}
    \mathcal S
    &= \sum_{\Xi} \tilde D_\Xi (c - a z_x - b z_y) \frac\partial{\partial z_\Xi}
    + \sum_{i > 3} (f^{(i)} - a \phii{i}_x - b \phii{i}_y)
    \frac{\partial}{\partial\phii{i}},
  \end{aligned}
\end{equation}
where~$\phii{i}_x = \Xiib{i}$, $\phii{i}_y = \Yiib{i}$ are given by recurrent
relations~\eqref{eq:6} and~$z_\Xi$ are internal coordinates in~$\mathcal{E}$
(see Section~\ref{sec:local-symm-cons}). Then $\mathcal S$ is a
symmetry of $\bar{\mathcal E}_*$ if and only if
\begin{equation}
  \label{S cond phi(i)}
  \begin{aligned}
    &\mathcal S(z_{yy} + z_x z_{xy} - z_y z_{xx} + 1)= 0,
    \quad
    &\mathcal S(\phii{i}_x - \Xiib{i})= 0,
    \quad
    &\mathcal S(\phii{i}_y - \Yiib{i}) = 0
  \end{aligned}
\end{equation}
modulo equations~\eqref{eq:21} and~\eqref{XYb}.
Using formulas~\eqref{123}, variables $x,y,z$ can be expressed in terms of
$\phii{1}, \phii{2}, \phii{3}$.  Consequently, we can rewrite~\eqref{Symm xyz
  phi>3} and~\eqref {S cond phi(i)} in terms of $\phii{i}$ and $z_\Xi$,
$|\Xi| > 0$, alone.

\begin{proposition}
  In terms of coordinates $\phii{i}$\textup{,} $i > 0$\textup{,} and
  $z_\Xi$\textup{,} a vertical evolutionary field in the covering
  $\bar{\tau}_*$ can be written as
  \begin{equation}
    \label{Symm phii}
    \begin{aligned}
      \mathcal S &=
      \sum_{i > 0} \Phi^{(i)}
      \frac{\partial}{\partial\phii{i}}
      + \sum_{|\Xi| > 0} \tilde D_\Xi Z \frac\partial{\partial z_\Xi},
      \\
      Z &= (z_y - \phii1) f^{(1)} + z_x f^{(2)} - f^{(3)},
      \\
      \Phi^{(i)} &= f^{(i)} + f^{(2)} \Xiib{i} + f^{(1)} \Yiib{i}.
    \end{aligned}
  \end{equation}
  The field $\mathcal S$ is a symmetry if and only if $\tilde\ell_{\mathcal E}
  Z = 0$ and
  \begin{equation}
    \label{nloc sym phi (i)}
      \tilde D_x \Phi^{(i)}
      - \mathcal S \Xiib{i}= 0, \qquad
      \tilde D_y \Phi^{(i)}
      - \mathcal S \Yiib{i} = 0.
  \end{equation}
\end{proposition}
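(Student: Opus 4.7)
My plan is to interpret the statement as a change of coordinates on $\bar{\mathcal E}_*$, from the form in~\eqref{Symm xyz phi>3} (with base coordinates $x, y, z$ and nonlocal variables $\phii{i}$ for $i > 3$) to the form in~\eqref{Symm phii} (with coordinates $\phii{i}$ for $i > 0$ and proper jet coordinates $z_\Xi$ for $|\Xi| > 0$). The dictionary is supplied by the identifications~\eqref{123}, which give $y = -\phii{1}$, $x = -\phii{2}$, and $z = -\phii{3} - \tfrac12 (\phii{1})^2$. The symmetry criterion will then follow from rewriting~\eqref{S cond phi(i)} in the new basis.

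First I will compute the Jacobian of this change. At fixed remaining coordinates,
\begin{equation*}
  \frac{\partial}{\partial \phii{1}} = -\frac{\partial}{\partial y} + y\frac{\partial}{\partial z}, \qquad
  \frac{\partial}{\partial \phii{2}} = -\frac{\partial}{\partial x}, \qquad
  \frac{\partial}{\partial \phii{3}} = -\frac{\partial}{\partial z}.
\end{equation*}
Expanding the original field $a\,\partial/\partial x + b\,\partial/\partial y + c\,\partial/\partial z + \sum_{i>3} f^{(i)}\,\partial/\partial\phii{i}$ in the new basis and matching coefficients identifies $a = -f^{(2)}$, $b = -f^{(1)}$, $c = -f^{(1)}\phii{1} - f^{(3)}$, where $f^{(1)}, f^{(2)}, f^{(3)}$ are the new coefficients on $\partial/\partial\phii{1}, \partial/\partial\phii{2}, \partial/\partial\phii{3}$. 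Substituting into $Z = c - a z_x - b z_y$ yields the stated formula $Z = f^{(1)}(z_y - \phii{1}) + f^{(2)} z_x - f^{(3)}$, while the coefficient on $\partial/\partial\phii{i}$ for $i > 3$ becomes $f^{(i)} - a\Xiib{i} - b\Yiib{i} = f^{(i)} + f^{(2)}\Xiib{i} + f^{(1)}\Yiib{i} = \Phi^{(i)}$. A direct computation of $\Xiib{i}, \Yiib{i}$ for $i \in \{1, 2, 3\}$ from~\eqref{123} and~\eqref{eq:21} then confirms $\Phi^{(1)} = \Phi^{(2)} = 0$ and $\Phi^{(3)} = -Z$, consistent with the forced action of $\mathcal S$ on $\phii{1}, \phii{2}, \phii{3}$; prolongation to higher jet variables via $\tilde D_\Xi Z$ is automatic for an evolutionary field.

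For the symmetry criterion, the first equation of~\eqref{S cond phi(i)} becomes $\tilde\ell_{\mathcal E}(Z) = 0$ by the very definition of the linearisation, since $Z = \mathcal S(z)$. For $i > 3$ the remaining conditions $\mathcal S(\phii{i}_x - \Xiib{i}) = 0$ and $\mathcal S(\phii{i}_y - \Yiib{i}) = 0$ become~\eqref{nloc sym phi (i)} after using $\mathcal S(\phii{i}_x) = \tilde D_x \mathcal S(\phii{i}) = \tilde D_x \Phi^{(i)}$ (and its $y$-analogue), valid because $\mathcal S$ is evolutionary. For $i \in \{1, 2, 3\}$ the compatibility relations $\phii{i}_x = \Xiib{i}, \phii{i}_y = \Yiib{i}$ are tautological consequences of~\eqref{123} and~\eqref{eq:21}, imposing no extra constraint. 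The sole delicate ingredient will be the $y\,\partial/\partial z$ term in $\partial/\partial\phii{1}$ arising from the quadratic part of $\phii{3}$; this is precisely what produces the otherwise mysterious $-\phii{1}$ shift inside $Z$, and once this is accounted for the remainder is a routine rewriting.
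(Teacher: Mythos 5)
Your proposal is correct and follows the same route the paper intends: the paper's proof consists of the single sentence that formulas \eqref{Symm phii} follow by direct computation from \eqref{Symm xyz phi>3} via the substitutions \eqref{123}, and that \eqref{nloc sym phi (i)} follows immediately from \eqref{S cond phi(i)}; you have simply written out that computation, including the correct identifications $a=-f^{(2)}$, $b=-f^{(1)}$, $c=-f^{(1)}\phii{1}-f^{(3)}$ and the consistency checks $\Phi^{(1)}=\Phi^{(2)}=0$, $\Phi^{(3)}=-Z$. Your observation that the $-\phii{1}$ shift in $Z$ originates from the quadratic term in $z=-\phii{3}-\tfrac12(\phii{1})^2$ is accurate and is exactly the one non-trivial point of the computation.
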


\begin{proof}
  Formulas~\eqref{Symm phii} are obtained by direct computation, while
  \eqref{nloc sym phi (i)} follows from~\eqref{S cond phi(i)}
  immediately.
\end{proof}

Let us now pass from the covering~$\bar{\tau}_*$ with the nonlocal variables
$\phii{i}$ to the covering $\bar{\mathcal E}_\lambda \to \mathcal E$ obtained
from the covering~\eqref{phi covering} by means of the expansion~\eqref{phi
  expansion}, i.e.,
\begin{equation*}
  \phi = \frac{1}{\lambda}
  + \sum_i \phii{i} \lambda^i.
\end{equation*}
Then $\phi_\lambda = -1/\lambda^2 + \sum_i i \phii{i} \lambda^{i-1}$,
$\phi_{\lambda\lambda} = 2/\lambda^3 + \sum_i i (i - 1) \phii{i}
\lambda^{i-2}$, etc.  Hence,
\begin{equation*}
  \frac{\partial\phi}{\partial\phii{i}} = \lambda^i,
  \quad
  \frac{\partial\phi_\lambda}{\partial\phii{i}} = i \lambda^{i-1} =
  \frac{\diff\lambda^i}{\diff\lambda}, 
  \quad
  \frac{\partial\phi_{\lambda\lambda}}{\partial\phii{i}} = i (i - 1) \lambda^{i-2}
  = \frac{\diff^2\lambda^i}{\diff\lambda^2}, \dots
\end{equation*}
and, therefore,
\begin{equation*}
  \frac{\partial}{\partial\phii{i}} = \sum_\Lambda \partial_\Lambda \lambda^i
  \frac\partial{\partial\phi_\Lambda}, 
\end{equation*}
where, as above, $\Lambda$ stands for $\lambda^n$, $n\geq0$, and
$\partial_{\lambda^n} = {d^n}/{d\lambda^n}$.  Alternatively speaking, the
vector field $\partial/\partial\phii{i}$, when rewritten in the coordinates
$\lambda,\phi_\Lambda$, is the prolongation of
$\lambda^i\,\partial/\partial\phi$.  For completeness, we note that the vector
field written as $\partial/\partial\lambda$ in the coordinates
$\lambda,\phii1,\phii2,\phii3, \dots$ becomes
\begin{equation*}
  \frac{\partial}{\partial\lambda}
  + \phi_\lambda \frac{\partial}{\partial\phi}
  + \phi_{\lambda\lambda} \frac{\partial}{\partial\phi_\lambda}
  + \dots
  = \frac{\partial}{\partial\lambda}
  + \sum_\Lambda \phi_{\lambda\Lambda} \frac{\partial}{\partial\phi_\Lambda} =
  D_\lambda 
\end{equation*}
in the coordinates $\lambda,\phi_\Lambda$.

\begin{proposition} \label{prop:Symm phi_Lambda} In terms of the coordinates
  $\phi_\Lambda$ and $z_\Xi$\textup{,} $|\Xi| > 0$\textup{,} a vertical
  infinitely prolonged field in the covering $\bar{\tau}_\lambda$ can be
  written as
  \begin{equation}
    \label{Symm phi_Lambda}
    \mathcal S =
    \sum_{\Lambda} \partial_\Lambda \Phi
    \frac{\partial}{\partial\phi_\Lambda}
    + \sum_{|\Xi| > 0} \tilde D_\Xi Z \frac\partial{\partial z_\Xi},
  \end{equation}
  where
  \begin{equation}
    \label{Z Phi f}
    \begin{aligned}
      &Z = (z_y - \phii{1}) f^{(1)} + z_x f^{(2)} -  f^{(3)},\\
      &\Phi = f - \frac{f^{(2)} + f^{(1)}\,(\phi - z_x)}{\phi^2 - z_x \phi -
        z_y},\\ 
      &f = \sum_{i > 0} f^{(i)}\lambda^i.
    \end{aligned}
  \end{equation}
  The field $\mathcal S$ is a symmetry if and only if
  $\tilde\ell_{\mathcal E}(Z) = 0$\textup{,} see~\eqref{eq:tilde ell
    GT}\textup{,} and
  \begin{equation}
    \label{nloc sym phi}
    \begin{aligned}
      &\tilde D_x \Phi
      + \frac{\phi \tilde D_x Z + \tilde D_y Z - (2\phi - z_x) \Phi}{(\phi^2 -
        z_x \phi - z_y)^2} 
      = 0, \\
      &\tilde D_y \Phi
      + \frac{z_y \tilde D_x Z + (\phi - z_x) \tilde D_y Z - ((\phi - z_x)^2 +
        z_y) \Phi} 
      {(\phi^2 - z_x \phi - z_y)^2}
      = 0.
    \end{aligned}
  \end{equation}
\end{proposition}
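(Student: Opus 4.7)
The plan is to derive the statement from the previous proposition by translating from the $\phii{i}$-coordinates on $\bar{\mathcal E}_*$ to the $\phi_\Lambda$-coordinates on $\bar{\mathcal E}_\lambda$, using the relation $\partial/\partial\phii{i} = \sum_\Lambda \partial_\Lambda \lambda^i \cdot \partial/\partial\phi_\Lambda$ displayed just before the statement. Since $\Phi^{(i)}$ does not depend on $\lambda$, this gives immediately
\begin{equation*}
  \sum_{i>0} \Phi^{(i)} \frac{\partial}{\partial \phii{i}}
  = \sum_\Lambda \partial_\Lambda \Phi \cdot \frac{\partial}{\partial\phi_\Lambda},
  \qquad \Phi := \sum_{i>0} \Phi^{(i)} \lambda^i,
\end{equation*}
which produces the $\phi_\Lambda$-part of the vector field~\eqref{Symm phi_Lambda}. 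Substituting $\Phi^{(i)} = f^{(i)} + f^{(2)} \Xiib{i} + f^{(1)} \Yiib{i}$ from~\eqref{Symm phii} and observing that, under the expansion~\eqref{phi expansion}, the formal identities $\tilde D_x \phi = \sum_{i>0} \Xiib{i} \lambda^i = \phi_x$ and $\tilde D_y \phi = \sum_{i>0} \Yiib{i} \lambda^i = \phi_y$ hold (the terms with $i\le 0$ drop out since $\phii{-1}=1$ and $\phii{0}=0$ are constants), we obtain $\Phi = f + f^{(2)} \phi_x + f^{(1)} \phi_y$. Substituting the closed-form expressions for $\phi_x,\phi_y$ from~\eqref{phi covering} then yields the formula for $\Phi$ in~\eqref{Z Phi f}, while $Z$ is unchanged from~\eqref{Symm phii}.

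For the symmetry conditions I would argue directly in the $\phi_\Lambda$-coordinates rather than translating~\eqref{nloc sym phi (i)} coefficient by coefficient. In these coordinates the covering~$\bar{\tau}_\lambda$ is defined by the Gibbons--Tsarev equation~\eqref{eq:21} together with~\eqref{phi covering} and the dummy condition~\eqref{dummy}, and a vertical evolutionary field is a symmetry precisely when it preserves each of these equations. Equation~\eqref{eq:21} yields $\tilde\ell_{\mathcal E}(Z) = 0$; the equation $z_\lambda = 0$ is automatic since $Z$ in~\eqref{Z Phi f} does not depend on~$\lambda$. Applying~$\mathcal S$ to the first equation of~\eqref{phi covering}, rewritten as $(\phi^2 - z_x \phi - z_y)\phi_x + 1 = 0$, and using $\mathcal S(\phi) = \Phi$, $\mathcal S(\phi_x) = \tilde D_x \Phi$, $\mathcal S(z_x) = \tilde D_x Z$, $\mathcal S(z_y) = \tilde D_y Z$, after dividing by $\phi^2 - z_x \phi - z_y$ and re-using~\eqref{phi covering} to eliminate $\phi_x$, we arrive at the first equation of~\eqref{nloc sym phi}. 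The second equation is obtained analogously by applying $\mathcal S$ to the second relation of~\eqref{phi covering}.

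The main obstacle is the bookkeeping with formal Laurent series in $\lambda$: one must verify that all relevant operations are well defined as formal identities, that $\tilde D_x$ and $\tilde D_y$ commute with Laurent expansion (justifying identities like $\tilde D_x \phi = \sum_i (\tilde D_x \phii{i})\lambda^i$), and that substituting~\eqref{phi covering} at the formal level is admissible inside the expressions involving $1/(\phi^2 - z_x\phi - z_y)$, a power series whose leading term is $\lambda^2$ and which is therefore invertible as a formal Laurent series. Once this is in place, the computation sketched above is straightforward but tedious; as a sanity check, expanding~\eqref{nloc sym phi} in powers of~$\lambda$ should reproduce the conditions~\eqref{nloc sym phi (i)} of the previous proposition, providing an independent verification via the equivalence already established in step one.
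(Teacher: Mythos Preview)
Your proposal is correct and follows essentially the same route as the paper's own proof: pass from the $\phii{i}$-coordinates to the $\phi_\Lambda$-coordinates via $\partial/\partial\phii{i}=\sum_\Lambda\partial_\Lambda\lambda^i\,\partial/\partial\phi_\Lambda$, sum $\Phi^{(i)}\lambda^i$ to $f+f^{(2)}\phi_x+f^{(1)}\phi_y$ and substitute~\eqref{phi covering}, and then obtain~\eqref{nloc sym phi} by applying $\mathcal S$ to the covering relations~\eqref{phi covering}. The paper phrases the last step as $\tilde D_x\Phi+\mathcal S\bigl(1/(\phi^2-z_x\phi-z_y)\bigr)=0$ and its $y$-analogue, which is exactly your computation after dividing through; your extra remarks on formal-series well-definedness are a welcome addition but not a different idea.
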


\begin{proof}
  Formula~\eqref{Symm phii} can be rewritten as
  \begin{equation*}
    \begin{aligned}
      \mathcal S &=
      \sum_{i > 0} \Phi^{(i)}
      \frac{\partial}{\partial\phii{i}}
      + \sum_{|\Xi| > 0} \tilde D_\Xi Z \frac\partial{\partial z_\Xi}
      \\
      &=
      \sum_\Lambda
      \partial_{\Lambda} \Bigl(\sum_{i > 0}\Phi^{(i)}\lambda^i\Bigr)
      \frac\partial{\partial\phi_\Lambda}
      + \sum_{|\Xi| > 0} \tilde D_\Xi Z \frac\partial{\partial z_\Xi},
    \end{aligned}
  \end{equation*}
  where $Z = (z_y - \phii1) f^{(1)} + z_x f^{(2)} - f^{(3)}$, and
  \begin{equation*}
    \begin{aligned}
      \sum_{i > 0}\Phi^{(i)}\lambda^i &= \sum_{i > 0} f^{(i)}\lambda^i
      + f^{(2)} \sum_{i > 0} \Xiib{i}\lambda^i + f^{(1)} \sum_{i > 0}
      \Yiib{i}\lambda^i 
      \\
      &= f
      + f^{(2)} \sum_{i > 0} \phii{i}_x\lambda^i + f^{(1)} \sum_{i > 0}
      \phii{i}_y\lambda^i\\ 
      &= f + f^{(2)} \phi_x + f^{(1)} \phi_y
      = f - \frac{f^{(2)} + f^{(1)}\,(\phi - z_x)}{\phi^2 - z_x \phi - z_y},
    \end{aligned}
  \end{equation*}
  where $f = \sum_{i > 0} f^{(i)}\lambda^i$.  Hence formulas~\eqref{Z Phi f}.

  Now, $\mathcal S$ is a symmetry if and only if
  \begin{equation*}
    \tilde D_x \Phi
    + \mathcal S \left(\frac{1}{\phi^2 - z_x \phi - z_y}\right) = 0, \qquad
    \tilde D_y \Phi
    + \mathcal S \left(\frac{\phi - z_x}{\phi^2 - z_x \phi - z_y}\right) = 0.
  \end{equation*}
  These are formulas~\eqref{nloc sym phi}.
\end{proof}

The last proposition suggests the following construction. Let
  \begin{equation*}
    f = \sum_{i > 0} f^{(i)} \lambda^i,
  \end{equation*}
where the coefficients~$ f^{(i)}$ are independent of~$\lambda$. Then we set
  \begin{equation*}
    \PBS f = \sum_{i > 0} f^{(i)} \frac{\partial}{\partial\phii{i}}.
  \end{equation*}
Transforming to the coordinates $\lambda,\phi,\dots,\phi_\Lambda,\dots$, we
obtain
\begin{align*}
  \PBS f = \sum_{\Lambda} \partial_{\Lambda} f
  \frac{\partial}{\partial\phi_\Lambda}
\end{align*}
which is the usual prolongation of a vertical generator
$f\,\partial/\partial\phi$.  Obviously,
\begin{equation}
  [\PBS f, \PBS g] = \PBS{\{f,g\}},
  \quad
  \{f,g\} = \PBS f g - \PBS g f.
\end{equation}
Using this notation, the symmetries we are looking for can be written as
\begin{equation*}
  \mathcal S =
  \PBS\Phi
  + \sum_{|\Xi| > 0} \tilde D_\Xi Z \frac\partial{\partial z_\Xi},
\end{equation*}
where
\begin{equation*}
  \begin{aligned}
    &Z = (z_y - \phii{1}) f^{(1)} + z_x f^{(2)} -  f^{(3)},\\
    &\Phi = f - \frac{f^{(2)} + f^{(1)}\,(\phi - z_x)}{\phi^2 - z_x \phi - z_y},\\
    &f = \sum_{i > 0} f^{(i)}\lambda^i,
  \end{aligned}
\end{equation*}
and should satisfy $\tilde\ell_{\mathcal E}(Z) = 0$, see~\eqref{eq:tilde ell
  GT}, as well as Equations~\eqref{nloc sym phi}.

Now, using the formal series
\begin{equation}
  \label{triv symm}
  f = \lambda^n \phi_\lambda
  = -\lambda^{n-2} + \sum_{i \ge 1} i \phii{i} \lambda^{n+i-1}
\end{equation}
and the field $\PBS{f}$ we shall show that all the shadows described in
Subsection~\ref{subs:shadows} are lifted to a nonlocal symmetry
in~$\bar{\tau}_*$. The proof depends on the integer~$n$.

\textbf{The case $n\ge 3$.} The series $\lambda^n \phi_\lambda$ is of the form
required by the definition of $\PBS f$.  In all these cases, conditions
\eqref{eq:tilde ell GT} and~\eqref{nloc sym phi} are easily checked by
straightforward computation, which is omitted.

For $n = 3$, we have
$ f = \lambda^3 \phi_\lambda = -\lambda + \sum_{i \ge 1} i \phii{i}
\lambda^{2+i}$,
i.e., $f^{(1)} = -1$, $f^{(2)} = 0$, $f^{(3)} = \phii{1}$.  In this case,
$Z = -z_y = -\Zi{-2}$, i.e., we obtain the lift
\begin{equation*}
  \PBS{\lambda^{3}\phi_\lambda} = -\frac{\partial}{\partial\phii{1}}
  + \sum_{i \ge 1} i \phii{i} \frac{\partial}{\partial\phii{2+i}}
\end{equation*}
of the $y$-translation.

For $n = 4$, we have
$ f = \lambda^4 \phi_\lambda = -\lambda^{2} + \sum_{i \ge 1} i \phii{i}
\lambda^{3+i}$,
i.e., $f^{(1)} = f^{(3)} = 0$, $f^{(2)} = -1$.  In this case,
$Z = -z_x = -\Zi{-3}$, i.e., we obtain the lift
\begin{equation*}
  \PBS{\lambda^{4}\phi_\lambda} = -\frac{\partial}{\partial\phii{2}}
  + \sum_{i \ge 1} i \phii{i} \frac{\partial}{\partial\phii{3+i}}
\end{equation*}
of the $x$-translation.

If $n = 5$, then
$ f = \lambda^5 \phi_\lambda = -\lambda^{3} + \sum_{i \ge 1} i \phii{i}
\lambda^{4+i}$,
i.e., $f^{(1)} = f^{(2)} = 0$, $f^{(3)} = -1$.  Obviously, $Z = 1 = \Zi{-4}$,
i.e., we recover the first classical symmetry and obtained its lift
\begin{equation*}
  \Si{-4} = \PBS{\lambda^{5}\phi_\lambda} = -\frac{\partial}{\partial\phii{3}}
  + \sum_{i \ge 1} i \phii{i} \frac{\partial}{\partial\phii{4+i}}.
\end{equation*}

If $n \ge 6$, then the coefficients $f^{(1)}$, $f^{(2)}$, $f^{(3)}$ are
zero. Obviously, $Z = 0$ and we obtain the invisible symmetries
\begin{equation}
  \label{S1-n}
  \Si{1-n} =\PBS{\lambda^{n}\phi_\lambda} =
  -\frac{\partial}{\partial\phii{n-2}} 
  + \sum_{i \ge 1} i \phii{i} \frac{\partial}{\partial\phii{n+i-1}},
  \quad n \ge 6.
\end{equation}

\textbf{The case $n< 3$.} In this case, the series~\eqref{triv symm} contains
non-positive terms and so we cannot construct the corresponding
field~$\PBS{f}$ directly.  To overcome this problem, we do the following.

For any formal series $T=\sum_{i} c_i t^i$ we use the notation~$[t^n] T =
c_n$ and define the operators
\begin{equation*}
   P_\varphi f
   = \sum_{k = 0}^m [\varphi^{k}] f \cdot \varphi^{k},\qquad
   \mathfrak P_\varphi f = f - P_\varphi f =f-\sum_{k=0}^{m}[\varphi^{k}] f
  \cdot \varphi^{k}.
\end{equation*}
Then $\mathfrak P_\varphi f$ is a positive series in~$\lambda$ and a polynomial
in~$\phi$.

Consider now the family
$f_{n} = \mathfrak P_\varphi(\lambda^n \varphi_\lambda)$, $n\leq2$.  The first
members are
$ f_{2}= \mathfrak P_\varphi(\lambda^2 \varphi_\lambda) = \lambda^2
\varphi_\lambda + 1, f_{1}=\mathfrak P_\varphi(\lambda \varphi_\lambda) =
\lambda \varphi_\lambda + \varphi, f_{0}=\mathfrak P_\varphi(\varphi_\lambda)
= \varphi_\lambda + \varphi^2 - 3\phii{1}$, etc.

\begin{proposition}\label{sec:hier-full-symm-1}
  For any $n\leq2$\textup{,} all the vector fields $\mathcal{S}$ of the form
  \eqref{Symm phi_Lambda} with
  \begin{equation}\label{fngen}
    f = f_{n} = \mathfrak P_\varphi\zav{\varphi_\lambda
      \lambda^{n}}
  \end{equation}
  are nonlocal symmetries in~$\bar{\tau}_*$.
\end{proposition}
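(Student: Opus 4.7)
The plan is to verify that the vector field $\mathcal{S}$ determined by $f_n$ via~\eqref{Symm phi_Lambda} satisfies $\tilde{\ell}_{\mathcal{E}}(Z) = 0$ and the compatibility equations~\eqref{nloc sym phi} of Proposition~\ref{prop:Symm phi_Lambda}. The underlying geometric picture is that $\lambda^n \phi_\lambda$ is the infinitesimal generator of the $\lambda$-reparametrisation $\lambda \mapsto \lambda + \varepsilon \lambda^n$, and this reparametrisation is a formal symmetry of the covering equations~\eqref{phi covering} for every $n \in \mathbb{Z}$ because the right-hand sides in~\eqref{phi covering} depend on $\lambda$ only through $\phi$.

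First, I would verify by direct substitution that, as formal Laurent series in $\lambda$, the naive generator $F = \lambda^n \phi_\lambda$ satisfies the symmetry conditions for every $n$, not only for $n \ge 3$; the computation for $n \ge 3$ sketched in the preceding subsection extends verbatim. The obstruction to promoting $F$ to a valid generator of the form required for $\PBS{\cdot}$ is that, for $n \le 2$, its expansion contains non-positive $\lambda$-powers, namely $-\lambda^{n-2}$ and possibly $i \phii{i} \lambda^{n+i-1}$ with $n + i - 1 \le 0$, which would require components along $\partial/\partial\phii{i}$ with $i \le 0$ that do not exist on $\bar{\mathcal{E}}_*$. The operator $\mathfrak{P}_\phi$ is designed to cancel exactly these non-positive-power terms by subtracting a polynomial in $\phi$ with coefficients polynomial in the $\phii{j}$, producing $f_n = F - P_\phi(F)$, a formal power series in $\lambda$ with strictly positive exponents; consequently, $\PBS{f_n}$ is a well-defined vector field on $\bar{\mathcal{E}}_*$.

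The main obstacle is to verify that subtraction by $P_\phi(F)$ preserves the symmetry conditions. Naively, adding a polynomial in $\phi$ to a symmetry generator does \emph{not} in general yield another symmetry generator---for instance, the constant $g = 1$ produces $Z_g = 0$ and $\Phi_g = 1$, which violates~\eqref{nloc sym phi}. The specific $P_\phi(\lambda^n \phi_\lambda)$, however, is tuned so that its defective contributions to $(Z, \Phi)$ and to the compatibility equations cancel precisely against the non-positive-$\lambda$-power defects of $F$ itself (which are harmless when $\lambda$ is treated as an honest independent variable in $\bar{\mathcal{E}}_\lambda$, but inconsistent on $\bar{\mathcal{E}}_*$). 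Concretely, one extracts the coefficients $f_n^{(1)}, f_n^{(2)}, f_n^{(3)}$ from the Laurent expansion of $f_n$, forms $Z$ and $\Phi$ via~\eqref{Z Phi f}, and checks the symmetry conditions using the explicit formulas~\eqref{phi covering} together with the recurrences~\eqref{XYb} for $\phii{i}_x, \phii{i}_y$. The argument is straightforward in principle but rather bookkeeping-heavy, and uniformity in $n$ is the most delicate aspect to handle cleanly.
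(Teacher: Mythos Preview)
Your plan correctly identifies the target---verifying $\tilde{\ell}_{\mathcal{E}}(Z) = 0$ and conditions~\eqref{nloc sym phi}---and the reparametrisation intuition is sound (it is indeed what underlies the cases $n\ge 3$). But the proposal stops short of a proof at exactly the point that matters. You correctly observe that subtracting $P_\varphi(F)$ does \emph{not} in general preserve the symmetry conditions, and then defer to ``direct verification'' while conceding that ``uniformity in $n$ is the most delicate aspect to handle cleanly.'' That uniformity is the entire content of the proposition. For each $n\le 2$ the coefficients $f_n^{(1)}, f_n^{(2)}, f_n^{(3)}$ are different polynomials in the $\phii{j}$, determined by the projections $[\varphi^k]\,\lambda^n\varphi_\lambda$ for $k=0,\dots,2-n$; without a closed expression valid for all $n$ you are left with infinitely many separate checks and no argument.

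The paper supplies the missing mechanism by introducing a second formal parameter $\xi$ and packaging all the $f_{1-n}$ into a single generating function $f(\lambda,\xi)=\sum_{n\ge -4}\xi^{n-2}f_{1-n}$. A formal-residue identity (Lemma~\ref{genfunctionlemma}) converts the projection coefficients $[\varphi^k]\,\varphi_\lambda/\lambda^{n-1}$ into $[\xi^{n-2}]\,\varphi_\xi^2/\varphi^{k+1}$, which sums to the closed form
\[
f(\lambda,\xi)=\Bigl(\frac{\lambda}{\xi}\Bigr)^{6}\frac{\varphi_\lambda(\lambda)}{\lambda-\xi}+\frac{\varphi_\xi(\xi)^2}{\varphi(\xi)-\varphi(\lambda)},
\]
together with explicit $f^{(1)},f^{(2)},f^{(3)}$ as functions of~$\xi$. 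These yield $Z=-(\varphi(\xi)^2-z_x\varphi(\xi)-z_y)\varphi_\xi(\xi)^2$, which is precisely the generating series~\eqref{eq:almSymm} of shadows already treated in Proposition~\ref{almost symmetry}, and a closed $\Phi$ for which~\eqref{nloc sym phi} reduces to a single two-parameter computation using only the covering relations~\eqref{phi covering}. This generating-function/residue device is the key idea your outline lacks.
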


\begin{example}
  To illustrate how the construction works, let us discuss the case of $n = 1$
  in more detail.  We have
  \begin{equation*}
    \lambda \phi_\lambda = -\frac1\lambda
    + \phii{1} \lambda + 2 \phii{2} \lambda^2 + 3 \phii{3} \lambda^3 + \dots
  \end{equation*}
  and
  \begin{equation*}
    f_1 = \mathfrak{P}_{\lambda \phi_\lambda} = \lambda \phi_\lambda + \phi
    = 2 \phii{1} \lambda + 3 \phii{2} \lambda^2 + 4 \phii{3} \lambda^3 + \dots  
  \end{equation*}
  Consequently, $f_1^{(i)} = (i + 1) \phii{i}$. In particular,
  $f^{(1)} = 2 \phii{1} = -2y$, $f^{(2)} = 3 \phii{2} = -3x$,
  $f^{(3)} = 4 \phii{3} = -4z - 2 y^2$.  Substituting into formulas~\eqref{Z
    Phi f}, we get
  \begin{equation*}
    \begin{aligned}
      Z & = (z_y - \phii{1}) f^{(1)} + z_x f^{(2)} -  f^{(3)}
      = -2y z_y - 3x z_x + 4z
      = -Z^{(0)},\\
      \Phi &= \lambda \phi_\lambda + \phi + \frac{3x + 2y\,(\phi -
        z_x)}{\phi^2 - z_x \phi - z_y}. 
    \end{aligned}
  \end{equation*}
  Thus, we have the lift of the scaling symmetry.
\end{example}

To prove Proposition~\ref{sec:hier-full-symm-1}, we introduce the
\textit{generating function}
\begin{equation}\label{genfunctionbl}
  f= f(\lambda,\xi)=\sum_{n=-4}^{\infty}\xi^{n-2} f_{1-n}.
\end{equation}
If we show that $f$ satisfies equations \eqref{nloc sym phi}, then,
by linearity, \eqref{nloc sym phi} will be satisfied for all $f_n$. 
To turn this observation into a proof, we need an analytic expression 
for $f$ and also for the first three coefficients $f^{(i)}$ of the expansion
$f = \sum_i f^{(i)}(\xi) \lambda^i$.

\begin{lemma}\label{genfunctionlemma} The generating function
  \eqref{genfunctionbl} admits the representation
  \begin{equation*}
    f=\zav{\frac{\lambda}{\xi}}^{6}\frac{\varphi_\lambda(\lambda)}{\lambda-\xi}
    +\frac{\varphi_\xi(\xi)^2}{\varphi(\xi)-\varphi(\lambda)}. 
  \end{equation*}
  In addition\textup{,} 
  \begin{equation*}
    f^{(1)}=-\varphi_\xi(\xi)^2,\qquad
    f^{(2)}=-\varphi_\xi(\xi)^2\varphi(\xi),\qquad 
    f^{(3)}=\varphi_\xi(\xi)^2\zav{\varphi^{(1)}-\varphi(\xi)^2}.
  \end{equation*}
\end{lemma}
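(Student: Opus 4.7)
My approach will be to verify the identity at the level of formal series in $\lambda$ and $\xi$, and then to read off the $f^{(i)}$ by expanding in $\lambda$. The key tool I will employ is the following uniqueness principle: a Laurent series $h$ in $\lambda$ admits at most one decomposition $h = R(\varphi(\lambda)) + g(\lambda)$ with $R$ a polynomial in $\varphi$ and $g$ a strictly positive power series in $\lambda$. Indeed, the term $c_d \lambda^{-d}$ in $\sum_{k=0}^{d} c_k \varphi^k$ (the leading part of $\varphi^d$) is not cancelled by any $\varphi^j$ with $j < d$, so if the sum equals a positive series in $\lambda$ then $c_d = 0$; iterating down in $k$ forces every $c_k = 0$. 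Since $f_m = \mathfrak P_\varphi(\lambda^m \varphi_\lambda)$ is itself the $g$-part in such a decomposition of $\lambda^m \varphi_\lambda$ (trivially so for $m \geq 3$, where $P_\varphi = 0$), it suffices to verify that the coefficient of $\xi^{-m-1}$ in the right-hand side of the claimed formula has the form ``$\lambda^m \varphi_\lambda$ plus a polynomial in $\varphi(\lambda)$'' and is a strictly positive power series in $\lambda$.

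For the first of these conditions, the geometric expansion $1/(\lambda-\xi) = \sum_{j\geq 0} \xi^j/\lambda^{j+1}$ gives
$$\left(\frac{\lambda}{\xi}\right)^6 \frac{\varphi_\lambda(\lambda)}{\lambda-\xi} = \sum_{m \leq 5} \xi^{-m-1} \lambda^m \varphi_\lambda(\lambda),$$
so the first summand contributes exactly $\lambda^m \varphi_\lambda$ to each coefficient. For the second summand I expand $1/(\varphi(\xi)-\varphi(\lambda)) = \sum_{k \geq 0} \varphi(\lambda)^k/\varphi(\xi)^{k+1}$; since $\varphi_\xi(\xi)^2/\varphi(\xi)^{k+1}$ has $\xi$-order $k-3$, only the indices $0 \leq k \leq 2-m$ contribute to the $\xi^{-m-1}$-coefficient, producing a polynomial in $\varphi(\lambda)$ of degree at most $2-m$ (and nothing when $m \geq 3$, consistent with $f_m = \lambda^m \varphi_\lambda$ in that range).

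For the positivity in $\lambda$, the first summand starts at $\lambda^4$ because $\lambda^6 \varphi_\lambda = -\lambda^4 + O(\lambda^6)$. For the second summand I rewrite
$$\varphi(\xi) - \varphi(\lambda) = -\lambda^{-1}\Bigl(1 - \lambda\varphi(\xi) + \sum_{i \geq 1} \phii{i} \lambda^{i+1}\Bigr)$$
and invert; multiplication by $\varphi_\xi(\xi)^2$ then produces a positive power series in $\lambda$ with low-order terms
$$\frac{\varphi_\xi(\xi)^2}{\varphi(\xi) - \varphi(\lambda)} = -\varphi_\xi^2 \lambda - \varphi_\xi^2 \varphi(\xi) \lambda^2 + \varphi_\xi^2 \bigl(\phii{1} - \varphi(\xi)^2\bigr) \lambda^3 + O(\lambda^4).$$
Hence each $\xi^{-m-1}$-coefficient of the right-hand side is a positive power series in $\lambda$, and by the uniqueness principle it coincides with $f_m$; the main identity follows. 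The explicit values of $f^{(1)}, f^{(2)}, f^{(3)}$ are then read off directly from the last display, since the first summand contributes nothing to the coefficients of $\lambda^1, \lambda^2, \lambda^3$. I expect the main subtlety to be the careful bookkeeping for the uniqueness principle, specifically the justification that the leading terms $\lambda^{-k}$ of $\varphi^k$ constitute independent obstructions that force every $c_k$ to vanish before one can conclude the identification with $f_m$.
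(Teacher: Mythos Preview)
Your uniqueness principle is correct and the overall strategy—identify $f_m$ as the unique strictly positive $\lambda$-series that differs from $\lambda^m\varphi_\lambda$ by a polynomial in $\varphi(\lambda)$—is sound and genuinely different from the paper's route. The paper instead computes the polynomial part explicitly: its Step~1 proves the Lagrange-type residue identity
\[
[\varphi(\xi)^k]\frac{\varphi'(\xi)}{\xi^{n-1}} = -[\xi^{n-2}]\frac{\varphi'(\xi)^2}{\varphi(\xi)^{k+1}}
\]
via the compositional inverse $\psi$ of $\varphi$, which directly matches the coefficients $[\varphi^k](\lambda^{1-n}\varphi_\lambda)$ appearing in $P_\varphi$ with the $\xi$-coefficients of the second summand. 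Your argument avoids this identity entirely, which is elegant.

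There is, however, a genuine gap. In part (a) you compute the $\xi^{-m-1}$-coefficient using the expansions $1/(\lambda-\xi)=\sum_{j\ge 0}\xi^j\lambda^{-j-1}$ and $1/(\varphi(\xi)-\varphi(\lambda))=\sum_{k\ge 0}\varphi(\lambda)^k\varphi(\xi)^{-k-1}$; in part (b) you switch to the \emph{opposite} expansions (treating $\lambda$ as small) to argue positivity. For each summand separately these two conventions give \emph{different} double formal series: the discrepancy for the first summand is $(\lambda/\xi)^6\varphi_\lambda\cdot\delta(\lambda,\xi)=\varphi'(\xi)\,\delta(\lambda,\xi)$, where $\delta(\lambda,\xi)=\sum_{n\in\mathbb Z}\xi^n\lambda^{-n-1}$, and one checks that the second summand produces exactly $-\varphi'(\xi)\,\delta(\lambda,\xi)$. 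So the two interpretations of the \emph{total} right-hand side do coincide, but this cancellation is itself a non-trivial fact that you have not proved; without it you cannot transport the positivity established in (b) to the $\xi$-coefficients computed in (a). Once you supply that cancellation argument (or, equivalently, verify directly that the $\xi$-coefficients from (a) are positive in $\lambda$, which amounts to the paper's residue identity), your proof goes through and your reading of $f^{(1)},f^{(2)},f^{(3)}$ from the $\lambda$-small expansion is then legitimate. The subtlety you flagged about the uniqueness principle is not the real issue; the expansion-consistency is.
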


\begin{proof}
  We break the proof into several steps.

  \medskip
  \noindent \textit{Step 1.} We show that
  \begin{equation*}
    \hzav{\varphi(\xi)^k}\frac{\varphi_\xi(\xi)}{\xi^{n-1}}=
    -\hzav{\xi^{n-2}}\frac{\varphi_\xi(\xi)^2}{\varphi(\xi)^{k+1}}. 
  \end{equation*}
  Recall that the `formal residue' of the Laurent series
  $g(\xi)=\sum_{k=-\infty}^{\infty} g_k \xi^k$ is defined by
  \begin{equation*}
    \Res{g}=\hzav{\xi^{-1}}g(\xi)=g_{-1}.
  \end{equation*}
  It is straightforward to check that it has the following properties:
  \begin{eqnarray*}
    \Res{\alpha g+\beta h}&=&\alpha \Res{g}+\beta\Res{h},\qquad \alpha,\beta
    =\const, \\ 
    \Res{g^\prime}&=& 0,\\
    \hzav{\xi^n}g &=&\Res{\frac{g}{\xi^{n+1}}},\\
    \Res{(g(h)) h'}&=&\Res{\frac{h^\prime}{h}}\cdot\Res{g},
  \end{eqnarray*}
  where the `prime' denotes the $\xi$-derivative and $g(h)$ is the
  composition of formal series.  The last property is valid for all Laurent series $g$ of the
  form $
    g=g_{-n}\xi^{-n}+g_{1-n}\xi^{1-n}+\cdots + g_0+g_1 \xi+g_2 \xi^2+\cdots$,
  i.e., whose principal part is finite, and for $h$ of the form $h=
  \xi^m(h_0+h_1 \xi+h_2 \xi^2+\cdots )$ or $h=\xi^{-m}(h_0+h_1 \xi^{-1}+h_2
  \xi^{-2}+\cdots)$, where $m>0$.

  We are going to use the last property in the form
  \begin{equation*}
    \Res{g}=\frac{\Res{(g\circ h) h'}}{\Res{({h'}/{h})}},
  \end{equation*}
  with $g(\omega)=\psi(\omega)$ and $h(\xi)=\varphi(\xi)$, where $\psi$ is the
  compositional inverse of $\varphi$,
  i.e. $\psi(\varphi(\lambda))=\varphi(\psi(\lambda))=\lambda$.  In other
  words $\omega=\varphi(\xi)$, $\xi=\psi(\omega)$. Since~$\phi$ and~$\psi$ are
  compositionally mutually inverse, one has
  \begin{equation*}
    \varphi'(\psi(\omega))=\frac{1}{\psi'(\omega)},\qquad
    \psi'(\varphi(\xi))=\frac{1}{\varphi'(\xi)},
  \end{equation*}
  and using the obvious identity $\Res{({\varphi'(\xi)}/{\varphi(\xi)})}=-1$,
  we obtain
  \begin{align*}
    \hzav{\varphi(\xi)^k}\frac{\varphi'(\xi)}{\xi^{n-1}}& =
    \hzav{\omega^k}\frac{1}{\psi'(\omega)\psi(\omega)^{n-1}} =
    \Res{\frac{1}{\psi'(\omega)\psi(\omega)^{n-1}\omega^{k+1}}}\\ 
    &=-\Res{\frac{\varphi'(\xi)}{\psi'(\varphi(\xi))\xi^{n-1}\varphi(\xi)^{k+1}}
    }=
    -\Res{\frac{\varphi'(\xi)^2}{\xi^{n-1}\varphi(\xi)^{k+1}}}=
    -\hzav{\xi^{n-2}}\frac{\varphi'(\xi)^2}{\varphi(\xi)^{k+1}}.
  \end{align*}
  \medskip

  \noindent\textit{Step 2.} Substituting this result into the definition of
  $f_n$ we obtain
  \begin{equation}\label{fncoeff}
    f_{1-n}=\frac{\varphi'(\lambda)}{\lambda^{n-1}} +
    \sum_{k=0}^{n+1}\varphi(\lambda)^k\hzav{\xi^{n-2}}
    \frac{\varphi'(\xi)^2}{\varphi(\xi)^{k+1}},  
  \end{equation}
  where, as a matter of fact, we can extend the upper summation bound to
  infinity since
  \begin{equation*}
    \hzav{\xi^{n-2}}\frac{\varphi'(\xi)^2}{\varphi(\xi)^{k+1}}=0,\qquad k>n+1.
  \end{equation*}
  Thus, we obtain
  \begin{equation}\label{neargenfun}
    f_{1-n}=\frac{\varphi'(\lambda)}{\lambda^{n-1}}+\hzav{\xi^{n-2}}
    \sum_{k=0}^{\infty}\varphi(\lambda)^k
    \frac{\varphi'(\xi)^2}{\varphi(\xi)^{k+1}}=   
    \frac{\varphi'(\lambda)}{\lambda^{n-1}}+
    \hzav{\xi^{n-2}}\frac{\varphi'(\xi)^2}{\varphi(\xi)-\varphi(\lambda)}. 
  \end{equation}
  \medskip

  \noindent\textit{Step 3.} To get the closed formula sought  for the
  generating  function
  \begin{equation*}
    f(\lambda,\xi)=\sum_{n=-4}^{\infty} \xi^{n-2}f_{1-n},
  \end{equation*}
  it remains to use the identities
  \begin{equation*}
    \sum_{n=-4}^{\infty} \xi^{n-2}\frac{\varphi'(\lambda)}{\lambda^{n-1}}=
    \zav{\frac{\lambda}{\xi}}^6\frac{\varphi'(\lambda)}{\lambda-\xi},\qquad
    \sum_{n=-4}^{\infty}
    \xi^{n-2}\hzav{\xi^{n-2}}
    \frac{\varphi'(\xi)^2}{\varphi(\xi)
      -\varphi(\lambda)}=\frac{\varphi'(\xi)^2}{\varphi(\xi)-
      \varphi(\lambda)}. 
  \end{equation*}
  \medskip

  \noindent\textit{Step 4.} Computation of the first three coefficients
  \begin{equation*}
    f^{(1)}=\hzav{\lambda} f(\lambda,\xi),\qquad f^{(2)}=\hzav{\lambda^2}
    f(\lambda,\xi), \qquad f^{(3)}=\hzav{\lambda^3} f(\lambda,\xi)
  \end{equation*}
  is straightforward.
\end{proof}

\begin{proof}[Proof of Proposition~\ref{sec:hier-full-symm-1}]
  Let us verify
  equations \eqref{nloc sym phi} for the generating function
  $f=f(\lambda,\xi)$.

  In this case, we have
  \begin{equation*}
    Z=-\zav{z_y+z_x\varphi(\xi)-\varphi(\xi)^2}\varphi_\xi(\xi)^2. 
  \end{equation*}
  This is a shadow of symmetries of the Gibbons--Tsarev equation as proved in
  Proposition~\ref{almost symmetry}, so Equation~\eqref{eq:tilde ell GT} is
  satisfied. We have
  \begin{equation*}
    \Phi=
    \zav{\frac{\lambda}{\xi}}^{6}\frac{\varphi_\lambda(\lambda)}{\lambda-\xi}+ 
    \frac{\varphi_\xi(\xi)^2}{\varphi(\xi)-\varphi(\lambda)}
    \frac{\varphi(\xi)^2-z_x\varphi(\xi)-z_y}{\varphi(\lambda)^2-
      z_x\varphi(\lambda)-z_y}.  
  \end{equation*}

  These expressions can be put directly into equations \eqref{nloc sym phi}.
  The proof that equations \eqref{nloc sym phi} indeed hold is a matter of
  direct computation with the help of the identities
  \begin{align*}
    \frac{\partial\varphi(t)}{\partial x}&=
    -\frac{1}{\varphi(t)^2-z_x\varphi(t)-z_y},\\ 
    \frac{\partial\varphi(t)}{\partial y}&=
    -\frac{\varphi(t)-z_x}{\varphi(t)^2-z_x\varphi(t)-z_y},\\ 
    \frac{\partial\varphi_t(t)}{\partial x}&=
    \frac{(2\varphi(t)-z_x)\varphi_t(t)}{(\varphi(t)^2-z_x\varphi(t)-z_y)^2},\\ 
    \frac{\partial\varphi_t(t)}{\partial y}&=
    -\frac{((\varphi(t)-z_x)^2+z_y)
      \varphi_t(t)}{(\varphi(t)^2-z_x\varphi(t)-z_y)^2},\\  
    z_{yy}&=z_yz_{xx}-z_xz_{xy}-1,
  \end{align*}
  where $t$ is either $\lambda$ or $\xi$.
\end{proof}

Collecting together the above facts, we obtain the main result of this
section:

\begin{theorem} In the notation of Proposition \ref{prop:Symm
    phi_Lambda}\textup{,} the vector fields $\mathcal{S}$ with $f=f_n$
  \eqref{fngen} are symmetries for all $n\in \mathbb{Z}$.
\end{theorem}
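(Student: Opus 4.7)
The plan is to split the argument according to whether $n\ge 3$ or $n\le 2$, since the construction of $f_n$ is conceptually different in these two ranges, even though the single formula $f_n = \mathfrak P_\varphi(\lambda^n \varphi_\lambda)$ makes sense uniformly (the projector $\mathfrak P_\varphi$ acts trivially in the first range).

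For $n\ge 3$, I would observe that $\lambda^n \phi_\lambda = -\lambda^{n-2} + \sum_{i\ge 1} i\, \phii{i} \lambda^{n+i-1}$ has only strictly positive exponents in $\lambda$ and coefficients independent of $\lambda$, so $f_n = \lambda^n\phi_\lambda$ already lies in the class for which $\PBS{\cdot}$ is defined. Reading off $f^{(1)},f^{(2)},f^{(3)}$ from the expansion gives the shadows $Z = -\Zi{-2}, -\Zi{-3}, -\Zi{-4}$ in the cases $n=3,4,5$ respectively, i.e., the lifts of the three translational symmetries, whereas for $n\ge 6$ one has $f^{(1)}=f^{(2)}=f^{(3)}=0$ and hence $Z=0$, producing the invisible symmetries \eqref{S1-n}. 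In every subcase the condition $\tilde\ell_{\mathcal E}(Z)=0$ is immediate (either $Z$ is a listed local symmetry from \eqref{eq:23} or it vanishes) and the conditions \eqref{nloc sym phi} reduce to a routine mechanical verification using the recurrence \eqref{eq:6} for $\Xiib{i},\Yiib{i}$; these verifications have already been indicated case-by-case in the text preceding the theorem.

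For $n\le 2$, the series $\lambda^n\phi_\lambda$ contains non-positive powers of $\lambda$, so $\PBS{\lambda^n\phi_\lambda}$ is not directly defined; the projector $\mathfrak P_\varphi$ subtracts a polynomial in $\phi$ to cancel those terms and yields the strictly positive series $f_n$ from \eqref{fngen}. That the corresponding vector field $\mathcal S$ from Proposition~\ref{prop:Symm phi_Lambda} is a nonlocal symmetry is precisely the content of Proposition~\ref{sec:hier-full-symm-1}, whose proof has already been carried out via the closed-form generating function \eqref{genfunctionbl} of Lemma~\ref{genfunctionlemma} and the residue identities in that lemma.

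Combining the two ranges exhausts $\mathbb Z$ and delivers the claimed family. The hard part of the argument has been absorbed into Proposition~\ref{sec:hier-full-symm-1}, where the generating-function manipulation and residue calculus were the real obstacle; at the level of the theorem itself, no further computation is required beyond the observation that for $n\ge 3$ the formula $f_n$ reduces to $\lambda^n\phi_\lambda$ and the verifications are either trivial (shadow vanishes) or reduce to checking the already-catalogued lifts of \eqref{eq:23}. The synthesis of these two parts completes the proof.
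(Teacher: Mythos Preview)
Your proposal is correct and follows essentially the same approach as the paper: the theorem is stated there simply as ``collecting together the above facts,'' meaning the case $n\ge 3$ handled by the explicit case-by-case discussion (with the omitted straightforward verification of \eqref{nloc sym phi}) together with the case $n\le 2$, which is exactly Proposition~\ref{sec:hier-full-symm-1}. Your observation that the projector $\mathfrak P_\varphi$ acts trivially for $n\ge 3$, so that \eqref{fngen} covers both ranges uniformly, is a helpful clarification that the paper leaves implicit.
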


Remarkably, we are able to obtain explicit formulas for the symmetries.
Denote by
\begin{equation*}
  A_r^{(k,n)} = [\lambda^{n-r}] \phi^k {\phi'}^{r},
\end{equation*}
the coefficient at $\lambda^{n-r}$ in the product $\phi^k {\phi'}^r$, where
$k,r$ are arbitrary integers, cf.~\eqref{A2}.
For $r=0$ we have
\begin{equation*}
  A_0^{(k,n)} = \rsum_{j_1 + \cdots + j_k = n} \phii{j_1} \phii{j_2} \cdots
  \phii{j_k}, \quad k > 0
\end{equation*}
(recall that the notation $\rsum_{\dots}{}$ means
  summation where indices run through all integers from $-1$ to infinity).

\begin{proposition} 
  Vector fields $\Si{n} = \PBS{f_{1-n}}$ admit the explicit formula
  \begin{equation}
    \Si{n}= \sum_{m \ge 1} \biggl((n+m) \phii{n+m} + \sum_{k = 0}^{n+1}
    A_0^{(k,m)} A_2^{(-k-1,n)} \biggr) 
    \frac{\partial}{\partial \phii{m}}.
  \end{equation}
\end{proposition}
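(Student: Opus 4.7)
My plan is to extract the coefficient $[\lambda^m] f_{1-n}$ directly, since by definition of the operator $\PBS{\cdot}$ we have $\Si{n} = \PBS{f_{1-n}} = \sum_{m \ge 1} [\lambda^m] f_{1-n} \cdot \partial/\partial\phii{m}$. The starting point is formula~\eqref{fncoeff} from the proof of Lemma~\ref{genfunctionlemma}, which reads
\begin{equation*}
  f_{1-n} = \lambda^{1-n} \varphi'(\lambda) + \sum_{k=0}^{n+1} \varphi(\lambda)^k \, [\xi^{n-2}] \, \frac{\varphi'(\xi)^2}{\varphi(\xi)^{k+1}}.
\end{equation*}
When $n \le -3$ the upper bound $n+1$ is negative, the sum is empty, and the identity degenerates to $f_{1-n} = \lambda^{1-n} \phi_\lambda$, which coincides with the direct definition of $f_{1-n}$ for $1-n \ge 3$, where no projection $\mathfrak P_\varphi$ is required. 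Thus the formula applies uniformly for $n \in \mathbb{Z}$, including the tails $n \le -5$ not directly produced by the generating function of Lemma~\ref{genfunctionlemma}.

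Next I extract $[\lambda^m]$ term by term. The first summand unfolds as
\begin{equation*}
  \lambda^{1-n} \varphi'(\lambda) = -\lambda^{-1-n} + \sum_{i \ge 1} i \phii{i} \lambda^{i-n};
\end{equation*}
using the conventions $\phii{-1} = 1$ and $\phii{0} = 0$ embedded in the $\rsum$ notation, the coefficient at $\lambda^m$ for $m \ge 1$ equals $(n+m) \phii{n+m}$. Each summand of the remaining series contributes $A_0^{(k,m)} A_2^{(-k-1,n)}$ by the very definitions $A_0^{(k,m)} = [\lambda^m] \phi^k$ and $A_2^{(-k-1,n)} = [\lambda^{n-2}] \phi^{-k-1}(\phi')^2$. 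Adding the two contributions yields the formula claimed in the proposition.

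The truncation of the sum at $k = n+1$ is confirmed by a degree estimate: since $\phi(\lambda)^{-k-1} \phi'(\lambda)^2$ has lowest power $\lambda^{k-3}$, the coefficient $A_2^{(-k-1,n)}$ vanishes whenever $k > n+1$, so extending the summation to $+\infty$ would only add zero terms. I do not anticipate a substantial obstacle; the main care required is in the bookkeeping of formal Laurent series conventions and in confirming that formula~\eqref{fncoeff} applies uniformly, including the non-projected cases $n\le -3$.
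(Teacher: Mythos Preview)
Your proof is correct and follows exactly the route the paper indicates: the paper's proof is the single sentence ``This is a direct consequence of the representation of $f_{1-n}$, see~\eqref{fncoeff}, and the definition of $\PBS{f}$,'' and you have simply unpacked that sentence, extracting $[\lambda^m]$ from each piece of~\eqref{fncoeff} and matching the results to $A_0^{(k,m)}$ and $A_2^{(-k-1,n)}$. Your additional checks on the truncation bound and on the uniform validity of~\eqref{fncoeff} for all $n\in\mathbb{Z}$ are sound and make the argument self-contained.
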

\begin{proof}
  This is a direct consequence of the representation of $f_{1-n}$,
  see~\eqref{fncoeff}, and the definition of $\PBS{f}$.
\end{proof}
\begin{remark}
  Alternatively, we can also write
  \begin{equation*}
    \Si{n} = - \sum_{m \ge 1} \sum_{k = 0}^{m} A_0^{(-k-1,m)} A_2^{(k,n)} 
    \frac{\partial}{\partial \phii{m}}.
  \end{equation*}
\end{remark}

\subsection{The Lie algebra structure}
\label{sec:lie-algebra-struct}
The main result of this part is
\begin{theorem}\label{sec:hier-full-symm}
  The vector fields
  \begin{equation*}
    \Si{n} = \PBS{f_{1-n}}
  \end{equation*}
  satisfy
  \begin{equation*}
    [\Si{n},\Si{m}] = (m-n) \Si{n+m},
  \end{equation*}
  i.e.\textup{,} constitute a basis of the Witt algebra.
\end{theorem}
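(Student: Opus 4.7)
By the identity $[\PBS{f}, \PBS{g}] = \PBS{\{f,g\}}$ with $\{f,g\} = \PBS{f}(g) - \PBS{g}(f)$, the theorem reduces to the formal-series identity
$\{f_{1-n}, f_{1-m}\} = (m-n)\, f_{1-(n+m)}$ for all $n, m \in \mathbb{Z}$.
The plan is to establish this through the generating function $F(\lambda, \xi) = \sum_{n \ge -4}\xi^{n-2}f_{1-n}$ from Lemma~\ref{genfunctionlemma}: compute $\{F(\lambda, \xi), F(\lambda, \eta)\}$ in closed form and match with the RHS generating function
$\sum \xi^{n-2}\eta^{m-2}(m-n)f_{1-(n+m)} = (\eta\partial_\eta - \xi\partial_\xi)\sum \xi^{n-2}\eta^{m-2} f_{1-(n+m)}$
by coefficient extraction.

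Decompose $F = T_1 + T_2$ with $T_1(\lambda, \xi) = (\lambda/\xi)^6\varphi_\lambda/(\lambda - \xi) = h(\lambda, \xi)\varphi_\lambda$, where $h$ is free of nonlocal variables, and $T_2(\lambda, \xi) = \varphi_\xi(\xi)^2/(\varphi(\xi) - \varphi(\lambda))$.  The bracket expands as $\{T_1,T_1\}+\{T_1,T_2\}+\{T_2,T_1\}+\{T_2,T_2\}$, and each piece is computed via the substitution principle $\PBS{H(\lambda, \xi)}(\varphi(\mu)) = H(\mu, \xi)$ together with $\PBS{H(\lambda, \xi)}(\varphi_\mu(\mu)) = \partial_\mu H(\mu, \xi)$, which follows from $\PBS{H} = \sum_j [\lambda^j]H \cdot \partial/\partial\varphi^{(j)}$ and the expansion $\varphi(\mu) = 1/\mu + \sum \varphi^{(j)}\mu^j$.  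This yields: (i) $\{T_1(\lambda, \xi), T_1(\lambda, \eta)\} = (h(\lambda, \eta) h_\lambda(\lambda, \xi) - h(\lambda, \xi) h_\lambda(\lambda, \eta))\varphi_\lambda$, which simplifies via the polynomial identity $(5\lambda - 6\xi)(\lambda - \eta) - (5\lambda - 6\eta)(\lambda - \xi) = \lambda(\eta - \xi)$; (ii) $\{T_2(\lambda, \xi), T_2(\lambda, \eta)\} = 0$, by the algebraic identity $2(b-a)(c-a) - (b-a)^2 - (c-a)^2 = -(b-c)^2$ with $a, b, c = \varphi(\lambda), \varphi(\xi), \varphi(\eta)$; (iii) the cross terms $\{T_1, T_2\}$ and $\{T_2, T_1\}$ are computed analogously by chain rule.

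Matching the assembled $\{F, F\}$ against the RHS generating function and extracting coefficients of $\xi^{n-2}\eta^{m-2}$ yields the Witt identity for $n, m \ge -4$.  Indices outside this range (e.g.\ $n, m \le -2$ giving $1-n, 1-m \ge 3$, or $n + m \le -5$ giving $1-(n+m) \ge 6$) correspond to untruncated generators $f_{1-k} = \lambda^{1-k}\varphi_\lambda$, for which the identity follows directly from the raw computation $\{\lambda^a \varphi_\lambda, \lambda^b \varphi_\lambda\} = (a - b)\lambda^{a+b-1}\varphi_\lambda$ with $a = 1-n$, $b = 1-m$.  The main obstacle lies in assembling the cross terms and recognising the combined rational expression as the expected RHS generating function: the cross-term computation mixes derivatives of $\varphi$ evaluated at three different arguments $\lambda, \xi, \eta$, and partial-fraction manipulation is required to match the combinatorial sum.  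Handling the boundary (where the $F$-based computation must be supplemented by the raw calculation for $n + m < -4$) adds bookkeeping but uses no new ideas beyond those in (i)--(iii).
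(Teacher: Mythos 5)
Your strategy is sound and the two pieces you actually computed check out: under the termwise substitution rule one does get $\szav{T_2(\lambda,\xi),T_2(\lambda,\eta)}=0$ (the two "second" terms combine with the antisymmetrised "first" terms to cancel exactly), and $\szav{T_1(\lambda,\xi),T_1(\lambda,\eta)}$ reduces to your polynomial identity. But your route is genuinely different from the paper's. The paper never forms a two-variable bracket: it computes $\PBS{f_{1-n}}f_{1-m}$ from the single-variable representation $f_{1-m}=\varphi'(\lambda)/\lambda^{m-1}+\hzav{\xi^{m-2}}\varphi'(\xi)^2/(\varphi(\xi)-\varphi(\lambda))$, observes that after antisymmetrisation everything except $(m-n)\varphi'(\lambda)/\lambda^{m+n-1}$ collapses into an \emph{undetermined} series $\sum_k c_k\varphi(\lambda)^k$, and then pins down the $c_k$ without computing them: since $\PBS{g}$ cannot produce non-positive powers of $\lambda$ when $g,h$ have lowest order one, the coefficient of $\varphi(\lambda)^j$ in the bracket must vanish, forcing $c_j=-(m-n)\hzav{\varphi(\lambda)^j}\varphi'(\lambda)/\lambda^{m+n-1}$ and hence $\szav{f_{1-n},f_{1-m}}=(m-n)\mathfrak P_\varphi(\varphi'/\lambda^{m+n-1})=(m-n)f_{1-n-m}$. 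Your plan replaces this indirect step by an explicit closed-form evaluation of all four brackets plus a direct match against the right-hand generating function; what the paper's trick buys is precisely that the "partial-fraction matching" you defer never has to be performed.

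Two points need repair before your plan is a proof. First, $T_1$ and $T_2$ are \emph{not} positive series in $\lambda$: $T_1=\sum_{n\ge-4}\xi^{n-2}\lambda^{1-n}\varphi'(\lambda)$ contains arbitrarily negative powers of $\lambda$, and only the sum $F=T_1+T_2$ is a positive series. Hence $\PBS{T_i}$ does not obey the substitution rule $\PBS{T_i}(\varphi(\mu))=T_i(\mu,\xi)$; it returns only the positive-in-$\lambda$ part. Your four-way decomposition still yields the correct total, because the errors in the two naive evaluations are negatives of one another (as $(T_1)_++(T_2)_+=T_1+T_2=F$) and the chain rule is linear in these evaluations, so they cancel on summation — but this must be stated; as written, the individual $\szav{T_i,T_j}$ are not brackets of actual vector fields. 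Second, the final matching is the whole content of the theorem for $n+m\ge-4$ and is not mere bookkeeping: both sides are double formal series whose closed forms depend on expansion conventions (e.g.\ $1/(\lambda-\xi)$ expanded in $\xi/\lambda$ versus $\lambda/\xi$), so identities of rational expressions do not automatically descend to identities of the series. If the direct matching bogs down, the paper's positivity argument — the bracket of two lowest-order-one series has no non-positive powers of $\lambda$, which determines its principal part in $\varphi(\lambda)$ for free — is the ingredient that closes it.
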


For the proof we are going to make use of the following lemma:
\begin{lemma}\label{comrule}
  Let
  \begin{equation*}
    g=g(\varphi,\varphi^\prime,\lambda),\quad
    h=h(\varphi,\varphi^\prime,\lambda),
  \end{equation*}
  be two formal series in~$\lambda$ of the lowest order~$1$. Then
  \begin{equation}
    [\PBS{g},\PBS{h}]=\PBS{\szav{g,h}},
  \end{equation}
  where
  \begin{equation*}
    \szav{g,h}=h_\varphi g- g_\varphi h+h_{\varphi^\prime} D_\lambda
    g-g_{\varphi^\prime} D_\lambda h,
  \end{equation*}
  where\textit{,} as before\textit{,} $D_\lambda = \partial/\partial \lambda +
  \sum_\Lambda \phi_{\lambda\Lambda}\partial/\partial\Lambda$.
\end{lemma}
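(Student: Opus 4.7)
The plan is to view $\PBS{g}$ and $\PBS{h}$ as evolutionary derivations on the jet space $J(\lambda;\phi)$ and to use the standard fact that the commutator of two evolutionary derivations is again evolutionary, with generating section equal to the Jacobi bracket of the two original sections. The hypothesis that $g$ and $h$ are of lowest order $\geq 1$ in $\lambda$ is only used to ensure that $\PBS{g}, \PBS{h}$ and $\PBS{\{g,h\}}$ are well-defined infinite sums in the coordinates $\phii{i}$; once this is assured, the identity is a purely formal consequence of the chain rule.

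I would proceed in four short steps. First, I would point out that $\PBS{f}$ is by construction the prolongation to $J(\lambda;\phi)$ of the vertical generator $f\,\partial/\partial\varphi$, and that any such prolongation commutes with the total derivative $D_\lambda$: $[\PBS{f},D_\lambda]=0$. This is immediate from $\PBS{f}=\sum_\Lambda \partial_\Lambda f\cdot \partial/\partial\phi_\Lambda$. Second, by the Jacobi identity for the commutator of derivations, it follows that $[\PBS{g},\PBS{h}]$ commutes with $D_\lambda$ as well. Since $[\PBS{g},\PBS{h}]$ is also vertical (i.e.\ annihilates $\lambda$), and any vertical derivation commuting with $D_\lambda$ is determined uniquely by its action on $\varphi$ via the rule $[\PBS{g},\PBS{h}](\phi_\Lambda)=D_\lambda^{|\Lambda|}[\PBS{g},\PBS{h}](\varphi)$, it must coincide with the prolongation of its action on $\varphi$.

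Third, I would compute this action directly. Using $\PBS{f}(\varphi)=f$ and the commutation $[\PBS{f},D_\lambda]=0$, we get $\PBS{f}(\varphi^\prime)=\PBS{f}(D_\lambda\varphi)=D_\lambda f$. Therefore, since $h$ depends only on $\varphi$, $\varphi^\prime$ and $\lambda$, the chain rule gives
\begin{equation*}
\PBS{g}(h)=h_\varphi\,\PBS{g}(\varphi)+h_{\varphi^\prime}\,\PBS{g}(\varphi^\prime)=h_\varphi\,g+h_{\varphi^\prime}\,D_\lambda g,
\end{equation*}
and symmetrically $\PBS{h}(g)=g_\varphi\,h+g_{\varphi^\prime}\,D_\lambda h$. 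Subtracting,
\begin{equation*}
[\PBS{g},\PBS{h}](\varphi)=\PBS{g}(h)-\PBS{h}(g)=h_\varphi g-g_\varphi h+h_{\varphi^\prime}D_\lambda g-g_{\varphi^\prime}D_\lambda h=\{g,h\}.
\end{equation*}
Combining with the uniqueness argument of the second step, $[\PBS{g},\PBS{h}]=\PBS{\{g,h\}}$.

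The only real obstacle is the well-definedness issue mentioned above: one must check that $\{g,h\}$ still defines an admissible generating section, i.e.\ an expansion in non-negative powers of $\lambda$ compatible with the definition of $\PBS{\cdot}$. Since $g,h$ start at $\lambda^1$ and $D_\lambda$ lowers the $\lambda$-order by at most one, the terms $h_{\varphi^\prime}D_\lambda g$ and $g_{\varphi^\prime}D_\lambda h$ could a priori produce a $\lambda^0$ contribution; however the resulting bracket is applied only to series $f_n$ for which the projection operator $\mathfrak P_\varphi$ has already removed the non-positive part, so everything stays in the class for which $\PBS{\cdot}$ makes sense. This verification, together with the four steps above, completes the argument.
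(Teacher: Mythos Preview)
Your argument is correct and is essentially the paper's proof recast in jet-space language: both reduce to the chain-rule computation $\PBS{g}(h)=h_\varphi g+h_{\varphi'}D_\lambda g$, which you obtain from $\PBS{g}(\varphi)=g$, $\PBS{g}(\varphi')=D_\lambda g$ and the characterization of prolongations as vertical fields commuting with $D_\lambda$, while the paper obtains it by summing $\sum_{m\ge1}h^{(m)}(g_\varphi\lambda^m+g_{\varphi'}m\lambda^{m-1})$ in the $\phii{m}$ coordinates---and that summation is exactly where the lowest-order-one hypothesis on $h$ is used. Your final paragraph is unnecessary and its appeal to the particular $f_n$ is misplaced for a general lemma: the paper simply writes $[\PBS{g},\PBS{h}]=\sum_{m\ge1}[\lambda^m](\PBS{g}h-\PBS{h}g)\,\partial_{\phii{m}}$, which is perfectly well defined whether or not $\{g,h\}$ has a constant term.
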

\begin{proof}
  Obviously from~\eqref{phi expansion}, one has
  \begin{equation*}
    \frac{\partial\phi}{\partial\phii{m}} = \lambda^m, \quad
    \frac{\partial\phi'}{\partial\phii{m}} = m \lambda^{m-1}
  \end{equation*}
  for all integers $m > 0$.  Therefore,
  $ \partial_{\varphi_m} g(\lambda)= g_\varphi \lambda^m+g_{\varphi^\prime}
  D_\lambda \lambda^m$.  Hence
  \begin{equation*}
    \PBS{h} g(\lambda)=\sum_{m=1}^{\infty}[\xi^m ]h(\xi)\partial_{\varphi_m}
    g(\lambda)= \sum_{m=1}^{\infty}[\xi^m ]h(\xi)\zav{g_\varphi
      \lambda^m+g_{\varphi^\prime} D_\lambda \lambda^m}=g_\varphi
    h(\lambda)+g_{\varphi^\prime} D_\lambda h(\lambda).
  \end{equation*}
  The last equality stems from the fact that $h$ is of the lowest order
  one. Finally,
  \begin{equation*}
    [\PBS{g},\PBS{h}]=\sum_{m=1}^{\infty}[\lambda^m ] \zav{\PBS{g}
      h(\lambda)-\PBS{h} g(\lambda)}\partial_{\varphi_m}.
  \end{equation*}
\end{proof}

\begin{proof}[Proof of Theorem~\ref{sec:hier-full-symm}]
  We have to prove the identity
  \begin{equation}
    [\PBS{f_{1+n}},\PBS{f_{1+m}}] = (n - m) \PBS{f_{1+n+m}}.
  \end{equation}
  Recall that
  \begin{equation*}
    \PBS{g}=\sum_{m=1}^{\infty}[\lambda^m ]g(\lambda)\partial_{\varphi_m},
  \end{equation*}
  where $g$ is a series in $\lambda$ of the lowest order one.

  For $n>1$, the quantity $ f_{1+n}=\lambda^{n+1} \varphi^{\prime}$ is of the lowest
  order one and depends on $\varphi^\prime$ only, hence is admissible for the
  commutation rule above. It is easy to show that for $n,m>1$ we have
  \begin{equation*}
    \szav{\lambda^{n+1} \varphi^{\prime},\lambda^{m+1} \varphi^{\prime}}
    =\lambda^{m+1}D_\lambda \lambda^{n+1}\varphi^\prime
    -\lambda^{n+1}D_\lambda
    \lambda^{m+1}\varphi^\prime=(n-m)\lambda^{m+n+1}\varphi^\prime. 
  \end{equation*}
  In other words
  \begin{equation*}
    [\PBS{f_{1+n}},\PBS{f_{1+m}}]=
    \PBS{(n-m)\lambda^{m+n+1}\varphi^{\prime}}=(n-m)\PBS{f_{1+m+n}}. 
  \end{equation*}
  
  For $n\leq 1$, dependence of $f_{1-n}$ on $\varphi^{(i)}$ is different,
  thus another approach is needed.  We are going to use the
  representation~\eqref{neargenfun} for $f_{1-n}(\lambda)$, that is,
  \begin{equation*}
    f_{1-n}(\lambda)=\frac{\varphi'(\lambda)}{\lambda^{n-1}} +
    [\xi^{n-2}]\frac{\varphi'(\xi)^2}{\varphi(\xi)-\varphi(\lambda)}. 
  \end{equation*}
  Note that $f_{1-n}(\lambda)$ does \textit{not} depend on $\xi$.

  Let us compute
  \begin{align*}
    &\PBS{f_{1-n}(\lambda)}f_{1-m}(\lambda)=
      \PBS{f_{1-n}(\lambda)}\frac{\varphi'(\lambda)}{\lambda^{m-1}}+[\xi^{m-2}]
      \PBS{f_{1-n}(\lambda)}
      \frac{\varphi'(\xi)^2}{\varphi(\xi)-\varphi(\lambda)}\\
    &\quad= \frac{1}{\lambda^{m-1}} f'_{1-n}(\lambda)+[\xi^{m-2}]f_{1-n}(\lambda)
      \frac{\varphi'(\xi)^2}{(\varphi(\xi)-\varphi(\lambda))^2} -[\xi^{m-2}]
      f_{1-n}(\xi)
      \frac{\varphi'(\xi)^2}{(\varphi(\xi)-\varphi(\lambda))^2}\\
    &\quad+[\xi^{m-2}]f'_{2-n}(\xi)\frac{2\varphi'(\xi)}{\varphi(\xi)-
      \varphi(\lambda)},\\
    \intertext{where the $'$ in $f'_{1-n}$ denotes the total
    derivative. Collecting terms that contain
    $\varphi'(\lambda)$, $\varphi''(\lambda)$ and the rest that depends only on
    $\varphi(\lambda)$, we can rewrite the last result as follows:}
    &\PBS{f_{1-n}(\lambda)}f_{1-m}(\lambda)=-(n-1)\frac{\varphi'(\lambda)}{\lambda^{m+n-1}}
      +\frac{\varphi''(\lambda)}{\lambda^{m+n-2}} \\
    &\quad
      +[\xi^{n-2}]\frac{\varphi'(\xi)^2}{(\varphi(\xi)-
      \varphi(\lambda))^2}\frac{\varphi'(\lambda)}{\lambda^{m-1}}+
      [\xi^{m-2}]\frac{\varphi'(\xi)^2}{(\varphi(\xi)-
      \varphi(\lambda))^2}\frac{\varphi'(\lambda)}{\lambda^{n-1}}
       +\sum_{k=0}^{\infty}c_k\varphi(\lambda)^k,
  \end{align*}
  where $c_k$ are some coefficients depending on $\varphi^{(i)}$.

  Notice that all terms except the first one and the last one are symmetric with respect 
  to swapping $n \leftrightarrow m$. Hence,
  \begin{align*}
    \szav{f_{1-n}(\lambda),f_{1-m}(\lambda)}&=
    \PBS{f_{1-n}(\lambda)}f_{1-m}(\lambda)-\PBS{f_{1-m}(\lambda)}f_{1-n}(\lambda)\\ 
    &=(m-n)\frac{\varphi'(\lambda)}{\lambda^{m+n-1}}+
    \sum_{k=0}^{\infty}c_k\varphi(\lambda)^k, 
  \end{align*}
  where, again, $c_k$ are some coefficients depending on $\varphi^{(i)}$,
  $i\geq1$.

  Note now that for any formal series $g,h$ that are of lowest order $1$, we
  have
  \begin{equation*}
    [\varphi(\lambda)^j]\szav{g(\lambda),h(\lambda)}=0,\qquad j\geq0, 
  \end{equation*}
  since the operator $\PBS{g}$ does not act on $\lambda$ and thus cannot
  produce non-positive powers of~$\lambda$.  In our case, both
  $f_{1-m}(\lambda)$ and $f_{1-n}(\lambda)$ are of lowest order $1$ and thus
  \begin{multline*}
    0=[\varphi(\lambda)^j]\szav{f_{1-n}(\lambda),f_{1-m}(\lambda)}=
    [\varphi(\lambda)^j]\zav{(m-n)\frac{\varphi'(\lambda)}{\lambda^{m+n-1}}+
      \sum_{k=0}^{\infty}c_k\varphi(\lambda)^k}\\  
    =(m-n)[\varphi(\lambda)^j]\frac{\varphi'(\lambda)}{\lambda^{m+n-1}}+c_j.
  \end{multline*}
  Hence,
  \begin{equation*}
    c_j=-(m-n)[\varphi(\lambda)^j]\frac{\varphi'(\lambda)}{\lambda^{m+n-1}}
  \end{equation*}
  and
  \begin{multline*}
    \szav{f_{1-n}(\lambda),f_{1-m}(\lambda)}\\
    =(m-n)\frac{\varphi'(\lambda)}{\lambda^{m+n-1}}-
    (m-n)\sum_{k=0}^{\infty}\varphi(\lambda)^k
    \hzav{\varphi(\xi)^k}\frac{\varphi'(\xi)}{\xi^{m+n-1}}
    \stackrel{\eqref{fngen}}{=}(m-n)f_{1-m-n}(\lambda).    
  \end{multline*} 
  Finally,
  \begin{equation*}
    \hzav{\PBS{f_{1-n}},\PBS{f_{1-m}}}=\PBS{\szav{f_{1-n},f_{1-m}}}=
    (-n-(-m))\PBS{f_{1-m-n}},
  \end{equation*}
  as claimed.
\end{proof}

\section{Uniqueness of symmetries}\label{sec:uniq-symm}

We prove here some uniqueness results for the nonlocal symmetries of the
Gibbons--Tsarev equation~\eqref{eq:21} in the covering~$\tau_*$ defined in
Section~\ref{sec:first-method}. For technical reasons, it is more convenient
for us to deal with System~\eqref{eq:29}, i.e.,
\begin{equation*}
  u_y+vu_x=\frac{1}{v-u},\qquad v_y+uv_x=\frac{1}{u-v}.
\end{equation*}
Due to~\eqref{eq:27}, the relation between shadows of~\eqref{eq:21} and those
of~\eqref{eq:29} is established by
\begin{equation}\label{eq:58}
  U=\frac{uD_x(Z)+D_y(Z)}{u-v},\quad V=\frac{vD_x(Z)+D_y(Z)}{v-u},
\end{equation}
where~$Z$ is a shadow for the Gibbons--Tsarev equation, while~$(U,V)$ is
a shadow for the system in~$u$ and~$v$. In particular, for the
symmetries~$\Si{-3},\dots,\Si{0}$ with the generating sections given
by~\eqref{eq:23} one has
\begin{align*}
  &\Zi{-3}&\mapsto&&& \Wi{-3}=\left(u_x,v_x\right),\\
  &\Zi{-2}&\mapsto&&& \Wi{-2}=\left(u_y,v_y\right),\\
  &\Zi{-1}&\mapsto&&& \Wi{-1}=\left(1-yu_x,1-yv_x\right),\\
  &\Zi{0}&\mapsto&&& \Wi{0}=\left(3xu_x+3yu_y-u,3xv_x+3yv_y-v\right), 
\end{align*}
while the symmetry~$\Si{-4}$ becomes invisible.

In what follows, we, for convenience, use the notation
\begin{equation}
  \label{eq:41}
  u_y=f(u,v)-vu_x,\qquad v_y=g(u,v)-uv_x,
\end{equation}
where~$f$ and~$g$ may be considered as functions in~$u$ and~$v$ such that the
partial derivatives~$f_u$, $f_v$, $g_u$, $g_v$ do not vanish. We choose the
functions
\begin{equation*}
  x,\ t,\ u_i=\frac{\partial^i u}{\partial x^i},\ v_i=\frac{\partial^i
    v}{\partial x^i},\qquad i\geq 0,
\end{equation*}
for the internal coordinates on~$\mathcal{E}$, and then the total derivatives
are
\begin{align*}
  D_x&=\frac{\partial}{\partial x}+\sum_{i\geq
    0}\left(u_{i+1}\frac{\partial}{\partial u_i} +
    v_{i+1}\frac{\partial}{\partial v_i}\right),\\
  D_y&=\frac{\partial}{\partial y} + \sum_{i\geq
    0}\left(D_x^i(f-vu_1)\frac{\partial}{\partial u_i} + 
    D_x^i(g-uv_1)\frac{\partial}{\partial v_i}\right).
\end{align*}
Then
\begin{equation*}
  \ell_{\mathcal{E}}=
  \begin{pmatrix}
    vD_x+D_y-f_u&u_1-f_v\\
    v_1-g_u&uD_x+D_y-g_v
  \end{pmatrix}
\end{equation*}
and the defining equations for symmetries of~\eqref{eq:41} are
\begin{equation}
  \label{eq:44}
  \begin{array}{rcl}
  D_y(U)&=&f_uU-vD_x(U)+(f_v-u_1)V,  \\
  D_y(V)&=&g_vV-uD_x(V)+(g_u-v_1)U.
  \end{array}
\end{equation}

\subsection{Uniqueness  of polynomial shadows}
\label{sec:uniq-polyn-shad}
Consider now the covering~$\tau_*\colon\mathcal{E}_*\to\mathcal{E}$ with the
nonlocal variables~$\psii{3},\dots,\psii{k},\dots$ defined in
Subsection~\ref{sec:first-method}. We say that a function~$F$
on~$\mathcal{E}_*$ is of \emph{order~$k$} if at least one of the partial
derivatives~$F_{u_k}$ or~$F_{u_k}$ does not vanish, while~$F_{u_i}=F_{u_i}=0$
for all~$i>k$.

Let us estimate the higher order terms of $\tau_*$-shadows.
The defining equations for $\tau_*$-shadows is obtained
from~\eqref{eq:44} by changing the total derivatives~$D_x$
and~$D_y$ to
\begin{equation*}
  \Dii{*}_x=D_x+\sum_{i\geq 3}\Xii{i}\frac{\partial}{\partial\psii{i}},\quad
  \Dii{*}_y=D_y+\sum_{i\geq 3}\Yii{i}\frac{\partial}{\partial\psii{i}},
\end{equation*}
i.e., they are of the form
\begin{align}\label{eq:46}
  \Dii{*}_y(U)&=f_uU-v\Dii{*}_x(U)+(f_v-u_1)V,
  \\\label{eq:47}
  \Dii{*}_y(V)&=g_vV-u\Dii{*}_x(V)+(g_u-v_1)U.
\end{align}
Note that the coefficients~$\Xii{i}$ and~$\Yii{i}$ are of order zero.

We shall need the following `asymptotics' below:
\begin{equation}
  \label{eq:48}
  \begin{array}{rcl}
  \bigl(\Dii{*}_x\bigr)^p(f-vu_1)&=&-vu_{p+1}+(f_u-pv_1)u_p+(f_v-u_1)v_p+O(p-1),
  \\
  \bigl(\Dii{*}_y\bigr)^p(g-uv_1)&=&-uv_{p+1}+(g_v-pu_1)v_p+(g_u-v_1)u_p+O(p-1)
  \end{array}
\end{equation}
for an arbitrary~$p>1$. Here and in what follows~$O(\alpha)$ denotes terms of
order~$\leq\alpha$.

\begin{proposition}\label{sec:uniq-polyn-shad-1}
  Equation~\eqref{eq:41} admits no $\tau_*$-shadow of order~$>1$.
\end{proposition}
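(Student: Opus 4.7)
My approach is to argue by contradiction, assuming that $(U,V)$ is a $\tau_*$-shadow of order exactly $k>1$ and extracting enough information from the top-order jet terms of the defining equations~\eqref{eq:46} and~\eqref{eq:47} to force a contradiction. A key simplification is that the coefficients $\Xii{i}$ and $\Yii{i}$ appearing in $\Dii{*}_x,\Dii{*}_y$ are of order zero, so the nonlocal variables $\psii{j}$ enter only as passive parameters during the highest-order analysis and the situation reduces essentially to a local jet-order computation governed by~\eqref{eq:48}.

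First, I would equate the coefficients at $u_{k+1}$ and $v_{k+1}$ in~\eqref{eq:46} by means of~\eqref{eq:48} with $p=k$. The $u_{k+1}$-coefficient cancels automatically, while the $v_{k+1}$-coefficient reduces to $(v-u)U_{v_k}=0$, giving $U_{v_k}=0$; symmetrically~\eqref{eq:47} yields $V_{u_k}=0$. Applying $\partial_{v_k}^2$ to~\eqref{eq:46} and $\partial_{u_k}^2$ to~\eqref{eq:47} then forces $V_{v_k v_k}=U_{u_k u_k}=0$, so that $U$ is affine in $u_k$ and $V$ is affine in $v_k$. Writing $U=U_0+U_1 u_k$, $V=V_0+V_1 v_k$ with coefficients of order at most $k-1$, and then isolating the next-order coefficient (by applying $\partial_{v_k}$ to~\eqref{eq:46} and $\partial_{u_k}$ to~\eqref{eq:47}), one obtains the coupled relations
$$
(f_v-u_1)(U_1-V_1)=(u-v)\,U_{0,v_{k-1}},\qquad (g_u-v_1)(U_1-V_1)=(u-v)\,V_{0,u_{k-1}},
$$
together with $U_{1,v_{k-1}}=V_{1,u_{k-1}}=0$. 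Using the explicit values $f_v=g_u=-(v-u)^{-2}$ coming from~\eqref{eq:29}, comparing the polynomial dependence of the two sides on $u_1$ and $v_1$ severely restricts the admissible form of $U_1-V_1$ and of the relevant derivatives of $U_0,V_0$.

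The final stage is an induction on the jet level: applied to the pair $(U_1,V_1)$, which is of order at most $k-1$ and inherits a reduced version of the defining system, the same steps show that $U_1,V_1$ are themselves affine in $u_{k-1},v_{k-1}$, and so on. Propagating the constraints downward through the jet variables, $U_1$ and $V_1$ are forced to vanish whenever $k>1$, contradicting the assumption that the order of $(U,V)$ is exactly $k$. The main obstacle will be carrying out this descent rigorously: the equations couple $U_0,V_0$ and $U_1,V_1$ through their mixed $u_j,v_j$ derivatives at every level, so one must verify that the cascade of constraints at successive jet levels really terminates in $U_1=V_1=0$ rather than in a mere re-expression of lower-order obstructions; the structural inputs making this possible are the explicit forms $f_v=g_u=-(v-u)^{-2}$ and the fact that the nonlocal coordinates enter only passively at each stage.
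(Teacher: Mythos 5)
Your opening moves match the paper's: differentiating \eqref{eq:46} with respect to $v_{k+1}$ (and \eqref{eq:47} with respect to $u_{k+1}$) to get $U_{v_k}=V_{u_k}=0$, then forcing $U_{u_ku_k}=V_{v_kv_k}=0$ so that $U=a u_k+b$, $V=c v_k+d$ with $a,b,c,d$ of order $k-1$, and the cross-relations $(f_v-u_1)(a-c)=(u-v)b_{v_{k-1}}$, $(g_u-v_1)(c-a)=(v-u)d_{u_{k-1}}$ are exactly the paper's equations \eqref{eq:51}--\eqref{eq:52} in disguise. The gap is in the last stage, which is precisely where the proposition is actually decided. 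The paper does \emph{not} run a descent on the coefficients $a=U_1$, $c=V_1$: it substitutes the affine ansatz back into \eqref{eq:46}--\eqref{eq:47} and reads off the coefficient of the monomial $u_k$ in \eqref{eq:54} and of $v_k$ in \eqref{eq:55}. Because of the asymptotics \eqref{eq:48}, the left-hand side contains $a\,\Dii{*}_y(u_k)\ni a\,(f_u-k v_1)\,u_k$ while the right-hand side contributes only $f_u a\,u_k$, so after the remaining cancellations one is left with $-k v_1 a=0$ and, symmetrically, $-k u_1 c=0$. The explicit factor $k$ coming from $\bigl(\Dii{*}_x\bigr)^k(f-v u_1)=-v u_{k+1}+(f_u-k v_1)u_k+\cdots$ is the whole mechanism; it kills $a$ and $c$ in one stroke and reduces $(U,V)$ to order $k-1$, after which the induction is on the \emph{order of the shadow}, not on its top coefficient.

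Your substitute for this step --- ``comparing the polynomial dependence on $u_1,v_1$ using $f_v=g_u=-(v-u)^{-2}$'' followed by a descent in which $(U_1,V_1)$ ``inherits a reduced version of the defining system'' --- is both unexecuted (you flag it yourself as the main obstacle) and, as stated, insufficient in principle. Even if you could show that $(U_1,V_1)$ satisfies the shadow equations of lower order, that would only let you conclude, by the very statement you are proving, that $U_1,V_1$ have order $\le 1$; it would not force $U_1=V_1=0$, since nonzero low-order shadows exist (constants, or the translation shadows such as $(u_1,v_1)$). So the descent as described could terminate with $U=u_1 u_k+\cdots$, which your argument does not exclude. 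To close the proof you need the explicit coefficient extraction at $u_k$ and $v_k$ that produces the terms $-k v_1 a$ and $-k u_1 c$; the cross-relations involving $f_v-u_1$ and $g_u-v_1$ that you focus on only yield $a_{v_{k-1}}=c_{u_{k-1}}=0$ and constraints on $b,d$, not the vanishing of $a$ and $c$.
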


\begin{proof}
  Let us assume that the components~$U$ and~$V$ of the shadow under
  consideration are of order~$k$ and, using~\eqref{eq:48}, differentiate
  Equation~\eqref{eq:46} with respect to~$v_{k+1}$. The result is
  $ -uU_{v_k}=-vU_{v_k}$.  In a similar way,
  applying~$\partial/\partial u_{k+1}$ to~\eqref{eq:47}, we get
  $ -vV_{u_k}=-uV_{u_k}$.  Consequently,
  \begin{equation}
    \label{eq:50}
    U=U(\dots,u_{k-1},v_{k-1},u_k),\qquad V=V(\dots,u_{k-1},v_{k-1},v_k),
  \end{equation}
  where `dots' stand for the variables of order~$\leq k-2$.

  Apply the partial derivatives~$\partial/\partial u_k$ and~$\partial/\partial
  v_k$ to Equations~\eqref{eq:46} and~\eqref{eq:47}:
  \begin{align*}
    \frac{\partial\eqref{eq:46}}{\partial u_k}:\ &
    \frac{\partial\Dii{*}_y}{\partial u_k}(U)+\Dii{*}_y(U_{u_k})=f_uU_{u_k}
    -v\biggl(\frac{\partial\Dii{*}_x}{\partial u_k}(U) +
      \Dii{*}_x(U_{u_k})\biggr) +(f_v-u_1)U_{u_k},\\
    \frac{\partial\eqref{eq:46}}{\partial v_k}:\ &
    \frac{\partial\Dii{*}_y}{\partial v_k}(U)+\Dii{*}_y(U_{v_k})=f_uU_{v_k}
    -v\biggl(\frac{\partial\Dii{*}_x}{\partial v_k}(U) +
      \Dii{*}_x(U_{v_k})\biggr) +(f_v-u_1)U_{v_k},\\
    \frac{\partial\eqref{eq:47}}{\partial u_k}:\ &
    \frac{\partial\Dii{*}_y}{\partial u_k}(V)+\Dii{*}_y(V_{u_k})=g_vV_{u_k}
    -u\biggl(\frac{\partial\Dii{*}_x}{\partial u_k}(V) +
      \Dii{*}_x(V_{u_k})\biggr) +(g_u-v_1)V_{u_k},\\
    \frac{\partial\eqref{eq:47}}{\partial v_k}:\ &
    \frac{\partial\Dii{*}_y}{\partial v_k}(V)+\Dii{*}_y(V_{v_k})=g_vV_{v_k}
    -u\biggl(\frac{\partial\Dii{*}_x}{\partial v_k}(V) +
      \Dii{*}_x(V_{v_k})\biggr) +(g_u-v_1)V_{v_k}
  \end{align*}
  (the partial derivatives above are applied to the coefficients of the
  corresponding operators). Using now~\eqref{eq:50}, we see that the above
  equalities amount to
  \begin{align*}
    \frac{\partial\Dii{*}_y}{\partial u_k}(U)+\Dii{*}_y(U_{u_k})&=
    f_uU_{u_k}-v\biggl(\frac{\partial\Dii{*}_x}{\partial
        u_k}(U)+\Dii{*}_x(U_{u_k})\biggr),\\
    \frac{\partial\Dii{*}_y}{\partial v_k}(V)+\Dii{*}_y(V_{v_k})&=
    g_vV_{v_k}-u\biggl(\frac{\partial\Dii{*}_x}{\partial
        v_k}(V)+\Dii{*}_x(V_{v_k})\biggr) \intertext{and}
    \frac{\partial\Dii{*}_y}{\partial v_k}(U)&=
    -v\frac{\partial\Dii{*}_x}{\partial v_k}(U)+(f_v-u_1)V_{v_k},\\
    \frac{\partial\Dii{*}_y}{\partial u_k}(V)&=
    -u\frac{\partial\Dii{*}_x}{\partial u_k}(V)+(g_u-v_1)U_{u_k}.
  \end{align*}
  Now, by~\eqref{eq:48}, we have
  \begin{align*}
    &\frac{\partial\Dii{*}_x}{\partial u_k}=\frac{\partial}{\partial u_{k-1}},
    &&\frac{\partial\Dii{*}_y}{\partial
      u_k}=(f_u-kv_1)\frac{\partial}{\partial
      u_k}+(g_u-v_1)\frac{\partial}{\partial v_k}-v\frac{\partial}{\partial
      u_{k-1}},\\
    &\frac{\partial\Dii{*}_x}{\partial v_k}=\frac{\partial}{\partial v_{k-1}},
    &&\frac{\partial\Dii{*}_y}{\partial v_k}=(f_u-u_1)\frac{\partial}{\partial
      u_k}+(g_v-ku_1)\frac{\partial}{\partial v_k}-u\frac{\partial}{\partial
      v_{k-1}}.
  \end{align*}
  and, using~\eqref{eq:50} again, we arrive to
  \begin{align}\nonumber
    -kv_1U_{u_k}+\Dii{*}_y(U_{u_k})&=-v\Dii{*}_x(U_{u_k}),\\
    \nonumber
    -ku_1V_{v_k}+\Dii{*}_y(V_{v_k})&=-u\Dii{*}_x(V_{v_k})
    \intertext{and}
    (f_v-u_1)U_{u_k}+(v-u)U_{v_{k-1}}&=(f_v-u_1)V_{v_k},\label{eq:51}\\
    (g_u-v_1)V_{v_k}+(u-v)V_{u_{k-1}}&=(g_u-v_1)U_{u_k}.\label{eq:52}
  \end{align}
  Then, differentiating Equation~\eqref{eq:51} with respect to~$v_k$ and
  Equation~\eqref{eq:52} with respect to~$u_k$, we obtain
  \begin{equation*}
    (f_v-u_1)V_{v_kv_k}=(g_u-v_1)U_{u_ku_k}=0,
  \end{equation*}
  i.e.,
  \begin{equation}
    \label{eq:53}
    U=au_k+b,\qquad V=bv_k+d,
  \end{equation}
  where the functions~$a$, $b$, $c$, and~$d$ are of order~$k-1$. Let us
  substitute the obtained expressions~\eqref{eq:53} to the defining
  system~\eqref{eq:46}--\eqref{eq:47}:
  \begin{align}
    \label{eq:54}
    \Dii{*}_y(a)u_k+a\Dii{*}_y(u_k)+\Dii{*}_y(b)&=f_u(au_k+b)
    -v\left(\Dii{*}_x(a)u_k+au_{k+1}+\Dii{*}(b)\right)\\\nonumber
    &\quad +(f_v-u_1)(cv_k+d) \\\label{eq:55}
    \Dii{*}_y(c)v_k+c\Dii{*}_y(v_k)+\Dii{*}_y(d)&=g_v(cv_k+d)
    -u\left(\Dii{*}_x(c)v_k+cv_{k+1}+\Dii{*}(d)\right) \\\nonumber
    &\quad +(g_u-u_1)(au_k+b).
  \end{align}
  Using the estimates~\eqref{eq:48} and comparing the terms
  containing~$u_{k+1}$, $v_{k+1}$ and~$u_k$, $v_k$, we see that the terms
  with~$u_{k+1}$, $v_{k+1}$ and~$u_k^2$, $v_k^2$ are cancelling, while
  \begin{align}\nonumber
    &\text{in Eq.~\eqref{eq:54} at $u_kv_k$:}
    &&{-}ua_{v_{k-1}}=-va_{v_{k-1}},\\\nonumber 
    &\text{in Eq.~\eqref{eq:55} at $u_kv_k$:}
    &&{-}vc_{u_{k-1}}=-uc_{u_{k-1}},\\\label{eq:56}
    &\text{in Eq.~\eqref{eq:54} at $u_k$:}
    &&{-}kv_1a=0,\\\label{eq:57}
    &\text{in Eq.~\eqref{eq:55} at $v_k$:}
    &&{-}ku_1c=0,\\\nonumber
    &\text{in Eq.~\eqref{eq:54} at $v_k$:}
    &&(f_v-u_1)(a-c)=(u-v)b_{v_{k-1}},\\\nonumber
    &\text{in Eq.~\eqref{eq:55} at $u_k$:}
    &&(g_u-v_1)(c-a)=(v-u)d_{u_{k-1}}.
  \end{align}
  In particular, from Equations~\eqref{eq:56} and~\eqref{eq:57} we see that
  the coefficients~$a$ and~$c$ vanish and thus, by virtue of~\eqref{eq:53},
  the functions~$U$ and~$V$ are of order~$k-1$. We repeat the procedure until
  the order of the shadows at hand becomes equal to~$1$.
\end{proof}

Using Proposition~\ref{sec:uniq-polyn-shad-1}, we shall now prove that the
symmetries~$\Si{i}=\Ev_{\Zi{i}}$, $i=-4,-3,\dots$, exhaust all the polynomial
symmetries in the covering~$\tau_*$.

\begin{theorem}
  Any $\tau_*$-nonlocal symmetry of the Gibbons--Tsarev equation of
  weight~$k$\textup{,} polynomial in all variables\textup{,} coincides
  with~$\Si{k}$ up to a constant factor\textup{,} $k\geq-4$.
\end{theorem}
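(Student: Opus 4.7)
The plan is to split the uniqueness question into uniqueness of the shadow on the base equation $\mathcal E_1$ and uniqueness of its nonlocal lift in $\tau_*$. By Proposition~\ref{sec:uniq-polyn-shad-1}, any $\tau_*$-shadow $(U,V)$ of the symmetry is of order at most~$1$; in particular, $U$ is independent of $v_1$ and $V$ is independent of $u_1$, as already produced by~\eqref{eq:50} in that proof. I then intend to iterate the same kind of asymptotic analysis, this time tracking the top polynomial degree of $U$ in $u_1$ (and of $V$ in $v_1$) in Equations~\eqref{eq:46} and~\eqref{eq:47}, in order to show that $U$ is at most linear in $u_1$ and $V$ at most linear in $v_1$. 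After this, every monomial of $U$ or $V$ contains at most one negative-weight factor (namely $u_1$ or $v_1$, of weight~$-2$), while all remaining variables $x,y,u,v,\psi^{(i)}$ have strictly positive weights, so the fixed total weight $k$ forces the shadow to lie in a \emph{finite-dimensional} space of candidates.

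Within that finite-dimensional space, the defining system~\eqref{eq:46}--\eqref{eq:47} becomes a finite collection of linear algebraic equations in the unknown coefficients; the next step is to verify that its solution space is one-dimensional and is spanned by the shadow of $\Si{k}$, which is a routine weight-graded computation. Having pinned down the shadow, one subtracts the corresponding multiple of $\Si{k}$; what remains is a polynomial invisible $\tau_*$-symmetry of weight $k\ge-4$. Its components $\Psi^{(i)}$ satisfy a triangular system (since $\Xii{i}$ and $\Yii{i}$ depend only on $\psi^{(j)}$ with $j<i$), and a recursive polynomial-and-weight analysis---compatible with the fact that all explicit invisible symmetries~\eqref{S1-n} in our hierarchy have weight $\le-5$---should force this remainder to vanish identically.

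The main obstacle is the bootstrapping step that upgrades the ``order $\le 1$'' conclusion of Proposition~\ref{sec:uniq-polyn-shad-1} into a bound on the polynomial degree in $u_1$ and $v_1$. Without such a bound, polynomials of fixed weight $k$ do not form a finite-dimensional space, because the negative weight of $u_1,v_1$ can always be balanced against arbitrarily large-weight nonlocal variables $\psi^{(i)}$. Once this degree bound is secured, the rest of the argument reduces to finite linear algebra on the shadow plus a recursive elimination on the invisible part.
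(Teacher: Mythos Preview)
Your decomposition into ``shadow uniqueness'' plus ``invisible remainder'' is sound, and your opening use of Proposition~\ref{sec:uniq-polyn-shad-1} matches the paper. However, the core of your argument has a genuine gap that is not merely a matter of filling in details.

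Even granting the degree bound in $u_1,v_1$ (which you flag as an obstacle and do not actually establish), you are left with a \emph{different} finite-dimensional linear system for every weight~$k$, whose size grows without bound as $k\to\infty$. Calling the verification that each such system has a one-dimensional solution space a ``routine weight-graded computation'' is not a proof: it is an infinite family of separate computations with no uniform mechanism tying them together. You would need either a closed-form description of the solution of this linear system valid for all $k$, or an inductive device that reduces weight $k$ to smaller weights. Your proposal contains neither.

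The paper's proof supplies exactly such a device. It proceeds by induction on the weight~$k$, and the key idea you are missing is the use of the generalised Galilean boost $\Si{-1}$ as a \emph{weight-lowering operator}: if $W$ is a polynomial shadow of weight~$k$, then $\{\Wi{-1},W\}$ is a shadow of weight $k-1$, and by the induction hypothesis it must be a multiple of $\Wi{k-1}$. Combined with the observation that $\partial W/\partial\psii{l}$ (for the top nonlocal variable $\psii{l}$ appearing in $W$) is itself a shadow of lower weight, one obtains enough constraints on the leading $\psi$-behaviour of $W$ to force $W=\beta\Wi{k}+\text{(lower)}$, and the ``lower'' part is then dispatched by the same mechanism. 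This inductive descent is what replaces your open-ended linear algebra, and it is the substantive content of the proof.
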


\begin{proof}
  Using Equations~\eqref{eq:58}, we pass from symmetries of the Gibbons--Tsarev
  equation~\eqref{eq:21} to those of System~\eqref{eq:29}. The proof is
  accomplished by induction on the weight.

  For small weights ($\wt{\mathcal{S}}=-4,\dots,0$) this fact can be checked
  by direct computations due to Proposition~\ref{sec:uniq-polyn-shad-1}.

  Let us fix a~$k>0$ and assume that for all weights less than~$k$ the
  statement is true. To proceed with the proof, we need a number of auxiliary
  facts. The first two of them can be observed from the results of
  Section~\ref{sec:algebra-nonl-symm}.

  \begin{fact}
    For a symmetry~$\Si{i}=\Ev_{\Zi{i}}$, one has the following `asymptotics'
    in~$\psi$s:
    \begin{equation*}
      \Zi{1}=-3\psii{4}+\text{ local terms}
    \end{equation*}
    and, for~$i>1$,
    \begin{equation*}
      \Zi{i}=-(i+2)\psii{i+3}+\frac{2i+3}{2}z_x\psii{i+2}+\Upsilon(i+1),\qquad
      i\geq 0,
    \end{equation*}
    where~$\Upsilon(\alpha)$ denotes the terms independent of~$\psii{\beta}$
    for~$\beta>\alpha$. This means, by~\eqref{eq:58}, that the corresponding
    generating section for the system is of the form
    \begin{equation*}
      \Wi{1}=\frac{5}{2}\psii{3}\Wi{-3}+\text{ local terms}
    \end{equation*}
    and, for~$i>1$,
    \begin{align*}
      \Ui{i}&=\frac{2i+3}{2}\psii{i+2}\Ui{-3}+(i+1)\psii{i+1}\Ui{-2}
      +\Upsilon(i),\\ 
      \Vi{i}&=\frac{2i+3}{2}\psii{i+2}\Vi{-3}+(i+1)\psii{i+1}\Vi{-2}
      +\Upsilon(i),
    \end{align*}
    or
    \begin{equation}
      \label{eq:59}
      \Wi{i}=\frac{2i+3}{2}\psii{i+2}\Wi{-3}+(i+1)\psii{i+1}\Wi{-2}
      +\Upsilon(i),
    \end{equation}
    where
    \begin{equation*}
      \Wi{-3}=(u_1,v_1),\qquad
      \Wi{-2}=\left(\frac{1}{v-u}-vu_1,\frac{1}{u-v}-uv_1\right),
    \end{equation*}
    are the generating sections of the infinitesimal $x$- and
    $y$-translations, respectively.
  \end{fact}

  \begin{fact}
    The shadow~$\Wi{-1}=(1-yu_1,1-yv_1)$ of the generalised Galilean boost
    extends to~$\mathcal{E}_*$ as follows
    \begin{multline*}
      \Si{-1}=\sum_{l\geq0}\left(D_x^l(1-yu_1)\frac{\partial}{\partial u_l} +
        D_x^l(1-yv_1)\frac{\partial}{\partial v_l}\right) +
      \left(2x-y\Xii{3}\right)\frac{\partial}{\partial \psii{3}}\\
      +\sum_{j>3}\left((j-1)\psii{j-1}
        -y\Xii{j}\right)\frac{\partial}{\partial\psii{j}}. 
    \end{multline*}
    Since the last expression can be rewritten in the form
    \begin{equation*}
      \Si{-1}=\frac{\partial}{\partial u} + \frac{\partial}{\partial v} +
      y\frac{\partial}{\partial x} - y\Dii{*}_x
      +2x\frac{\partial}{\partial\psii{3}} +
      \sum_{j>3}(j-1)\psii{j-1}\frac{\partial}{\partial\psii{j}},
    \end{equation*}
    while
    \begin{equation*}
      \ell_{\Wi{-1}}=
      \begin{pmatrix}
        - y\Dii{*}_x&0\\
        0&- y\Dii{*}_x
      \end{pmatrix},
    \end{equation*}
    we obtain  
    \begin{equation}
      \label{eq:60}
      \{\Wi{-1},W\}=
      \begin{pmatrix} \displaystyle
        \dfrac{\partial U}{\partial u} + \dfrac{\partial U}{\partial v} +
        y\dfrac{\partial U}{\partial x} +2x\dfrac{\partial U}{\partial\psii{3}}
        +\sum_{j>3}(j-1)\psii{j-1}\dfrac{\partial U}{\partial\psii{j}}
      \\[17pt] \displaystyle
        \dfrac{\partial V}{\partial u} + \dfrac{\partial V}{\partial v}+
        y\dfrac{\partial V}{\partial x} +2x\dfrac{\partial V}{\partial\psii{3}}
        +\sum_{j>3}(j-1)\psii{j-1}\dfrac{\partial V}{\partial\psii{j}}
      \end{pmatrix}
    \end{equation}
    for any~$W=(U,V)$, and, in particular,
    \begin{equation*}
      \{\Wi{-1},\Wi{i}\}=(i+1)\Wi{i-1}+\Upsilon(i-2),\qquad i\geq2.
    \end{equation*}
  \end{fact}

  \begin{fact}
    A straightforward, but important consequence of~\eqref{eq:60} is that the
    adjoint action~$W\mapsto\{\Wi{-1},W\}$ is a derivation, i.e.,
    \begin{equation*}
      \{\Wi{-1},hW\}=h\{\Wi{-1},W\}+X_{\Wi{-1}}(h)W,\qquad
      h\in\mathcal{F}(\mathcal{E}_*),
    \end{equation*}
    where
    \begin{equation*}
      X_{\Wi{-1}}=\frac{\partial}{\partial u} + \frac{\partial}{\partial v} +
        y\frac{\partial}{\partial x} +2x\frac{\partial }{\partial\psii{3}}
        +\sum_{j>3}(j-1)\psii{j-1}\frac{\partial}{\partial\psii{j}}
    \end{equation*}
    is a vector field on~$\mathcal{E}$.
  \end{fact}
  
  \begin{fact}
    Let~$W=(U,V)$ be a solution of System~\eqref{eq:46}-\eqref{eq:47} of
    weight~$k$. Let also~$l$ be the minimal integer such that~$\partial
    U/\partial \psii{j}$ and~$\partial V/\partial \psii{j}$ vanish for all~$j>
    l$. Recall (see Subsection~\ref{sec:first-method}) that the `nonlocal
    tails' of the total derivatives~$\Dii{*}_x$ and~$\Dii{*}_y$
    on~$\mathcal{E}_*$ are of the form
    \begin{equation*}
      X=\sum_{j\geq3}\Xii{j}\frac{\partial}{\partial\psii{j}},\qquad
      Y=\sum_{j\geq3}\Yii{j}\frac{\partial}{\partial\psii{j}},
    \end{equation*}
    where
    \begin{equation*}
      \Xii{j}=-(j-3)\psii{j-3}+\Upsilon(j-4),\qquad
      \Yii{j}=-(j-2)\psii{j-2}+\Upsilon(j-4).
    \end{equation*}
    This implies that if~$W=(U,V)$ and~$l$ is chosen as above, then
    \begin{equation*}
      \frac{\partial W}{\partial\psii{l}}=\left(\frac{\partial
          U}{\partial\psii{l}},\frac{\partial
          V}{\partial\psii{l}}\right),\qquad
      \frac{\partial W}{\partial\psii{l-1}}=\left(\frac{\partial
          U}{\partial\psii{l-1}},\frac{\partial
          V}{\partial\psii{l-1}}\right)
    \end{equation*}
    are shadows as well (of weights~$k-l-1$ and~$k-l$, respectively).
  \end{fact}

  Let us now return to the main course of the proof. Since~$k-l<k$, then due
  to the induction hypothesis we have
  \begin{equation}
    \label{eq:61}
    \frac{\partial W}{\partial\psii{l}}=\alpha\Wi{k-l-1}=
    \alpha\left(\frac{2k-2l+1}{2}\psii{k-l+1}\Wi{-3} +
      (k-l)\psii{k-l}\Wi{-2}\right) + \Upsilon(k+l-1),
  \end{equation}
  where~$\alpha\in\mathbb{R}$ is a nonvanishing constant. Note also that due
  to the definition of~$l$ one has~$l\geq k-l+1$, or
  \begin{equation}
    \label{eq:62}
    2l\geq k+1.
  \end{equation}
  We now consider two cases: Inequality~\eqref{eq:62} is either strict or 
  an equality.

  \textbf{The case~$2l>k+1$.} In this case, Equation~\eqref{eq:61} implies
  \begin{equation}
    \label{eq:63}
    W=\alpha\Wi{k-l-1}\psii{l}+\Upsilon(l-1).
  \end{equation}
  Let us apply the operator~$\{\Wi{-1},\cdot\}$ to both sides of~\eqref{eq:63}:
  \begin{multline*}
    \{\Wi{-1},W\}=\alpha\{\Wi{-1},\Wi{k-l-1}\}\psii{l}+
    \alpha\Wi{k-l-1}X_{\Wi{-1}}(\psii{l})+\Upsilon(l-2)\\
    =\alpha\{\Wi{-1},\Wi{k-l-1}\}\psii{l}+
    \alpha(l-1)\Wi{k-l-1}\psii{l-1}+\Upsilon(l-2).
  \end{multline*}
  But~$\wt{\{\Wi{-1},W\}}=k-1$ and, by the induction hypothesis, we must
  have~$\{\Wi{-1},W\}=\beta\Wi{k-1}+\Upsilon(k-2)$. Consequently,
  using~\eqref{eq:59} and~\eqref{eq:60}, we obtain
  \begin{equation*}
    \beta\left(\frac{2k+1}{2}\Wi{-3}\psii{k+1}+\Upsilon(k)\right)=
    \alpha(k-l)\Wi{k-l-2}\psii{l} +\alpha(l-2)\Wi{k-l-1}\psii{l-1}
    +\Upsilon(l-2).
  \end{equation*}
 The last equality can hold only when
 \begin{equation*}
   l=k+1,\qquad l=-\frac{2k+1}{2}\beta.
 \end{equation*}
 Consider the shadow~$\tilde{W}=W-\beta\Wi{k}$. There are two possibilities:
 (a)~$\tilde{W}=0$ and then the proof is finished; (b)~$\tilde{W}\neq0$ and
 then there should exist the minimal integer~$\tilde{l}< l$ such
 that~$\partial\tilde{W}/\partial\psii{\tilde{l}}\neq0$. The only possibility
 is~$\tilde{l}=(k+1)/2$ and thus we pass to the second case.

 \textbf{The case~$2l=k+1$.} Now Equation~\eqref{eq:61} reads
 \begin{equation*}
   \frac{\partial W}{\partial\psii{l}} = \alpha\frac{2l-1}{2}\psii{l}\Wi{-3} +
   \Upsilon(l-1),
 \end{equation*}
 or
 \begin{equation*}
   W=\alpha\frac{2l-1}{4}\left(\psii{l}\right)^2\Wi{-3} + \text{ terms
     linear in } \psii{l} + \Upsilon(l-1).
 \end{equation*}
 Let us apply the operator~$\{\Wi{-1},\cdot\}$ to the last equation. Then in
 the left-hand side we obtain a shadow of weight~$k-1=2l-2$ which, by the
 induction hypothesis and Equation~\eqref{eq:59}, must be proportional
 to~$\psii{2l}\Wi{-3}+\Upsilon(2l-1)$. But in the right-hand side such a term
 cannot appear. This contradiction finishes the proof.
\end{proof}

\subsection{Uniqueness of invisible symmetries}
\label{sec:uniq-invis-symm}

Consider a symmetry
\begin{equation*}
  \mathcal{S}=
  \sum_{i\geq0}\left(\left(\Dii{*}_x\right)^i(U)\frac{\partial}{\partial 
      u_i} + \left(\Dii{*}_x\right)^i(V)\frac{\partial}{\partial v_i}\right) +
  \sum_{\alpha\geq 3}\Psii{\alpha }\frac{\partial}{\partial\psii{\alpha }}
\end{equation*}
of~$\mathcal{E}_*$. Let us say that~$S$ is \emph{invisible of depth~$k$} if
$ U=V=\Psii{3}=\dots=\Psii{k-1}=0$, i.e.,
\begin{equation*}
  \mathcal{S}=\sum_{\alpha\geq k}\Psii{\alpha
  }\frac{\partial}{\partial\psii{\alpha }}. 
\end{equation*}
The defining equations for such symmetries are
$ [\mathcal{S},\Dii{*}_x]=[\mathcal{S},\Dii{*}_y]=0$, or
\begin{align}\label{eq:64}
  \Dii{*}_x(\Psii{\alpha})=
  \sum_{\beta=k}^{\alpha-3}
  \Psii{\beta}\frac{\partial\Xii{\alpha}}{\partial\psii{\beta}}, 
  \qquad
  \Dii{*}_y(\Psii{\alpha})=
  \sum_{\beta=k}^{\alpha-2}
  \Psii{\beta}\frac{\partial\Yii{\alpha}}{\partial\psii{\beta}} 
\end{align}
for all~$\alpha\geq k$, where, as before,
\begin{equation*}
  \Dii{*}_x=D_x+ \sum_{\alpha\geq
    3}\Xii{\alpha}\frac{\partial}{\partial\psii{\alpha}},\quad 
  \Dii{*}_y=D_y+\sum_{\alpha\geq
    3}\Yii{\alpha}\frac{\partial}{\partial\psii{\alpha}} 
\end{equation*}
and~$\Xii{\alpha}$, $\Yii{\alpha}$ are the right-hand sides of~\eqref{eq:69}
and~\eqref{eq:70}, respectively.

\begin{theorem}
  Any nontrivial invisible symmetry of depth~$k$ is of the form
  \begin{equation*}
    \frac{\partial}{\partial\psii{k}} +
    \gamma\frac{\partial}{\partial\psii{k+1}} + \sum_{\alpha\geq
      k+2}\Psii{\alpha}\frac{\partial}{\partial\psii{\alpha}}, 
  \end{equation*}
  where~$\gamma=\const$.
\end{theorem}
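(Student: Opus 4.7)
The approach is to read off the defining equations~\eqref{eq:64} for the smallest values of $\alpha$, where the right-hand sums collapse to empty sums, and then invoke the irreducibility of the covering $\tau_*$ established in Proposition~\ref{sec:first-method-2}.

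First I would substitute $\alpha = k$ into~\eqref{eq:64}. The $x$-equation has summation range $\beta \in \{k, \dots, k-3\}$ and the $y$-equation has range $\beta \in \{k, \dots, k-2\}$; both ranges are empty. Hence $\Dii{*}_x(\Psii{k}) = \Dii{*}_y(\Psii{k}) = 0$, and Proposition~\ref{sec:first-method-2} forces $\Psii{k}$ to be a constant $c$. For $\mathcal{S}$ to be of depth exactly $k$ (as opposed to depth $\ge k+1$) we need $c \neq 0$; after rescaling $\mathcal{S}$ by $1/c$ the leading coefficient becomes $\Psii{k} = 1$.

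The same argument for $\alpha = k+1$ uses ranges $\{k,\dots,k-2\}$ and $\{k,\dots,k-1\}$, which are still empty. Therefore $\Psii{k+1}$ is likewise a constant, which I denote by $\gamma$; the previous normalization of $\Psii{k}$ does not constrain $\gamma$, so it remains a free parameter. For $\alpha \ge k+2$ the right-hand sides of~\eqref{eq:64} involve only the previously determined coefficients $\Psii{\beta}$ with $\beta < \alpha$, so the remaining coefficients are produced recursively from the assumption that $\mathcal{S}$ is a genuine symmetry, and they sit inside the unspecified tail $\sum_{\alpha \ge k+2} \Psii{\alpha}\,\partial/\partial\psii{\alpha}$ in the claimed expression.

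The plan contains no genuinely hard computational step; the only subtlety is the normalization, where one must notice that the depth hypothesis forces $\Psii{k}$ to be a \emph{nonzero} constant (otherwise the depth would automatically exceed $k$), which legitimizes dividing by $c$ and yields the precise form stated in the theorem. I do not expect any obstruction from the higher-order recursion, since existence of the full symmetry $\mathcal{S}$ is part of the hypothesis, and the theorem only constrains the two leading coefficients.
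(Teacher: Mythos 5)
Your proposal is correct and follows essentially the same route as the paper: for $\alpha=k$ and $\alpha=k+1$ the sums on the right-hand sides of~\eqref{eq:64} are empty, so $\Dii{*}_x(\Psii{k})=\Dii{*}_y(\Psii{k})=\Dii{*}_x(\Psii{k+1})=\Dii{*}_y(\Psii{k+1})=0$, and the differential connectedness of $\mathcal{E}_*$ (Proposition~\ref{sec:first-method-2}) forces both coefficients to be constants. Your additional remark on the normalization (nontriviality of the exact depth~$k$ forces $\Psii{k}\neq 0$, permitting rescaling to $1$) is a point the paper leaves implicit, but it is correct.
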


\begin{proof}
  Indeed, the right-hand sides of Equations~\eqref{eq:64} vanish
  for~$\alpha=k$ and~$\alpha=k+1$, i.e.,
  \begin{equation*}
    \Dii{*}_x(\Psii{k})=\Dii{*}_x(\Psii{k+1})=0,\quad
    \Dii{*}_y(\Psii{k})=\Dii{*}_y(\Psii{k+1})=0.
  \end{equation*}
  But, by Proposition~\ref{sec:first-method-2}, the equation~$\mathcal{E}_*$
  is differentially connected and thus~$\Psii{k}$ and~$\Psii{k+1}$ are
  constants.
\end{proof}

\begin{remark}
  Actually, one can say more about the structure of the
  coefficients~$\Psii{\alpha}$ (see Equation~\eqref{S1-n}), but for our cause
  the above said is sufficient.
\end{remark}



\section*{Acknowledgments}

This research was undertaken within the framework of the 
OPVK programme, project CZ.1.07/\allowbreak2.300/\allowbreak20.0002.
The work of I.S. Krasil{\cprime}shchik was partially supported by the RFBR
Grant 18-29-10013 and the Simons-IUM fellowship.
The work of P. Blaschke and M. Marvan was partially supported by GA\v{C}R under
project P201/12/G028.

The symbolic computations were performed with the aid of the \texttt{Jets} 
software~\cite{Jets}. 
The authors are grateful to Jenya Ferapontov who draw our attention to the
problem. We also want to express our gratitude to Maxim Pavlov, Vladimir
Sokolov, and Sergey Tsarev for fruitful discussions.

\end{document}